\newcommand{\R}{\mathbb{R}}
\newcommand{\F}{\mathbb{F}}
\newcommand{\Span}{\mathrm{span}}
\newcommand{\zero}{\{0\}}
\newcommand{\X}{\mathcal{X}}
\newcommand{\V}{\mathcal{V}}
\newcommand{\U}{\mathcal{U}}
\newcommand{\EBasis}{E_{abcd}} 
\newcommand{\SpEBasis}{\Span\{\EBasis\}^\perp}
\newcommand{\SymTwoU}{S^2(\mathcal{U})}
\newcommand{\Sym}{\operatorname{Sym}}
\newcommand{\MEntries}{M_{abcd,ij}}
\newcommand{\MPEntries}{M''_{abcd,ij}}
\newcommand{\Z}{\mathbb{Z}}
\newtheorem{theorem}{Theorem}[section]
\newtheorem{theorem-star}[theorem]{Theorem*}
\newtheorem{remark}[theorem]{Remark}
\newtheorem{lemma}[theorem]{Lemma}
\newtheorem{definition}[theorem]{Definition}
\newtheorem{proposition}[theorem]{Proposition}
\newtheorem{corollary}[theorem]{Corollary}
\newtheorem{conjecture}[theorem]{Conjecture}
\newtheorem*{theorem*}{Theorem}
\definecolor{myblue}{RGB}{100, 143, 255}
\definecolor{mypurple}{RGB}{120, 94, 240}
\definecolor{mymagenta}{RGB}{220, 38, 127}
\definecolor{myorange}{RGB}{254, 97, 0}
\definecolor{myyellow}{RGB}{255, 176, 0}
\newcolumntype{C}{>{\collectcell\colorvalue}c<{\endcollectcell}}
\title{Improving the Threshold for Finding Rank-1 Matrices in a Subspace}
\author[1]{Jeshu Dastidar}
\author[1]{Tait Weicht}
\author[1]{Alexander S.\ Wein}
\affil[1]{Department of Mathematics, University of California, Davis}
\date{}
\begin{document}

\maketitle

\begin{abstract}
We consider a basic computational task of finding $s$ planted rank-1 $m \times n$ matrices in a linear subspace $\U \subseteq \R^{m \times n}$ where $\dim(\U) = R \ge s$. The work of Johnston--Lovitz--Vijayaraghavan (FOCS 2023) gave a polynomial-time algorithm for this task and proved that it succeeds when ${R \le (1-o(1))mn/4}$, under minimal genericity assumptions on the input. Aiming to precisely characterize the performance of this algorithm, we improve the bound to ${R \le (1-o(1))mn/2}$ and also prove that the algorithm fails when ${R \ge (1+o(1))mn/\sqrt{2}}$. Numerical experiments indicate that the true breaking point is $R = (1+o(1))mn/\sqrt{2}$. Our work implies new algorithmic results for tensor decomposition, for instance, decomposing order-4 tensors with twice as many components as before.
\end{abstract}

\thispagestyle{empty}

\clearpage

\tableofcontents

\thispagestyle{empty}

\clearpage

\setcounter{page}{3}
\pagestyle{plain}

\section{Introduction} \label{Section:Introduction}

We consider the basic computational question of finding rank-1 matrices in a given linear subspace $\U \subseteq \R^{m \times n}$. We assume $\U$ is generated as the linear span of $s$ ``planted'' rank-1 matrices, along with $R-s$ other matrices, where $R = \dim(\U)$ and $0 \le s \le R$. As input, we receive an arbitrary basis for $\U$, and the goal is to output the $s$ planted matrices, up to permutation and scalar multiple (in the sense of Proposition~\ref{prop:extract-v}). Aside from being a fundamental question in its own right, the $s=R$ case is a key subroutine in various tensor decomposition algorithms, as we will discuss in Section~\ref{subsec:Motive&App}. In the $s=0$ case there are no planted matrices to recover, so here the goal is to algorithmically generate a \emph{certificate} that no rank-1 matrices exist in $\U$.

For a successful solution to this problem, we desire an algorithm that works for a wide range of possible inputs. To circumvent NP-hardness~\cite{Buss}, we follow~\cite{JLV} and assume that the matrices generating $\U$ are chosen \emph{generically}, either from the set of rank-1 matrices or the set of all matrices, as appropriate. The formal meaning of ``generic'' will be stated in Section~\ref{Section:Setting+MainResults}, but essentially this means we ask for an algorithm that works for ``almost all'' inputs, in the sense that the set of inputs where the algorithm fails has measure zero. We desire an algorithm that succeeds, in this sense, for the largest possible range of parameters $m,n,R,s$. Furthermore, we desire an algorithm with fast runtime.

The work of~\cite{JLV} gives a polynomial-time algorithm for the above problem, under the condition ${R \le \frac{1}{4}(m-1)(n-1)}$, and in fact for a more general setting where rank-1 matrices can be replaced by some other algebraic variety. We refer to this as the \emph{JLV algorithm}, and it will be a key focus of our work. The algorithm is based on the first level of the so-called \emph{Nullstellensatz hierarchy}. Higher levels of the hierarchy give rise to algorithms that are potentially more powerful at the expense of increased computational cost. Higher levels have been analyzed only in the fully non-planted case $s=0$, where they are shown to work even for nearly full-dimensional subspaces $R \le (1-\epsilon) mn$ with runtime $(mn)^{f(\epsilon)}$~\cite{DJL24} (see Remark~6.10 therein).

Without constraints on runtime, the \emph{identifiability} threshold is ${R = (m-1)(n-1)}$. By this we mean that for $0 \le s \le R \le (m-1)(n-1)$, the only rank-1 matrices in the span of $s$ generic rank-1 matrices along with $R-s$ generic matrices, will be the $s$ planted matrices (and their scalar multiples)~\cite[Theorem~4.6.14]{joins-book}. Furthermore, the planted matrices can be recovered by a sufficiently high level of the Nullstellensatz hierarchy, but the runtime will not be polynomial in general~\cite[Remark~4.6.16]{joins-book}. Conversely, if $R > (m-1)(n-1)$ then the subspace will contain ``spurious'' rank-1 matrices in addition to the planted ones (see~\cite{harris-book}, Definition~18.1(iii)). In fact, \emph{any} subspace of dimension $R \ge (m-1)(n-1)+2$ must contain infinitely many rank-1 matrices (see Definition~11.2 in~\cite{harris-book} or Theorem~3 in~\cite{entangled}).

\subsection{Our Contributions}
\label{subsec:Contribution}
In this work, our objective is to investigate the precise limits of the JLV algorithm, that is, the first Nullstellensatz level. Our main question is: for large $m,n$, what is the largest constant $c \in [1/4,1]$ such that the JLV algorithm succeeds on generic inputs with $R \le (c-o(1)) mn$? (We will see that there appears to be no dependence on $s$, to first order.) Even though higher levels of the hierarchy might tolerate larger $R$, the computational complexity quickly becomes impractical: the lowest level is $d=2$ and in general, any constant level $d \ge 2$ needs to compute with subspaces of dimension $\Theta((mn)^d)$. For this reason, we feel it is valuable to understand the exact performance of the first level, especially given the large gap between existing bounds: $1/4 \le c \le 1$.
Towards this, we have two main results which improve the bounds to $1/2 \le c \le 1/\sqrt{2}$. (In the special case $s=0$, the bound $c \ge 1-1/\sqrt{2} \approx 0.293$ was known previously~\cite[Theorem~6.9]{DJL24}.)

\begin{theorem*}[Informal, see Corollary~\ref{Cor:JLVSuccess}]
For any $0 \le s \le R$, if $R\le\frac{1}{2}(m-1)(n-2)$, the JLV algorithm succeeds on generic inputs.
\end{theorem*}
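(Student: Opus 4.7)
My plan is to translate the JLV success criterion into a rank condition on an explicit linear map, use Zariski openness to reduce to verifying the rank on a single instance, and then construct a structured input that realizes the bound.

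Following the framework of the JLV paper, the first-level Nullstellensatz algorithm's output is controlled by the ``lift space''
\[
L(\U)\;=\;\Sym^2(\U)\,\cap\, V^\perp\;\subseteq\;\Sym^2(\R^{mn}),
\]
where $V$ is the span of the $2\times 2$ minor quadrics $\EBasis$ (the degree-$2$ piece of the ideal of the Segre variety). Under the $(GL_m\times GL_n)$-decomposition $\Sym^2(\R^m\otimes\R^n) = (\Sym^2\R^m\otimes\Sym^2\R^n)\oplus(\Lambda^2\R^m\otimes\Lambda^2\R^n)$, the minors span exactly the second summand (they are antisymmetric in both row and column index pairs), so $V^\perp$ is the first. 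Consequently the algorithm succeeds iff the orthogonal projection $\pi\colon\Sym^2(\U)\to\Lambda^2\R^m\otimes\Lambda^2\R^n$ has kernel of dimension exactly $s$. Each planted rank-$1$ matrix $u_k=v_kw_k^\top$ produces a planted lift $u_k\otimes u_k=(v_k\otimes v_k)\otimes(w_k\otimes w_k)\in V^\perp$, and these $s$ lifts are generically linearly independent, so the task reduces to showing $\mathrm{rank}(\pi)\ge\binom{R+1}{2}-s$.

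The rank-at-least-$r$ locus is Zariski open, and the parameter space of inputs --- pairs $(v_k,w_k)$ for $k\le s$ together with $R-s$ auxiliary matrices --- is irreducible, so it suffices to exhibit a single input $\U_0$ at which $\pi$ attains the required rank. In coordinates, $\pi$ is represented by the explicit matrix $\MEntries$ whose rows are indexed by the minor labels $(a,b,c,d)$ and whose columns are indexed by the $\binom{R+1}{2}$ symmetric basis pairs $(i,j)$.

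The crux is constructing $\U_0$ so that $\mathrm{rank}(\MEntries)\ge\binom{R+1}{2}-s$ can be certified by direct inspection whenever $R\le\tfrac{1}{2}(m-1)(n-2)$. A heuristic generic-intersection count for $\Sym^2(\U)\cap V^\perp$ would suggest that the correct kernel dimension persists all the way up to $R\sim mn/\sqrt{2}$, matching the failure bound; the gap is the price of needing an explicit verifiable construction, since $\Sym^2(\U)$ is not a generic $\binom{R+1}{2}$-dimensional subspace of $\Sym^2(\R^{mn})$ but rather the symmetric square of an $R$-dimensional subspace, and this structural constraint must be leveraged. I would arrange $\U_0$ so that $\MEntries$ has a block-triangular shape according to the type of pair $(i,j)$ --- planted--planted (diagonal giving the expected $s$-dimensional kernel), planted--planted off-diagonal, planted--auxiliary, and auxiliary--auxiliary --- and certify each off-kernel block as full-rank using the rank-$1$ structure of the $u_k$. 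The factor $(m-1)$ in the threshold plausibly encodes the effective dimension of $v\wedge\R^m$ contributed by a rank-$1$ element, $(n-2)$ the analogous column-side count after symmetrization, and the $\tfrac{1}{2}$ the symmetry of $\Sym^2(\U)$; making the block bookkeeping rigorous, perhaps via a specialization or an induction that peels off one rank-$1$ piece at a time and reduces the problem to smaller $m$ or $n$, is the core technical work I would expect.
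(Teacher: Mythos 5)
Your setup is sound and matches the paper's: the $(GL_m\times GL_n)$-decomposition $\Sym^2(\R^m\otimes\R^n)=(\Sym^2\R^m\otimes\Sym^2\R^n)\oplus(\Lambda^2\R^m\otimes\Lambda^2\R^n)$ correctly identifies $\SpEBasis$, the success criterion as a full-column-rank condition on the matrix with entries $\langle v^i\vee v^j,\EBasis\rangle$ is exactly the paper's Lemma~\ref{Lem:GoalLemma}, and the reduction via Zariski openness to exhibiting a single witness is the right move. But the proposal stops precisely where the proof begins. The entire content of the theorem is the explicit witness construction, and you defer it as ``the core technical work I would expect.'' Nothing in the proposal certifies the rank bound, so there is a genuine gap: you have an accurate restatement of the problem, not a proof.

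Concretely, the missing construction is the following. Set $k=\lfloor (n-1)/2\rfloor$ and fix an injection $[R]\to[m-1]\times[k]$, $i\mapsto(f(i),g(i))$ (this is exactly where $R\le\frac12(m-1)(n-2)$ enters: $(m-1)$ rows are available after reserving row $m$ as special, and only $k\approx(n-2)/2$ column slots are available after reserving column $k+1$ as special and pairing each column $b\le k$ with a duplicate $b+k+1$ --- so the factor $\frac12$ comes from column duplication, not from the symmetry of $\Sym^2(\U)$ as you guess). One then specializes the generators so that $v^\ell$ is supported on rows $\{f(\ell),m\}$ and columns $\{g(\ell),k+1,g(\ell)+k+1\}$, partitions the column pairs $(i,j)$ into bins according to whether $f(i)=f(j)$ and whether $g(i)=g(j)$, and selects one minor-row per column in the first three bins and two minor-rows per matched pair of columns in the fourth. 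A support argument (the paper's ``Zero Lemma'') shows the resulting square submatrix is block-upper-triangular with $1\times1$ and $2\times2$ diagonal blocks whose determinants are visibly nonzero polynomials. Note that this combinatorial organization is agnostic to whether a generator is planted or not; your proposed block decomposition by type (planted--planted, planted--auxiliary, auxiliary--auxiliary) does not by itself produce any triangular structure, since what forces the off-diagonal vanishing is the disjointness of supports induced by the injection, not the rank-$1$ structure of the planted factors. Without supplying this construction and verifying the diagonal blocks, the threshold $\frac12(m-1)(n-2)$ is not established.
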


The asymmetry between $m$ and $n$ arises for technical reasons in the proof. The condition $R \le \frac{1}{2}(m-2)(n-1)$ is equally valid.

\begin{theorem*}[Informal, see Theorem~\ref{Thm:JLVFail}]
For any $0 \le s \le R$, if $R>\frac{1}{\sqrt{2}}mn+1$, the JLV algorithm fails on generic inputs.
\end{theorem*}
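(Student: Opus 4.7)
The plan is to obtain a deterministic dimension obstruction: the subspace on which the first Nullstellensatz level operates is forced to be strictly larger than the span of the planted symmetric squares in this regime, so the algorithm cannot correctly output the $s$ rank-1 matrices for \emph{any} input, let alone a generic one.

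The key object is
$$\W \;:=\; \SymTwoU \cap \TwoOrthoIdeal \;\subseteq\; S^2(\R^{m\times n}),$$
and for every planted rank-1 matrix $x_i \in \U$ the vector $\xTwoOtimes_i$ lies in $\W$. The JLV algorithm is successful precisely when $\W = \Span\{\xTwoOtimes_i : i \in [s]\}$, which has dimension $s$; in particular, any input with $\dim \W > s$ is lost. I would lower-bound $\dim \W$ via the injection $\SymTwoU/\W \hookrightarrow S^2(\R^{m\times n})/\TwoOrthoIdeal \cong I_2$, which gives
$$\dim \W \;\geq\; \dim \SymTwoU - \dim I_2 \;=\; \binom{R+1}{2} - \binom{m}{2}\binom{n}{2},$$
using the standard decomposition $S^2(\R^m \otimes \R^n) = \bigl(S^2(\R^m) \otimes S^2(\R^n)\bigr) \oplus \bigl(\Lambda^2(\R^m) \otimes \Lambda^2(\R^n)\bigr)$, whose first summand is $\TwoOrthoIdeal = \Span\{\xTwoOtimes : x \in \X\}$ and second is $I_2$, so $\dim I_2 = \binom{m}{2}\binom{n}{2}$.

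It then suffices to show that the hypothesis $R > \tfrac{1}{\sqrt{2}} mn + 1$ pushes this lower bound above $R \geq s$, i.e.\ $R(R-1) > \tfrac{m(m-1)n(n-1)}{2}$. Since $R > \tfrac{mn}{\sqrt{2}}$ and $R - 1 > \tfrac{mn}{\sqrt{2}}$, their product exceeds $\tfrac{(mn)^2}{2} \geq \tfrac{m(m-1)n(n-1)}{2}$ (using $(m-1)(n-1) \le mn$), giving $\dim \W > s$ and hence failure. The main obstacle is purely notational: verifying under the paper's conventions that $\TwoOrthoIdeal$ and $I_2$ really are the two summands above, and that the JLV algorithm's success is exactly characterized by $\dim \W = s$ as laid out in Section~\ref{Section:Setting+MainResults}. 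No genericity enters the argument---the dimension bound holds for every input---so the ``generic'' qualifier is automatic.
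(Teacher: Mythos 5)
Your proposal is correct and follows essentially the same route as the paper: both arguments are the dimension count $\dim\bigl(S^2(\U)\cap\SpEBasis\bigr)\ \ge\ \binom{R+1}{2}-\binom{m}{2}\binom{n}{2}$ (you phrase it via the injection of $S^2(\U)/\W$ into the quotient by $\SpEBasis$, the paper via $\dim(A)+\dim(B)-\dim(A\cap B)\le\dim S^2(\R^{mn})$, which is the same inequality), followed by the same $s\le R$ arithmetic. The only extra content in your write-up is the explicit justification of $\dim\Span\{\EBasis\}=\binom{m}{2}\binom{n}{2}$ via the decomposition $S^2(\R^m\otimes\R^n)=\bigl(S^2(\R^m)\otimes S^2(\R^n)\bigr)\oplus\bigl(\Lambda^2(\R^m)\otimes\Lambda^2(\R^n)\bigr)$, which the paper uses implicitly.
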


The precise meaning of ``fail'' here is that a particular key step in the analysis breaks down. We also give numerical experiments (Section~\ref{Section:Conj&Evidence}) suggesting that the true value is $c = 1/\sqrt{2}$, matching our second result. We highlight the open problem of proving this, which we state as Conjecture~\ref{Conjecture:SharpBoundary}. The improvement from $c = 1/4=0.25$ to, potentially, $c = 1/\sqrt{2} \approx 0.71$ represents a much wider range of values for $\dim(\U)$, as a fraction of the ambient dimension $mn$.

The negative result (Theorem~\ref{Thm:JLVFail}) follows from a straightforward dimension-counting argument, and our main contribution is the positive result (Corollary~\ref{Cor:JLVSuccess}). Here, our proof is rather different from the original analysis of JLV~\cite{JLV} in that we take a more concrete combinatorial approach. In contrast, JLV gives a more abstract argument that also works for other algebraic varieties.

\subsection{Motivation and Applications}
\label{subsec:Motive&App}

Aside from being a fundamental computational question in linear algebra, the problem of finding planted rank-1 matrices in a subspace is also important from a machine learning point of view, as it is a key subroutine in various tensor decomposition algorithms~\cite{JLV,KMW24OvercompleteTDKoszulYoung}. Tensor decomposition has many ML applications such as unsupervised learning, the method of moments, temporal data, multi-relational data, latent variables, phylogenetic trees, topic models, etc. See \cite{Kolda2009, rabanser2017introductiontensordecompositionsapplications, Moitra_2018, aravindan-chapter} for further discussion.

To illustrate our contributions to tensor decomposition, consider the (CP) decomposition problem for an order-3 tensor $T \in \R^{n_1 \times n_2 \times n_3}$: given a tensor with a rank-$R$ decomposition $T = \sum_{\ell=1}^R a^{\ell} \otimes b^{\ell} \otimes c^{\ell}$ where the components $a^{\ell},b^{\ell},c^{\ell}$ are generically chosen vectors from $\R^{n_1},\R^{n_2},\R^{n_3}$ respectively, and $(a \otimes b \otimes c)_{ijk} := a_i b_j c_k$, the goal is to recover the list of components up to inherent ambiguities (permutation of $[R]$ and scalar multiplication). The order-4 tensor decomposition problem is analogous but for $T \in \R^{n_1 \times n_2 \times n_3 \times n_4}$.

Building on~\cite{JLV}, our improved algorithmic result for finding rank-1 matrices implies improved results for poly-time tensor decomposition. Specifically, we can decompose generic order-4 $n \times n \times n \times n$ tensors up to rank $R \le \frac{1}{2} (n-2)^2$, allowing roughly twice as many components as the previous result $R \le \frac{1}{4} (n-1)^2$~\cite{JLV}. For order-3 $n_1 \times n_2 \times n_3$ tensors that are unbalanced in the sense $n_3 \ge n_1 n_2 / 2$, we similarly obtain a factor-of-2 improvement from $R \le \frac{1}{4}(n_1-1)(n_2-1)$ to $R \le \frac{1}{2}(n_1-1)(n_2-2)$. In Section~\ref{sec:Applications}, we present further details of these results and generalizations thereof.

\subsection{Related Work}

The problem we study was first defined by JLV~\cite{JLV}, but earlier precursors considered various special cases and variations. For instance, the case $s=R$ is closely tied with the celebrated \emph{FOOBI} algorithm for order-4 tensor decomposition~\cite{foobi-link,foobi}. In Section~III of \cite{foobi} it is claimed that this algorithm decomposes generic $n \times n \times n \times n$ tensors of rank up to $R \le \frac{1}{2} n(n+1)$. However, Appendix~A of \cite{JLV} points out an error in the proof and provides a correct proof under the weaker condition $R \le \frac{1}{4}(n-1)^2$. Our improvement to $R \le \frac{1}{2}(n-2)^2$ in Section~\ref{sec:Applications} recovers the original claim, to first order. Some extensions of FOOBI include~\cite{MSS,HSS-spectral}, but these do not improve the rank condition.

Another precursor to JLV is \cite{CP_Decomp2018}, which considers only the special case $s=1$ but generalizes it in a few directions: the linear subspace can be replaced by an affine space, and the set of rank-1 matrices can be replaced by rank-$r$ matrices or tensors.

JLV was first to consider the full range $0 \le s \le R$, and their result also allows the set of rank-1 matrices to be replaced by an arbitrary (conic) algebraic variety.

In the special case $s=0$, some extensions to JLV have appeared. First, \cite{bhaskara2024newtoolssmoothedanalysis} gives an algorithm in the case of \emph{smoothed} (rather than generic) inputs, which amounts to giving a detailed analysis of noise-robustness. Also, \cite{DJL24} analyzes higher levels of the Nullstellensatz hierarchy, showing that they succeed for neary full-dimensional subspaces. Generalizing these results to $s > 0$ remains an interesting open problem.

Our improvements to order-3 tensor decomposition apply to unbalanced tensors, e.g.\ of shape roughly $n \times n \times n^2$. For ``square'' $n \times n \times n$ tensors, the classical \emph{simultaneous diagonalization} method can decompose generic tensors of rank up to $R \le n$~\cite{LRA-3-way}. Recent advances \cite{persu-thesis,koiran} have culminated in an algorithm approaching rank $2n$ \cite{KMW24OvercompleteTDKoszulYoung}. This method uses JLV as a subroutine, but the factor $1/4$ is not the bottleneck in the analysis, so our result does not imply an improvement here.

Other algebraic methods for tensor decomposition were proposed prior to JLV, including~\cite{cpd-3rd,normal-form} and references therein. To our knowledge, these have not been proven to succeed on generic inputs.

\section*{Acknowledgments}
\addcontentsline{toc}{section}{Acknowledgments}

We are thankful to Ben Lovitz and Aravindan Vijayaraghavan for helpful discussions. In particular, Ben Lovitz pointed us to many relevant references, e.g., concerning the identifiability question.

{\bf Funding:} All authors were supported by NSF CAREER Award CCF-2338091 (to A.~Wein). ASW was additionally supported by a Sloan Research Fellowship.

\section{Setting and Main Results}
\label{Section:Setting+MainResults}

Let us now formalize the basic computational question of interest. Throughout this paper, $\V=\R^{m\times n}$ is a vector space over the field $\R$ and $\U\subseteq\V$ is an $R$-dimensional subspace. We always assume $m,n \ge 2$. A \emph{rank-1} (technically, rank-at-most-1) matrix $v \in \R^{m \times n}$ is one that can be expressed as $v = xy^\top$ for vectors $x \in \R^m, y \in \R^n$. Equivalently, a matrix has rank 1 if and only if all its $2 \times 2$ minors vanish, so the set of rank-1 matrices can be expressed as the algebraic variety
$$\X=\{v\in\R^{m\times n}:f(v)=0 \text{ for all }f\in P\}\subseteq\V,$$
where
\begin{align*} \label{PolynSet}
P=\{v_{ab}v_{cd} - v_{ad}v_{cb} : 1\leq a<c\leq m, 1\leq b<d\leq n\} \end{align*}
with $P\subseteq\R[v_{ij}:1\leq i \leq m, 1\leq j\leq n]$. We assume $\U = \Span\{v^1,\ldots v^s,v^{s+1},\ldots,v^R\}$, where $v^1,\ldots,v^s\in\X$ are the \textit{planted} matrices and $v^{s+1},\ldots,v^R\in\V$ are the \textit{non-planted} matrices. The planted ones ($i \le s$) can be expressed as $v^i=x^i(y^i)^\top$ for vectors $x^i \in \R^m, y^i \in \R^n$. We view $x,y$ as symbolic variables, i.e., there is a variable for each entry of each vector $x^i$ or $y^i$, so $s(m+n)$ variables in total. For the non-planted ones ($i > s$), we write $v^i = z^i$ for symbolic variables $z$, which is $(R-s)mn$ variables in total. We will assume the underlying variables $x,y,z$ are chosen \emph{generically}, in the following sense.

\begin{definition}
\label{def:genericity}
For symbolic variables $w=(w_1,\ldots,w_k)$, a property holds ``generically'' or ``for generically chosen $w$'' if there exists a nonidentically-zero polynomial $p(w_1,\ldots,w_k) \in \R[w]$ such that the property holds true for all $w\in\R^k$ such that $p(w)\neq0$.
\end{definition}

This is a standard notion from algebraic geometry and it is well known that a generic property holds for all but a measure-zero set of points $w$. In our case, $w$ is the concatenation of all variables $x,y,z$, and we refer to
\[ \U = \Span\{x^1(y^1)^\top,\ldots,x^s(y^s)^\top,z^{s+1},\ldots,z^R\} \subseteq \R^{m\times n} \]
as a ``generically chosen planted subspace with parameters $0 \le s \le R$.'' Generically, such a subspace has dimension exactly $R$, as we see below.

\begin{lemma}
\label{lem:lin-indep}
If $\text{ }\U = \Span\{v^1,\ldots,v^R\} \subseteq\R^{m\times n}$ is a generically chosen planted subspace with parameters $0 \le s \le R$ where $R \le mn$, then the elements $v^1,\ldots,v^R$ are linearly independent.
\end{lemma}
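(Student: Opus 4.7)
The plan is to reduce linear independence to the non-vanishing of a single polynomial in the symbolic variables $(x,y,z)$ and then exhibit an explicit witness at which that polynomial is nonzero.

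First, I would form the $mn\times R$ matrix $V$ whose $i$-th column is the vectorization of $v^i$, and note that each entry of $V$ is either a product $x^i_a y^i_b$ (for $i\le s$) or one of the symbolic variables $z^i_{ab}$ (for $i>s$). Thus every $R\times R$ minor of $V$ is a polynomial in the underlying variables $w=(x,y,z)$. Since $R\le mn$, the vectors $v^1,\ldots,v^R$ are linearly independent if and only if at least one $R\times R$ minor of $V$ is nonzero. By Definition~\ref{def:genericity}, it therefore suffices to exhibit one fixed choice of rows $S\subseteq[mn]$ of size $R$ such that the corresponding minor polynomial $p_S(w)=\det(V_S)$ is not identically zero on $\R^k$.

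To produce such a witness, I would pick $R$ distinct index pairs $(a_i,b_i)\in[m]\times[n]$ for $i=1,\ldots,R$, which is possible because $R\le mn$. Then I would set, for $i\le s$, $x^i=e_{a_i}\in\R^m$ and $y^i=e_{b_i}\in\R^n$, so that the planted matrix $v^i=x^i(y^i)^\top=e_{a_i}e_{b_i}^\top$ is a standard basis element of $\R^{m\times n}$ and is (as required) rank one. For $i>s$, I would set $z^i=e_{a_i}e_{b_i}^\top$ directly. With this assignment the $R$ matrices $v^1,\ldots,v^R$ are distinct elements of the standard basis of $\R^{m\times n}$, hence linearly independent; equivalently, the $R\times R$ submatrix $V_S$ indexed by $S=\{(a_i,b_i):i\in[R]\}$ is a permutation matrix, so $p_S(w)=\pm1\ne0$ at this point.

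Since $p_S$ is a nonzero polynomial in $w$, the complementary set $\{p_S(w)\neq0\}$ is, by Definition~\ref{def:genericity}, a generic condition, and on this set $v^1,\ldots,v^R$ are linearly independent. This also shows $\dim(\U)=R$ generically. The argument is essentially routine; the only mild subtlety is recognizing that one need not analyze all minors simultaneously, and that the planted constraint $v^i\in\X$ is compatible with the witness because each rank-one standard basis matrix $e_{a_i}e_{b_i}^\top$ factors as $x^i(y^i)^\top$.
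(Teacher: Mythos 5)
Your proposal is correct and follows essentially the same route as the paper's proof: both reduce linear independence to the non-vanishing of a single $R\times R$ minor of the $mn\times R$ matrix of vectorized $v^i$'s and then exhibit the same witness, assigning standard basis vectors $e_{a_i}, e_{b_i}$ (resp.\ $e_{a_i}e_{b_i}^\top$) so that the chosen submatrix is a permutation (in the paper, identity) matrix. No gaps; the argument is complete.
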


This fact appears in \cite{JLV} and holds more generally for an irreducible algebraic variety in place of the rank-1 matrices. We give a self-contained proof for completeness.

\begin{proof}
Let $A$ be the $mn\times R$ matrix formed with columns $v^1,\ldots,v^R$. By assumption, the underlying variables $x,y,z$ are chosen generically. So it suffices to show that there is a square submatrix of size $R \times R$ in $A$ with determinant a non-zero polynomial in $x,y,z$. Towards this, it is enough to demonstrate a choice of vectors $x^{\ell},y^{\ell},z^{\ell}$ such that $\{x^\ell(y^\ell)^\top = x^{\ell}\otimes y^{\ell} : 1 \leq \ell \leq s\} \cup \{z^{\ell} : s < \ell \leq R\}$ is linearly independent. Fix an injection $\psi: [R] \rightarrow [m] \times [n]$. For each $\ell \in [R]$, define the following rule: if $\ell \mapsto (i,j)$, then set $x^\ell = e^i$, $y^\ell = e^j$, and $z^\ell = e^{ij}$, where $e^i$ is the $i$-th standard basis vector and $e^{ij}$ is the vectorization of the matrix with 1 in $(i,j)$ entry and 0 elsewhere. Now the submatrix of $A$ with rows indexed by the image of $\psi$ becomes the identity matrix, and hence has non-zero determinant.
\end{proof}

Before stating our main results, we need to unpack some details of the JLV algorithm. Toward this, for vectors $u,u' \in \V = \R^{m \times n}$, the associated \textit{symmetric 2-tensor} is \[u\vee u':=\frac{1}{2}u\otimes u' + \frac{1}{2}u'\otimes u.\]
For a subspace $U \subseteq \V$, we denote the space of such tensors by $S^2(U)=\Span\{u\vee u':u,u'\in U\}$. If $U$ has basis $\{u^1,\ldots,u^k\}$, then $S^2(U)$ has basis $\{u^i \vee u^j : 1 \le i \le j \le k\}$ and thus has dimension $\binom{k+1}{2}$.

Given a homogeneous degree-2 polynomial $f$, there exists a unique symmetric tensor $T^f \in S^2(\V)$ such that $f(v) =\langle v^{\otimes 2}, T^f \rangle$. Here we use the usual inner product $\langle T,S \rangle = \sum_{ij} T_{ij} S_{ij}$. We now write these tensors $T^f$ explicitly for the $2 \times 2$ minors $f \in P$. For $1\leq a<c\leq m$ and $1\leq b<d\leq n$, define $\EBasis \in S^2(\V)$ by
\[E_{abcd}=e_{ab}\vee e_{cd}-e_{ad}\vee e_{cb},\] where $e_{\alpha\beta}$ is the matrix with $1$ in the $(\alpha,\beta)$ entry and $0$ elsewhere. This is designed so that $E_{abcd} = T^f$ for the $2 \times 2$ minor $f(v) = v_{ab}v_{cd} - v_{ad}v_{cd}$, or in other words, $\langle v^{\otimes 2}, E_{abcd} \rangle = v_{ab} v_{cd}-v_{ad} v_{cb}$. To verify this, recall the identity $\langle a\otimes b,c\otimes d \rangle=\langle a,c \rangle\langle b,d \rangle$, so
\begin{align*}
\langle v^{\otimes 2}, E_{abcd} \rangle &= \langle v\otimes v, e_{ab}\vee e_{cd}-e_{ad}\vee e_{cb} \rangle \\
&= \langle v\otimes v, e_{ab}\vee e_{cd} \rangle - \langle v\otimes v, e_{ad}\vee e_{cb} \rangle \\
&= \langle v, e_{ab} \rangle\langle v, e_{cd} \rangle - \langle v, e_{ad} \rangle\langle v, e_{cb} \rangle \\
&= v_{ab} v_{cd}-v_{ad} v_{cb}.
\end{align*} 
Notice that $\langle v^{\otimes 2}, E_{abcd} \rangle=0$ if $v\in\X$.

A crucial step in the analysis of JLV is to establish
\[S^2(\U)\ \cap\ \SpEBasis = \Span\{(v^1)^{\otimes 2},\ldots,(v^s)^{\otimes 2}\},
\label{eq:goal} \tag{Goal}\]
where $(a,b,c,d)$ ranges over $1 \le a < c \le m$ and $1 \le b < d \le n$. Observe that we have the containment ``$\supseteq$" for free since $(v^i)^{\otimes 2}=v^i\vee v^i \in S^2(\U)$ and for $i \le s$, $\langle (v^i)^{\otimes 2},E_{abcd}\rangle=0$ since $v^1,\ldots,v^s\in\X$, so $(v^i)^{\otimes 2}\in\SpEBasis$. Establishing the reverse inclusion is the main challenge.
The reason \eqref{eq:goal} is helpful is because the left-hand side is a subspace that the algorithm can compute, which gives us access to the right-hand side. Given the right-hand side, JLV show how to extract the desired solutions $v^1,\ldots,v^s$, up to permutation and scalar multiple. This is captured by the following fact.

\begin{proposition}[\cite{JLV}] \label{prop:extract-v}
Given $\Span\{(v^1)^{\otimes 2},\ldots,(v^s)^{\otimes 2}\}$ with $v^1,\ldots,v^s$ linearly independent, there is a poly-time algorithm to output $\hat{v}^1,\ldots,\hat{v}^s$ for which there exists a permutation $\pi$ of $[s]$ and non-zero scalars $\alpha_1,\ldots,\alpha_s$ with $\hat{v}^i = \alpha_i v^{\pi(i)}$ for all $i \in [s]$.
\end{proposition}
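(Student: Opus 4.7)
The plan is to reduce this to a simultaneous-diagonalization task and then apply standard linear algebra, in the style of Jennrich's algorithm. Identify $\V = \R^{m\times n}$ with $\R^{mn}$ by flattening, so that each $v^i$ is a vector and $(v^i)^{\otimes 2} \in S^2(\V)$ corresponds to the rank-$1$ symmetric $mn\times mn$ matrix $v^i (v^i)^\top$. Let $V = [v^1\,|\,\cdots\,|\,v^s]$, which has full column rank $s$ by hypothesis. Every element of the input subspace $W := \Span\{(v^i)^{\otimes 2}\}$ then has the form $V\operatorname{diag}(c)V^\top$ for some $c\in\R^s$, and conversely each such expression lies in $W$. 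In particular, because $V$ has full column rank, the generators $(v^i)^{\otimes 2}$ are linearly independent, so $\dim(W)=s$.

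First I would compute $\U' := \Span\{v^1,\ldots,v^s\}$ from the input. Pick a generic element $M\in W$: for any coefficient vector $c$ with no zero entries, $M = V\operatorname{diag}(c)V^\top$ has rank $s$ with column span $\U'$, so set $\U' := \operatorname{range}(M)$. Next choose any basis for $\U'$ and let $B$ be the $mn\times s$ matrix whose columns are that basis. Then $V = B A$ for a unique invertible $s\times s$ matrix $A$. Pulling each element $N\in W$ back to an $s\times s$ symmetric matrix via $B$ (using, e.g., $X = (B^\top B)^{-1} B^\top N B (B^\top B)^{-1}$), we obtain an $s$-dimensional subspace $W' \subseteq \Sym(\R^s)$ whose elements are exactly the matrices $A\operatorname{diag}(c)A^\top$ for $c\in\R^s$.

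Now perform the Jennrich-style step on $W'$. Pick two generic elements $P_1 = A\operatorname{diag}(c)A^\top$ and $P_2 = A\operatorname{diag}(d)A^\top$. Generically $d$ has all nonzero entries, so $P_2$ is invertible, and
\[ P_1 P_2^{-1} \;=\; A\operatorname{diag}(c_i/d_i) A^{-1}. \]
For generic $c,d$ the ratios $c_i/d_i$ are pairwise distinct, so this is an eigendecomposition with simple spectrum; its eigenvectors, computable in polynomial time by standard linear algebra, are the columns of $A$ up to permutation and nonzero scaling. Applying $B$ to these eigenvectors yields $\hat v^i = \alpha_i v^{\pi(i)}$ as required.

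The main subtlety is verifying genericity at three points: that the chosen $c$ has no zero entries (so $\operatorname{range}(M) = \U'$), that $d$ has no zero entries (so $P_2$ is invertible), and that the ratios $\{c_i/d_i\}$ are pairwise distinct (so the eigenvectors are uniquely determined up to scale). Each condition is the complement of a proper algebraic subvariety of the coefficient space, so a uniformly random choice succeeds with probability one, and the procedure can be made deterministic by standard polynomial-identity-testing techniques. Runtime is dominated by computing the range of an $mn\times mn$ matrix and diagonalizing an $s\times s$ matrix, both polynomial in the input size.
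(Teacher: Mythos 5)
Your proposal is correct and is essentially the same argument as the one the paper relies on: the paper does not prove this proposition itself but cites Step~2 of Algorithm~1 and Corollary~16 of JLV, which is precisely the Jennrich-style simultaneous-diagonalization procedure you describe (restrict to the common column span, pass to $s\times s$ matrices $A\operatorname{diag}(c)A^\top$, and diagonalize $P_1P_2^{-1}$ with generically simple spectrum). The only cosmetic difference is that your algorithm is randomized via the generic choice of $P_1,P_2$, which matches how the JLV subroutine is actually stated and used.
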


Given a basis $\{P_1,\ldots,P_s\}$ for $\Span\{(v^1)^{\otimes 2},\ldots,(v^s)^{\otimes 2}\}$, the procedure for extracting $v^1,\ldots,v^s$ is Step~2 of Algorithm~1 in \cite{JLV}. The procedure involves decomposing a particular tensor using simultaneous diagonalization, and its correctness is justified in the proof of Corollary~16 in \cite{JLV}. See also Lemma~3.4 of \cite{KMW24OvercompleteTDKoszulYoung}.

In the case $s=0$, if \eqref{eq:goal} holds, i.e., if the left-hand side is $\{0\}$, this constitutes a certificate that $\U$ contains no rank-1 matrices. This is captured by the following fact.

\begin{proposition}[\cite{JLV}, Observation 11]
\label{prop:Obv11}
If $\SymTwoU \cap \SpEBasis = \zero$ for $1\leq a<c \leq m$ and $1\leq b<d \leq n$, then $\U\cap\X = \zero$.
\end{proposition}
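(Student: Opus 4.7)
The plan is to argue by contrapositive: assume $\U \cap \X$ contains a nonzero element and show that $S^2(\U) \cap \SpEBasis$ must also contain a nonzero element, namely the self-tensor of that element.

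More concretely, suppose $v \in \U \cap \X$ with $v \neq 0$. First I would observe that $v^{\otimes 2} = v \vee v$ lies in $S^2(\U)$ by definition of $S^2(\U)$, since $v \in \U$. Next, I would use the identity $\langle v^{\otimes 2}, \EBasis \rangle = v_{ab}v_{cd} - v_{ad}v_{cb}$ verified in the excerpt. Because $v \in \X$, all its $2 \times 2$ minors vanish, so this inner product is zero for every valid index tuple $1 \le a < c \le m$, $1 \le b < d \le n$. Hence $v^{\otimes 2} \in \SpEBasis$, and therefore $v^{\otimes 2} \in S^2(\U) \cap \SpEBasis$. Under the hypothesis that this intersection equals $\zero$, we conclude $v^{\otimes 2} = 0$, which forces $v = 0$, a contradiction.

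There is essentially no obstacle here: all ingredients (the inner-product identity for $\EBasis$, the characterization of $\X$ by vanishing of $2\times 2$ minors, and the definition of $S^2(\U)$) were already spelled out in the paragraphs preceding the proposition. The only minor thing to note is the implication $v^{\otimes 2} = 0 \Rightarrow v = 0$, which follows because the $(\alpha\beta,\alpha\beta)$ diagonal entry of $v \otimes v$ equals $v_{\alpha\beta}^2$, so $v^{\otimes 2} = 0$ implies every entry of $v$ vanishes.
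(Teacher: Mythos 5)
Your proposal is correct and follows exactly the argument the paper gives: a non-zero $v \in \U \cap \X$ yields $v^{\otimes 2}$ lying in both $S^2(\U)$ and $\SpEBasis$, contradicting the hypothesis. Your extra remark that $v^{\otimes 2}=0$ forces $v=0$ is a fine (if routine) detail the paper leaves implicit.
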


The proof is immediate from our discussion above: if there were a non-zero element $v \in \U \cap \X$ then $v^{\otimes 2}$ would belong to both $S^2(\U)$ and $\SpEBasis$.

We have now essentially described the JLV algorithm: given a basis $\{u^1,\ldots,u^R\}$ for $\U$, it first computes the left-hand side of \eqref{eq:goal} --- making use of the basis $\{u^i \vee u^j : 1 \le i \le j \le R\}$ for $S^2(\U)$ --- and then applies either Proposition~\ref{prop:extract-v} ($s > 0$) or Proposition~\ref{prop:Obv11} ($s=0$) as appropriate. Our main results give conditions under which \eqref{eq:goal} holds or fails to hold.

\begin{theorem} \label{Thm:GoalHolds}
If $\text{ }\U = \Span\{v^1,\ldots,v^R\} \subseteq\R^{m\times n}$ is a generically chosen planted subspace with parameters $0 \le s \le R$ where $R\leq\frac{1}{2}(m-1)(n-2)$, then \eqref{eq:goal} holds.
\end{theorem}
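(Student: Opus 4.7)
\textbf{Proof plan for Theorem~\ref{Thm:GoalHolds}.}

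The containment ``$\supseteq$'' is given for free, so the goal is to prove ``$\subseteq$''. I would recast this as a rank question and then reduce to exhibiting a single well-chosen specialization via Zariski semicontinuity. Concretely, consider the linear map
\[
\Phi \colon S^2(\U) \longrightarrow \R^{\binom{m}{2}\binom{n}{2}}, \qquad T \;\longmapsto\; \bigl(\langle T, \EBasis\rangle\bigr)_{a<c,\ b<d},
\]
whose kernel is exactly $S^2(\U)\cap\SpEBasis$. Generically, the $s$ tensors $(v^1)^{\otimes 2},\ldots,(v^s)^{\otimes 2}$ are linearly independent (they span an $s$-dimensional subspace of $\ker\Phi$), so proving \eqref{eq:goal} amounts to showing $\dim\ker\Phi \le s$, equivalently $\operatorname{rank}\Phi \ge \binom{R+1}{2}-s$.

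Next, I would use that the rank of $\Phi$ is lower-semicontinuous in the defining parameters $x^i,y^i,z^i$: the locus where $\operatorname{rank}\Phi \ge \binom{R+1}{2}-s$ is Zariski-open. Consequently, it suffices to produce a single choice of parameters achieving this rank while simultaneously satisfying the generic conditions of Lemma~\ref{lem:lin-indep} and the linear independence of the $(v^i)^{\otimes 2}$. Since the non-planted $v^i$ may be specialized to rank-1 matrices without losing generality (rank-1 matrices are a nonempty subset of $\R^{m\times n}$ and genericity is preserved), I would take \emph{all} $v^i = x^i(y^i)^\top$ for well-chosen vectors $x^i,y^i$.

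The main combinatorial step is to construct the vectors $x^i,y^i$ so that in the basis $\{v^i \vee v^j : i\le j\}$ of $S^2(\U)$, the matrix representing $\Phi$ admits a ``triangular-matching'' certificate of linear independence on the $\binom{R+1}{2}-s$ columns indexed by $(i,j)\ne(k,k)$, $k\le s$. For rank-1 $v^i = x^i(y^i)^\top$, the entry is
\[
\langle v^i \vee v^j, \EBasis\rangle \;=\; \tfrac{1}{2}\bigl(x^i_a x^j_c + x^j_a x^i_c\bigr)\bigl(y^i_b y^j_d - y^i_d y^j_b\bigr),
\]
which one should carefully exploit. I would use sparse choices (e.g.\ each $x^i,y^i$ supported on a small controlled subset of coordinates, chosen via a specific combinatorial design) to force, for each non-planted-diagonal column $(i,j)$, the existence of a distinguished row $(a,b,c,d)$ at which that column is nonzero but all lex-earlier columns vanish. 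Linear independence then follows by a triangular/greedy argument.

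The hard part is step (3): designing the supports of $x^i,y^i$ so that the matching works for all $R$ up to $\tfrac{1}{2}(m-1)(n-2)$, and simultaneously avoiding ``rectangular'' collisions like $e_{11}\vee e_{22} + e_{12}\vee e_{21} \in \SpEBasis$ which obstruct naïve standard-basis constructions. I expect the asymmetric bound $\tfrac{1}{2}(m-1)(n-2)$ to emerge as the precise capacity of such a design: one factor $(m-1)$ comes from the available range for the ``$a<c$'' index pair, the factor $(n-2)$ from a pair of ``$b<d$'' columns one must reserve to separate otherwise-degenerate configurations, and the factor $\tfrac{1}{2}$ from symmetry between the two ways $(i,j)$ can be mapped into a $2\times 2$ minor. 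Once the matching is produced, verifying triangularity and pulling back via semicontinuity completes the proof.
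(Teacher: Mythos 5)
Your overall architecture (reduce \eqref{eq:goal} to a full-column-rank statement for the matrix of pairings $\langle v^i\vee v^j,\EBasis\rangle$, then exhibit one symbolic/sparse witness and invoke semicontinuity of rank) is exactly the paper's strategy, and your intuition for where the factors $(m-1)$, $(n-2)$ and $\tfrac12$ come from is close to the actual construction. However, there is a genuine error in your reduction that breaks the argument for all $s<R$: you claim the non-planted $v^i$ ``may be specialized to rank-1 matrices without losing generality.'' They may not. The columns of your matrix indexed by the diagonal pairs $(i,i)$ with $s<i\le R$ have entries $\langle (v^i)^{\otimes 2},\EBasis\rangle = v^i_{ab}v^i_{cd}-v^i_{ad}v^i_{cb}$, i.e.\ the $2\times2$ minors of $v^i$; if $v^i$ is rank-1 these columns vanish identically, so $\ker\Phi$ contains $(v^i)^{\otimes 2}$ for \emph{all} $i\in[R]$ and $\operatorname{rank}\Phi\le\binom{R+1}{2}-R<\binom{R+1}{2}-s$. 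Semicontinuity only lets you certify the generic rank from a witness where that rank is \emph{achieved}; the rank-1 locus is a proper Zariski-closed subset on which the rank provably drops, so no witness there can work. The paper's construction instead keeps each non-planted matrix supported on a $2\times3$ block (rows $\{f(i),m\}$, columns $\{g(i),k+1,g(i)+k+1\}$), and the surviving diagonal entry for the column $(i,i)$ is precisely the nonvanishing minor $z^i_{f(i),g(i)}z^i_{m,k+1}-z^i_{f(i),k+1}z^i_{m,g(i)}$.

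Two further points. First, your displayed entry formula for rank-1 inputs has a sign error: the correct factorization is
\[
\langle v^i\vee v^j,\EBasis\rangle=\tfrac12\bigl(x^i_ax^j_c-x^i_cx^j_a\bigr)\bigl(y^i_by^j_d-y^i_dy^j_b\bigr),
\]
with a minus (not a plus) in the first factor; this is exactly what makes the diagonal columns vanish under your specialization, and the erroneous plus sign likely hid the problem from you. Second, the combinatorial heart of the argument --- the design of supports and the triangular certificate --- is left as a plan rather than carried out, so even after repairing the specialization the proposal is an outline, not a proof. The paper executes it by binning the column indices $(i,j)$ according to whether $f(i)=f(j)$ and $g(i)=g(j)$, assigning one row per column (two rows per matched pair in the hardest bin, using the ``duplicate'' columns $g(\cdot)+k+1$), and verifying block-upper-triangularity via a Zero Lemma.
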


Combining this with Lemma~\ref{lem:lin-indep} and Proposition~\ref{prop:extract-v}, this immediately yields our positive result for the JLV algorithm.

\begin{corollary} \label{Cor:JLVSuccess}
If $\text{ }\U = \Span\{v^1,\ldots,v^R\} \subseteq\R^{m\times n}$ is a generically chosen planted subspace with parameters $0 \le s \le R$ where $R\leq\frac{1}{2}(m-1)(n-2)$, then the JLV algorithm succeeds in recovering $v^1,\ldots,v^s$, up to permutation and scalar multiple.
\end{corollary}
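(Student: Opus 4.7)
The plan is to simply chain together the three ingredients sitting immediately above the statement: Theorem~\ref{Thm:GoalHolds}, Lemma~\ref{lem:lin-indep}, and Proposition~\ref{prop:extract-v}, with Proposition~\ref{prop:Obv11} handling the corner case $s=0$. All of the hard combinatorial work lives inside Theorem~\ref{Thm:GoalHolds}; the corollary is bookkeeping that confirms the JLV procedure, when executed on a subspace whose algebraic behavior is controlled by that theorem, actually produces the right output.

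First I would describe what the algorithm computes from an arbitrary input basis $\{u^1,\ldots,u^R\}$ of $\U$. By Lemma~\ref{lem:lin-indep}, such a basis generically exists of size $R$, so $\{u^i \vee u^j : 1 \le i \le j \le R\}$ is a basis of $S^2(\U)$ of size $\binom{R+1}{2}$. The algorithm then intersects $S^2(\U)$ with $\SpEBasis$ inside the ambient space $S^2(\V)$, which has dimension $\binom{mn+1}{2}$; this is a single nullspace computation in a linear system of polynomial size and so runs in time $\mathrm{poly}(mn)$.

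Second, I would invoke Theorem~\ref{Thm:GoalHolds} under the hypothesis $R \le \tfrac{1}{2}(m-1)(n-2)$ to identify the intersection just computed as
\[ S^2(\U) \cap \SpEBasis \;=\; \Span\{(v^1)^{\otimes 2},\ldots,(v^s)^{\otimes 2}\}. \]
If $s = 0$, the right-hand side is $\zero$, and Proposition~\ref{prop:Obv11} certifies $\U \cap \X = \zero$, which is the correct output. If $s \ge 1$, Lemma~\ref{lem:lin-indep} again guarantees that $v^1,\ldots,v^s$ are linearly independent generically, so Proposition~\ref{prop:extract-v} applies to any basis of the computed span and returns, in polynomial time, vectors $\hat v^1,\ldots,\hat v^s$ matching $v^1,\ldots,v^s$ up to permutation and nonzero scalar, as required.

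There is essentially no obstacle here: the only things to verify beyond quoting the three ingredients are that the intersection step is polynomial time (routine linear algebra in the ambient $\binom{mn+1}{2}$-dimensional space) and that the various genericity hypotheses are compatible. The latter is handled by noting that the relevant nonvanishing polynomials from Lemma~\ref{lem:lin-indep}, Theorem~\ref{Thm:GoalHolds}, and Proposition~\ref{prop:extract-v} are finitely many, so their product is again a single nonzero polynomial in the underlying variables $x,y,z$ and all three generic conditions may be imposed simultaneously.
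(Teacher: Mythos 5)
Your proposal is correct and follows exactly the paper's route: the paper derives this corollary in one line by combining Theorem~\ref{Thm:GoalHolds} with Lemma~\ref{lem:lin-indep} and Proposition~\ref{prop:extract-v}, which is precisely your chain of reasoning (your additional remarks on the $s=0$ case via Proposition~\ref{prop:Obv11} and on multiplying the finitely many genericity polynomials are sound elaborations the paper leaves implicit).
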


We also have the following negative result for \eqref{eq:goal}.

\begin{theorem} \label{Thm:JLVFail}
If $\U = \Span\{v^1,\ldots,v^R\} \subseteq\R^{m\times n}$ is a generically chosen planted subspace with parameters $0 \le s \le R$ where $\binom{m}{2}\binom{n}{2}<\binom{R+1}{2} - s$, then \eqref{eq:goal} fails to hold.
\end{theorem}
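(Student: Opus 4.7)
The plan is to prove Theorem~\ref{Thm:JLVFail} by a direct dimension count, since this is advertised in the introduction as a straightforward counting argument. The Goal~\eqref{eq:goal} asserts an equality between two subspaces of $S^2(\V)$, and the containment ``$\supseteq$'' is free; so to exhibit failure it suffices to show that $\dim\bigl(S^2(\U)\cap \SpEBasis\bigr) > s$, since the right-hand side $\Span\{(v^1)^{\otimes 2},\ldots,(v^s)^{\otimes 2}\}$ has dimension at most $s$.

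I would first tabulate the three relevant dimensions inside the ambient space $S^2(\V)$:
\begin{enumerate}
\item $\dim S^2(\V)=\binom{mn+1}{2}$, by the basis $\{e_{ij}\vee e_{kl}\}$.
\item $\dim S^2(\U)=\binom{R+1}{2}$ for a generic planted subspace. By Lemma~\ref{lem:lin-indep}, the generating matrices $v^1,\ldots,v^R$ are generically linearly independent, and the paragraph preceding Proposition~\ref{prop:extract-v} then tells us that $\{v^i\vee v^j:1\le i\le j\le R\}$ is a basis of $S^2(\U)$.
\item $\dim\Span\{\EBasis\}=\binom{m}{2}\binom{n}{2}$, by a direct argument that the $\EBasis$ are linearly independent in $S^2(\V)$ (equivalently, the $2\times 2$ minors of a generic matrix, which are the polynomials dual to $\EBasis$, are linearly independent in $\R[v_{ij}]_2$; each minor has a monomial $v_{ab}v_{cd}$ that does not appear in any other). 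Consequently $\dim\SpEBasis=\binom{mn+1}{2}-\binom{m}{2}\binom{n}{2}$ inside $S^2(\V)$.
\end{enumerate}

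Next I would apply the standard dimension inequality in $S^2(\V)$:
\[
\dim\bigl(S^2(\U)\cap \SpEBasis\bigr)\ \ge\ \dim S^2(\U)+\dim\SpEBasis-\dim S^2(\V)\ =\ \binom{R+1}{2}-\binom{m}{2}\binom{n}{2}.
\]
The hypothesis $\binom{m}{2}\binom{n}{2}<\binom{R+1}{2}-s$ rearranges to $\binom{R+1}{2}-\binom{m}{2}\binom{n}{2}>s$, so the intersection has dimension strictly larger than $s$, while $\Span\{(v^1)^{\otimes 2},\ldots,(v^s)^{\otimes 2}\}$ has dimension at most $s$. Hence the containment in \eqref{eq:goal} must be strict and the Goal fails.

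Essentially no step is a real obstacle, so there is no single hard part; the one place to be a little careful is step~(2), namely invoking Lemma~\ref{lem:lin-indep} together with the ``basis of $S^2$'' fact to get $\dim S^2(\U)=\binom{R+1}{2}$ for the generic planted $\U$, and step~(3), confirming the $\EBasis$ are linearly independent so that $\SpEBasis$ has the asserted codimension. With those two bookkeeping items in hand, the Grassmann inequality finishes the argument immediately.
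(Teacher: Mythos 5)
Your proposal is correct and is essentially the paper's own argument: the paper likewise applies the Grassmann dimension inequality in $S^2(\V)$ to obtain $\dim\bigl(S^2(\U)\cap \SpEBasis\bigr)\ge\binom{R+1}{2}-\binom{m}{2}\binom{n}{2}$ and compares this with the dimension $s$ of $\Span\{(v^1)^{\otimes 2},\ldots,(v^s)^{\otimes 2}\}$ (the paper phrases it contrapositively, assuming \eqref{eq:goal} and deducing the reverse inequality). Your extra bookkeeping on the linear independence of the $\EBasis$ and on $\dim S^2(\U)=\binom{R+1}{2}$ only makes explicit what the paper uses implicitly.
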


Since $s \le R$, the condition $\binom{m}{2}\binom{n}{2}<\binom{R+1}{2} - s$ is implied by
\[ \frac{m^2}{2} \cdot \frac{n^2}{2} < \frac{(R-1)^2}{2}, \]
which is implied by the simpler condition $R>\frac{1}{\sqrt{2}}mn+1$ claimed in Section~\ref{subsec:Contribution}.

When \eqref{eq:goal} fails, we consider this ``failure'' of the JLV algorithm, in the sense that a key step in the analysis breaks down. Numerical experiments in Section~\ref{subsec:NumericalExperiments} reveal that, unsurprisingly, JLV indeed fails to output the planted rank-1 matrices in these cases. This is essentially because it attempts to run a tensor decomposition algorithm on a tensor that has no reason to have low enough rank.

\subsection*{Organization}
\addcontentsline{toc}{subsection}{Organization}

We leave the reader with a road map. The negative result (Theorem~\ref{Thm:JLVFail}) follows from a straightforward dimension-counting argument, and the proof is deferred to Section~\ref{subsec:NegResult}. Most of the work is for the positive result (Theorem~\ref{Thm:GoalHolds}), and we outline the ideas in Sections~\ref{Section:Prelims&Perspective}--\ref{sec:MatrixAndProperties} with the culminating proof in Section~\ref{subsec:PosResult}. Numerical results indicating that the negative result is tight are in Section~\ref{Section:Conj&Evidence}. Finally, Section~\ref{sec:Applications} details the implications for tensor decomposition.

\section{Preliminaries and Perspective}
\label{Section:Prelims&Perspective}

Towards proving our \eqref{eq:goal}, we reformulate it via linear algebra and set up the necessary ingredients in the proof of Theorem~\ref{Thm:GoalHolds}.

\subsection{A Linear Algebraic View}
\label{subsec:LinAlgApproach}

For the containment ``$\subseteq$" in \eqref{eq:goal}, we first define a matrix $M$ where the columns are indexed by
\[(i,j)\in\Omega:=\{(i,j) : 1\leq i<j\leq R \text{ or } s < i = j\leq R\}\]
and the rows are indexed by tuples $(a,b,c,d)$ with $1\leq a<c\leq m$, $1\leq b<d\leq n$. We define the entries of $M$ by
\[ \MEntries := \langle v^i\vee v^j,E_{abcd}\rangle. \]
We will explicitly compute these entries soon. We now have the following lemma that gives a concrete linear algebraic approach to prove \eqref{eq:goal}.

\begin{lemma} \label{Lem:GoalLemma}
If $M$ has full column rank, then \eqref{eq:goal} holds.
\end{lemma}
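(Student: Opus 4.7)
The plan is to unpack the set-theoretic inclusion in \eqref{eq:goal} into a linear system whose coefficient matrix is exactly $M$, after which the column-rank hypothesis finishes the argument. Concretely, I would start by taking an arbitrary $T \in S^2(\U)\cap\SpEBasis$. Since $\{v^i\vee v^j : 1\le i\le j\le R\}$ spans $S^2(\U)$, the element $T$ admits a decomposition
\[T = \sum_{1\le i\le j\le R}\alpha_{ij}\,v^i\vee v^j\]
for some scalars $\alpha_{ij}$ (not necessarily unique, but the argument below will be insensitive to this choice). The index set $\{(i,j):1\le i\le j\le R\}$ is the disjoint union of $\Omega$ and $\{(i,i):1\le i\le s\}$, so I split the sum accordingly.

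Next, I would use that $T\in\SpEBasis$, i.e.\ $\langle T,\EBasis\rangle=0$ for every admissible $(a,b,c,d)$. For every $i\le s$ we have $v^i\in\X$, so $\langle (v^i)^{\otimes 2},\EBasis\rangle=0$; hence the diagonal terms with $i\le s$ contribute nothing when we expand $\langle T,\EBasis\rangle$. What remains is precisely the system
\[\sum_{(i,j)\in\Omega}\alpha_{ij}\,M_{abcd,ij}=0\quad\text{for every }(a,b,c,d),\]
i.e.\ $M\alpha=0$, where $\alpha$ denotes the vector $(\alpha_{ij})_{(i,j)\in\Omega}$.

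Finally, the hypothesis that $M$ has full column rank forces $\alpha=0$, so only the dropped diagonal terms with $i\le s$ can survive in the decomposition of $T$, giving
\[T=\sum_{i=1}^{s}\alpha_{ii}\,(v^i)^{\otimes 2}\in\Span\{(v^1)^{\otimes 2},\dots,(v^s)^{\otimes 2}\},\]
which is the nontrivial inclusion in \eqref{eq:goal}; the reverse inclusion is already free. I do not expect any real obstacle: the whole point of packaging $M$ with the particular column index set $\Omega$ is that the diagonal indices $(i,i)$ with $i\le s$ are dropped precisely because their contributions $(v^i)^{\otimes 2}$ are annihilated by every $\EBasis$, so the reduction from \eqref{eq:goal} to $\ker M=\{0\}$ is essentially tautological. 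The genuinely hard work — verifying that $M$ actually has full column rank under $R\le\tfrac{1}{2}(m-1)(n-2)$ — is the substantive task deferred to later sections.
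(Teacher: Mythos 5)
Your proof is correct and is essentially the paper's argument: both reduce \eqref{eq:goal} to the linear system $M\alpha=0$ by expanding an element of $S^2(\U)\cap\SpEBasis$ in the $v^i\vee v^j$ basis and using that the diagonal terms with $i\le s$ are annihilated by every $\EBasis$. The only difference is that you argue directly while the paper phrases the same reasoning as a contrapositive.
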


While not needed for our proof, we note that the converse also holds, i.e., \eqref{eq:goal} is equivalent to $M$ having full column rank, as long as $v^1,\ldots,v^R$ are linearly independent.

\begin{proof} 
We prove this via contrapositive. Suppose \eqref{eq:goal} is not met, meaning
\[\Span\{v^i\vee v^j\}\ \cap\ \SpEBasis \not\subseteq \Span\{(v^1)^{\otimes 2},\ldots,(v^s)^{\otimes 2}\}.\]
Then there exists an $\alpha=(\alpha_{ij})_{1\leq i\leq j\leq R}$ such that
\begin{equation}\label{eqn:ContainmentEqn}
\sum_{1\leq i\leq j\leq R}\alpha_{ij}v^i\vee v^j \in \SpEBasis \setminus \Span\{(v^i)^{\otimes 2} : 1\leq i\leq s\}. 
\end{equation}
Observe that this implies there exists a pair $(i,j)\in\Omega$ such that $\alpha_{ij}\neq0$. If not, we have $\alpha_{ij}=0$ for all $(i,j)\in\Omega$. So,
\begin{align*}
\sum_{1\leq i\leq j\leq R}\alpha_{ij} v^i\vee v^j &= \sum_{(i,j)\in\Omega}\alpha_{ij} v^i\vee v^j + \sum_{(i,j)\notin\Omega}\alpha_{ij} v^i\vee v^j \\
&= \sum_{(i,j)\in\Omega} 0 \cdot v^i\vee v^j + \sum_{\{(i,j) \,:\, 1\leq i=j\leq s\}} \alpha_{ij} v^i\vee v^j \\
&= \sum_{1\leq i \leq s}\alpha_{ii} v^i\vee v^i \\
&\in \Span\{(v^i)^{\otimes 2}:1\leq i\leq s\},
\end{align*}
where the last line follows from $v^i\vee v^i = (v^i)^{\otimes 2}$. This contradicts \eqref{eqn:ContainmentEqn}, thus the observation must hold.
Now, for all $1\leq a<c\leq m$ and $1\leq b<d\leq n$, we have
\begin{align*}
\hspace{1.5cm} 0 &= \left\langle \sum_{1\leq i\leq j\leq R}\alpha_{ij} v^i\vee v^j, \EBasis \right\rangle \\
&= \sum_{1\leq i \leq s}\alpha_{ii} \langle v^i\vee v^i, \EBasis \rangle + \sum_{(i,j)\in\Omega}\alpha_{ij} \langle v^i\vee v^j, \EBasis \rangle \\
&= \sum_{(i,j)\in\Omega}\alpha_{ij} \langle v^i\vee v^j, \EBasis \rangle \\
&= \sum_{(i,j)\in\Omega}\alpha_{ij} \MEntries,
\end{align*} 
where the third equality follows since $\langle v^i \vee v^i, \EBasis \rangle = 0$ for $v^1,\ldots,v^s\in\X$.
This shows that there exists a non-zero $\alpha=(\alpha_{ij})_{(i,j)\in\Omega}$ such that $M\alpha=0$. In other words, the columns of $M$ are not linearly independent.
\end{proof}

The following general fact will be helpful for computing the entries of $M$.
\begin{lemma}
$\langle a\vee b, c\vee d\rangle = \frac{1}{2}\left(\langle a,c\rangle\langle b,d\rangle + \langle a, d\rangle\langle b,c\rangle\right)$.
\end{lemma}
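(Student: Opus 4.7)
The plan is to prove this by direct expansion using the definition of the symmetric tensor product $u \vee u' = \frac{1}{2}(u \otimes u' + u' \otimes u)$ together with the standard identity $\langle x \otimes y, z \otimes w\rangle = \langle x, z\rangle\langle y,w\rangle$ (already invoked in the excerpt when computing $\langle v^{\otimes 2}, E_{abcd}\rangle$).

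First I would substitute the definitions to write
\[
\langle a \vee b, c \vee d\rangle = \left\langle \tfrac{1}{2}(a \otimes b + b \otimes a),\ \tfrac{1}{2}(c \otimes d + d \otimes c)\right\rangle.
\]
Then, using bilinearity of the inner product, I would expand this into four simple tensor-tensor inner products, each of which factors as a product of two inner products by the identity above. The four terms produce $\langle a,c\rangle\langle b,d\rangle$, $\langle a,d\rangle\langle b,c\rangle$, $\langle b,c\rangle\langle a,d\rangle$, and $\langle b,d\rangle\langle a,c\rangle$, with an overall prefactor of $\frac{1}{4}$. Grouping the two pairs of equal terms gives $\frac{1}{4}\cdot 2(\langle a,c\rangle\langle b,d\rangle + \langle a,d\rangle\langle b,c\rangle)$, which simplifies to the claimed identity.

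There is no real obstacle here; the only thing to be careful about is the bookkeeping of the four cross terms and confirming they pair up correctly. This lemma is purely a symmetric-tensor inner-product computation, and it will be used in the sequel to give explicit formulas for the entries $M_{abcd,ij} = \langle v^i \vee v^j, E_{abcd}\rangle$ of the matrix $M$ introduced in Lemma~\ref{Lem:GoalLemma}.
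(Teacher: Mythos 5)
Your proposal is correct and matches the paper's proof essentially verbatim: both expand $\langle a\vee b, c\vee d\rangle$ by bilinearity into four simple tensor inner products, apply the identity $\langle x\otimes y, z\otimes w\rangle = \langle x,z\rangle\langle y,w\rangle$ to each, and combine the two pairs of equal terms to turn the prefactor $\tfrac{1}{4}$ into $\tfrac{1}{2}$. No gaps.
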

\begin{proof}
We compute
\begin{align*}
\langle a\vee b, c\vee d\rangle &= \frac{1}{4}\langle a\otimes b+b\otimes a,c\otimes d+d\otimes c\rangle \\
&=\frac{1}{4}\Big(\langle a\otimes b,c\otimes d\rangle + \langle a\otimes b,d\otimes c\rangle + \langle b\otimes a,c\otimes d\rangle + \langle b\otimes a,d\otimes c\rangle \Big) \\
&= \frac{1}{4}\Big( \langle a,c \rangle\langle b,d \rangle + \langle a,d \rangle\langle b,c \rangle + \langle b,c \rangle\langle a,d \rangle + \langle b,d \rangle\langle a,c \rangle\Big) \\
&= \frac{1}{2}\Big(\langle a,c\rangle\langle b,d\rangle + \langle a, d\rangle\langle b,c\rangle\Big),
\end{align*}
where the third equality uses the identity $\langle a\otimes b,c\otimes d \rangle=\langle a,c \rangle\langle b,d \rangle$.
\end{proof}
Now, we are in a position to compute the entries $\MEntries$ of $M$. 
\begin{lemma}\label{Lem:EntryEqn}
The entries of $M$ are given by
\[ \MEntries=\frac{1}{2}\left(v_{ab}^iv_{cd}^j+v_{cd}^iv_{ab}^j-v_{ad}^iv_{cb}^j-v_{cb}^iv_{ad}^j\right) \]
for $1\leq a<c\leq m$, $1\leq b<d\leq n$, and $(i,j)\in\Omega$.
\end{lemma}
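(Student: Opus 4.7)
The plan is a direct calculation that unpacks the definitions, with the two inner-product lemmas already in hand doing essentially all the work. First, I would expand $\MEntries = \langle v^i \vee v^j, \EBasis\rangle$ using the definition $\EBasis = e_{ab}\vee e_{cd} - e_{ad}\vee e_{cb}$ and the bilinearity of the inner product, obtaining
\[
\MEntries = \langle v^i\vee v^j,\,e_{ab}\vee e_{cd}\rangle - \langle v^i\vee v^j,\,e_{ad}\vee e_{cb}\rangle.
\]

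Next, I would apply the preceding lemma (the formula $\langle a\vee b,c\vee d\rangle = \tfrac{1}{2}(\langle a,c\rangle\langle b,d\rangle+\langle a,d\rangle\langle b,c\rangle)$) to each of the two terms on the right. Using the elementary identity $\langle v, e_{\alpha\beta}\rangle = v_{\alpha\beta}$ (which follows from viewing $e_{\alpha\beta}$ as the standard basis matrix), I would rewrite
\[
\langle v^i\vee v^j, e_{ab}\vee e_{cd}\rangle = \tfrac{1}{2}\bigl(v_{ab}^i v_{cd}^j + v_{ad}^{\,?}\cdots\bigr),
\]
more precisely $\tfrac{1}{2}(v_{ab}^i v_{cd}^j + v_{cd}^i v_{ab}^j)$, and similarly the other term equals $\tfrac{1}{2}(v_{ad}^i v_{cb}^j + v_{cb}^i v_{ad}^j)$. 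Subtracting these two expressions yields the stated formula.

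There is no real obstacle here: the lemma is a mechanical consequence of the two facts already established (the inner-product formula for symmetric $2$-tensors and the identity $\langle v^{\otimes 2}, \EBasis\rangle = v_{ab}v_{cd}-v_{ad}v_{cb}$ which is just the polarized version). The only thing to be careful about is keeping the index pairs $(a,b),(c,d),(a,d),(c,b)$ correctly matched between the two factors of each symmetric product, so that signs and subscript orderings come out right. A clean way to avoid confusion is to treat $v \in \R^{m\times n}$ as a vector in $\R^{mn}$ indexed by pairs $(\alpha,\beta)$, so that $\langle v, e_{\alpha\beta}\rangle = v_{\alpha\beta}$ is immediate, and then simply substitute into the general formula.
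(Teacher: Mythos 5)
Your proposal is correct and follows essentially the same route as the paper: expand $\EBasis = e_{ab}\vee e_{cd} - e_{ad}\vee e_{cb}$, use bilinearity, apply the formula $\langle a\vee b, c\vee d\rangle = \tfrac{1}{2}(\langle a,c\rangle\langle b,d\rangle + \langle a,d\rangle\langle b,c\rangle)$ to each term, and substitute $\langle v^\ell, e_{\alpha\beta}\rangle = v^\ell_{\alpha\beta}$. The paper's proof is exactly this mechanical computation, so nothing further is needed.
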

\begin{proof}
We compute
\begin{align*}
\MEntries &= \langle v^i\vee v^j,E_{abcd}\rangle \\
&= \langle v^i\vee v^j,e_{ab}\vee e_{cd}-e_{ad}\vee e_{cb} \rangle \\
&= \langle v^i\vee v^j,e_{ab}\vee e_{cd}\rangle-\langle v^i\vee v^j,e_{ad}\vee e_{cb} \rangle \\
&= \frac{1}{2}\Big(\langle v^i,e_{ab}\rangle\langle v^j,e_{cd}\rangle + \langle v^i,e_{cd}\rangle\langle v^j,e_{ab}\rangle - \langle v^i,e_{ad}\rangle\langle v^j,e_{cb}\rangle - \langle v^i,e_{cb}\rangle\langle v^j,e_{ad}\rangle\Big) \\
&= \frac{1}{2}\left(v_{ab}^iv_{cd}^j + v_{cd}^iv_{ab}^j - v_{ad}^iv_{cb}^j - v_{cb}^iv_{ad}^j\right),
\end{align*}
as desired.
\end{proof}

We have now expressed the entries of $M$ in terms of the $v$ variables. Recall that the $v$ variables can in turn be expressed in terms of the underlying $x,y,z$ variables, namely:
\[v^\ell_{\alpha\beta} := \begin{cases}
x^\ell_\alpha y^\ell_\beta & \text{if } 1\leq\ell\leq s, \\
z^\ell_{\alpha\beta} & \text{if } s<\ell\leq R.
\end{cases}\]
This allows us to view $M$ as a matrix of symbolic polynomials in the variables $x,y,z$.

\subsection{Proof Strategy}
\label{Section:ProofStrategy}

Let us now discuss the high-level strategy towards \eqref{eq:goal} and set the stage for the proof. In Section \ref{Section:Construction}, we will construct a square submatrix $M'$ of $M$ that includes all columns of $M$ (and a subset of the rows). We think of $M' = M'(x,y,z)$ as a symbolic matrix in the variables $x,y,z$. Our goal will be to show that $\det(M')$, viewed as a polynomial in the variables $x,y,z$, is not identically zero. By Lemma \ref{Lem:GoalLemma}, this will imply Theorem \ref{Thm:GoalHolds}.

It suffices to show a stronger statement: $\det(M') \not\equiv 0$ even after substituting $0$ for certain $x,y,z$ variables. We will next specify which variables to zero out and which to keep.

Let $k\in\Z_+$ be the largest integer such that $2k+1\leq n$, which means $2(k+1)+1 > n$, i.e., $n \le 2(k+1)$. Fix an arbitrary injection \[\varphi:[R]\longrightarrow[m-1]\times[k]\quad\text{given by}\quad i\longmapsto(f(i),g(i)).\] Note that such an injection exists, since $R \le (m-1)k$, which is true by our assumption $R \le \frac{1}{2}(m-1)(n-2)$ combined with $n \le 2(k+1)$.

For intuition, the above injection lets us associate each $i \in [R]$ with a position $(f(i),g(i))$ in the top-left $(m-1) \times k$ block of an $m \times n$ matrix. Row $m$ of the matrix is not included because it will play a special row. Similarly, column $k+1$ will play a special role, and columns $k+2,\ldots,2k+1$ will serve as ``duplicates'' of columns $1,\ldots,k$, with $b \in [k]$ paired with $b+k+1$.

We will zero out the following variables and keep the rest as formal variables.
\begin{itemize}
\item Set $x^\ell_{\alpha}=0$ unless $\alpha\in\{f(\ell),m\}$.
\item Set $y^\ell_{\beta}=0$ unless $\beta\in\{g(\ell),k+1,g(\ell)+k+1\}$.
\item Set $z^\ell_{\alpha\beta}=0$ unless both of the above conditions hold: $\alpha\in\{f(\ell),m\}$ and $\beta\in\{g(\ell),k+1,g(\ell)+k+1\}$.
\end{itemize}
Note that $x^\ell_{\alpha}y^\ell_{\beta} = 0$ unless $\alpha\in\{f(\ell),m\}$ and $\beta\in\{g(\ell),k+1,g(\ell)+k+1\}$. So, the above rules for $x,y,z$ yield rules for the $v$ variables as follows:
\begin{equation}\label{eq:v-rule}
v^\ell_{\alpha\beta}=0 \text{ unless}
\begin{cases}
\alpha \in \{f(\ell),m\} \text{ and} \\
\beta\in\{g(\ell), k+1, g(\ell)+k+1\}.
\end{cases}
\end{equation}
These rules will guide the construction of the submatrix $M'$ in the next section.

\section{Properties of two Important Matrices}
\label{sec:MatrixAndProperties}

\subsection{Construction of $M'$ and $M''$} \label{Section:Construction}

We will construct the square submatrix $M'$ of $M$ as follows. Recall that $M'$ will include all columns of $M$, which are indexed by $\Omega$. It will be helpful to partition these columns into ``bins'' based on the following conditions on ordered pairs $(i,j)\in\Omega$, which involve $f,g$ as defined in Section \ref{Section:ProofStrategy}. Figure~\ref{fig:Bin1-4} provides a visualization of these bins.
\begin{itemize}
    \item \textbf{Bin 1} is the case $s< i=j\leq R$. So $f(i) = f(j)$ and $g(i)=g(j)$.
    \item \textbf{Bin 2} is the case $1\leq i<j\leq R$, $f(i) = f (j)$, and $g(i) \neq g(j)$.
    \item \textbf{Bin 3} is the case $1\leq i<j\leq R$, $f(i) \neq f (j)$, and $g(i) = g(j)$.
    \item \textbf{Bin 4} is the case $1\leq i<j\leq R$, $f(i) \neq f (j)$, and $g(i) \neq g(j)$.
\end{itemize}

\begin{figure}[h]
    \centering
    \includegraphics[width=0.3\linewidth]{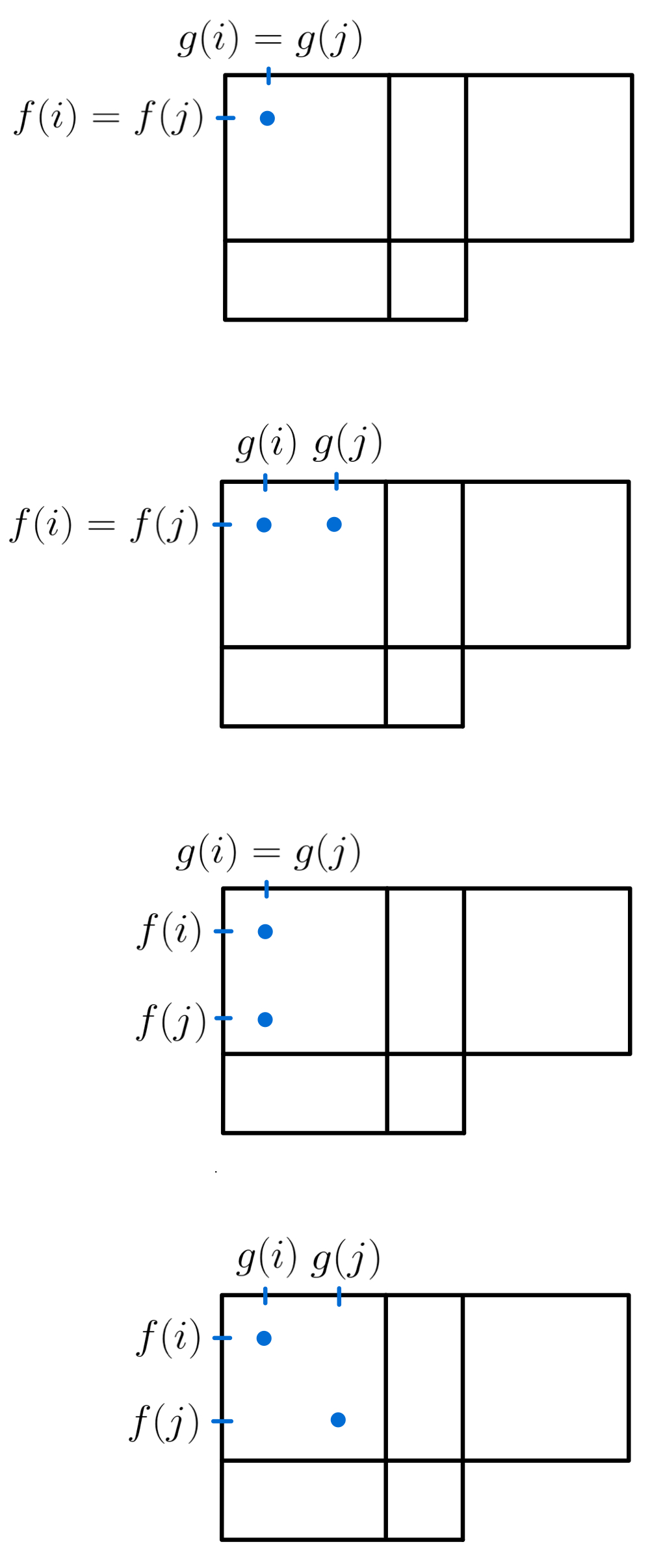}
    \caption{A visualization of Bins 1--4, from top to bottom. An $(i,j)$ pair is visualized as two positions, namely $(f(i),g(i))$ and $(f(j),g(j))$, in the top-left $(m-1) \times k$ block of an $m \times n$ matrix. These two positions are drawn as blue dots. The figures above are representative examples of the various cases that can occur, e.g., $f(i) < f(j)$ vs $f(i) > f(j)$.}
    \label{fig:Bin1-4}
\end{figure}

Recall that the rows of $M$ are indexed by $(a,b,c,d)$ with $1\leq a<c\leq m$ and $1\leq b<d\leq n$. We now select which of these rows to include in $M'$. These rows will be partitioned into ``classes'' which have some correspondence to the ``bins'' above. Figure~\ref{fig:ClassI-V} provides a visualization of these classes.

\begin{itemize}

\item \textbf{Class I.} For each $(i,j)$ in Bin 1, include the row $(a,b,c,d)$ where $\{a,c\}=\{f(i),m\}$ and $\{b,d\}=\{g(i),k+1\}$. (There is a unique such row, since $a<c$ and $b<d$ are required.)

\item \textbf{Class II.} For each $(i,j)$ in Bin 2, include the row $(a,b,c,d)$ where $\{a,c\}=\{f(i),m\}$ and $\{b,d\}=\{g(i),g(j)\}$.

\item \textbf{Class III.} For each $(i,j)$ in Bin 3, include the row $(a,b,c,d)$ where $\{a,c\}=\{f(i),f(j)\}$ and $\{b,d\}=\{g(i),k+1\}$. 

\item \textbf{Class IV.} For each $(i,j)$ in Bin 4, include the row $(a,b,c,d)$ where $\{a,c\}=\{f(i),f(j)\}$ and $\{b,d\}=\{g(i),g(j)\}$.

\item \textbf{Class V.} For each $(i,j)$ in Bin 4, include the row $(a,b,c,d)$ where $\{a,c\}=\{f(i),f(j)\}$ and $\{b,d\}=\{g(i)+k+1,g(j)+k+1\}$.

\end{itemize}

\begin{figure}
    \centering
    \includegraphics[width=0.28\linewidth]{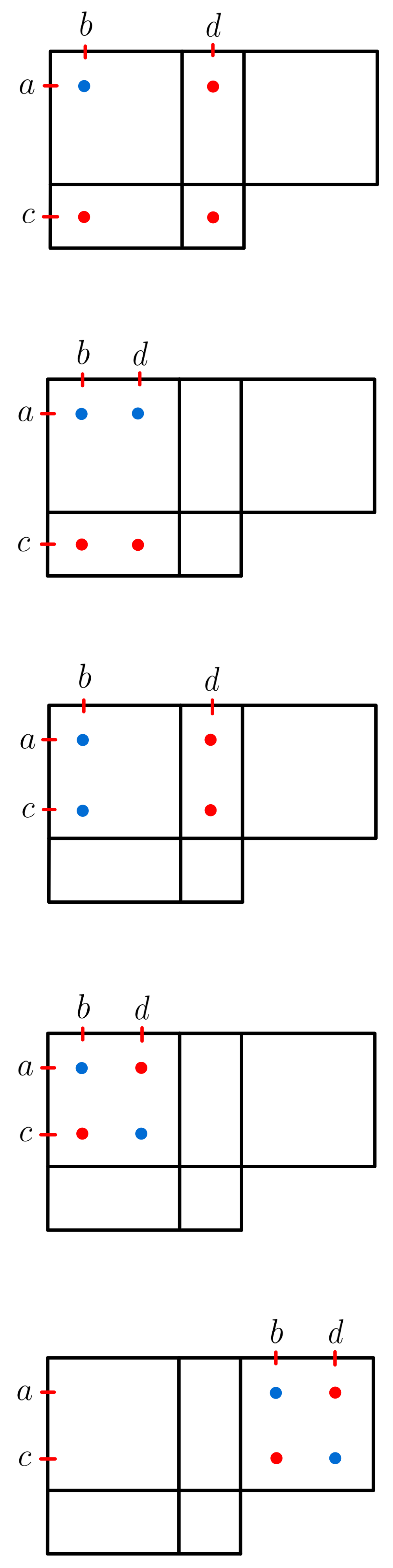}
    \caption{A visualization of Classes I--V, from top to bottom. A tuple $(a,b,c,d)$ is visualized as a $2 \times 2$ submatrix of an $m \times n$ matrix, namely the one with rows $\{a,c\}$ and columns $\{b,d\}$. The 4 positions in the submatrix are drawn as dots, with the blue dots indicating overlap with the corresponding $(i,j)$, visualized as in Figure~\ref{fig:Bin1-4}.}
    \label{fig:ClassI-V}
\end{figure}

We need to justify that $M'$ is a square matrix. Each column $(i,j)$ in Bins 1--3 corresponds uniquely to one of the rows in Classes I--III. Each $(i,j)$ in Bin 4 is paired with another $(i',j')$ in Bin 4 such that $\{f(i),f(j)\} = \{f(i'),f(j')\}$ and $\{g(i),g(j)\} = \{g(i'),g(j')\}$. This pair of columns corresponds uniquely to a pair of rows, one in Class IV and the other in Class V. This shows that $M'$ is square as desired.

\begin{definition}
Define $\Lambda$ to be the set of $(a,b,c,d)$ pairs indexing the rows of $M'$, i.e., those specified in Classes I--V above.
\end{definition}

\begin{definition} \label{Defn:M''}
Define $M''(x,y,z)$ to be the symbolic matrix obtained from $M'(x,y,z)$ by substituting 0 for certain $x,y,z$ variables as described in Section \ref{Section:ProofStrategy}, and leaving the rest as formal variables.
\end{definition}

\subsection{Zero Lemma}
\label{Section:ZeroLemma}

Towards our goal of showing $\det(M'') \not\equiv 0$, we aim to show that $M''$ is, in a sense, essentially upper-triangular. We will need to argue that many entries of $M''$ are identically zero (as polynomials in $x,y,z$), and the following ``Zero Lemma'' will be a useful tool for this.

\begin{lemma}[\textbf{Zero Lemma}] \label{Lem:GenLemma}
Let $(i,j)\in\Omega$ and $(a,b,c,d)\in\Lambda$. Denote \[\mathcal{R} = \{f(i),f(j),m\}\] and \[\mathcal{C} = \{g(i),g(j),k+1,g(i)+k+1,g(j)+k+1\}.\]
If $\{a,c\}\nsubseteq \mathcal{R}$ or $\{b,d\}\nsubseteq \mathcal{C}$, then $\MPEntries=0$.
\end{lemma}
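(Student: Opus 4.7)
The plan is to start from the explicit entry formula in Lemma~\ref{Lem:EntryEqn}, namely
\[ \MEntries = \tfrac{1}{2}\bigl(v^i_{ab}v^j_{cd} + v^i_{cd}v^j_{ab} - v^i_{ad}v^j_{cb} - v^i_{cb}v^j_{ad}\bigr), \]
and combine it with the vanishing rule \eqref{eq:v-rule} obtained from the substitution that turns $M'$ into $M''$: $v^\ell_{\alpha\beta}=0$ unless $\alpha\in\{f(\ell),m\}$ and $\beta\in\{g(\ell),k+1,g(\ell)+k+1\}$. With these two ingredients in hand, the argument reduces to a case analysis of the four monomials appearing in $\MEntries$, each of which is a product of two $v$-entries drawn from the planted vectors indexed by $i$ and $j$.

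First I would handle the case $\{a,c\}\not\subseteq\mathcal{R}$. Without loss of generality $a\notin\mathcal{R}=\{f(i),f(j),m\}$; the case $c\notin\mathcal{R}$ is symmetric. Since $a\neq f(i)$ and $a\neq m$, the rule \eqref{eq:v-rule} forces $v^i_{a\beta}=0$ for every column index $\beta$; this wipes out the two monomials $v^i_{ab}v^j_{cd}$ and $v^i_{ad}v^j_{cb}$. Similarly, since $a\neq f(j)$ and $a\neq m$, we get $v^j_{a\beta}=0$ for every $\beta$, killing the remaining two monomials $v^i_{cd}v^j_{ab}$ and $v^i_{cb}v^j_{ad}$. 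Hence $\MPEntries=0$.

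Next I would handle the case $\{b,d\}\not\subseteq\mathcal{C}$; again WLOG $b\notin\mathcal{C}=\{g(i),g(j),k+1,g(i)+k+1,g(j)+k+1\}$. Then $b\notin\{g(i),k+1,g(i)+k+1\}$, so \eqref{eq:v-rule} gives $v^i_{\alpha b}=0$ for every row index $\alpha$, killing the monomials containing $v^i_{ab}$ or $v^i_{cb}$ (i.e.\ the first and fourth). Likewise $b\notin\{g(j),k+1,g(j)+k+1\}$ yields $v^j_{\alpha b}=0$ for every $\alpha$, killing the monomials containing $v^j_{ab}$ or $v^j_{cb}$ (the second and third). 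All four terms vanish and $\MPEntries=0$.

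This is really just bookkeeping, so I do not expect a serious obstacle; the only small subtlety is to notice that each of $a,c$ (resp.\ $b,d$) plays a symmetric role in the four monomials, and that the sets $\mathcal{R}$ and $\mathcal{C}$ are designed precisely so that ``bad'' indices eliminate both the $i$-factors and the $j$-factors simultaneously — this is what allows a single failure in $\{a,c\}$ or $\{b,d\}$ to kill all four terms rather than only two.
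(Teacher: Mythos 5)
Your proposal is correct and follows essentially the same route as the paper: apply the entry formula from Lemma~\ref{Lem:EntryEqn}, invoke the vanishing rule \eqref{eq:v-rule} for whichever of $a,c$ (or $b,d$) falls outside $\mathcal{R}$ (or $\mathcal{C}$), and observe that this single bad index annihilates all four monomials because it appears once in each, whether attached to $v^i$ or $v^j$. The only nitpick is your phrase ``planted vectors indexed by $i$ and $j$'' --- these may be non-planted, but the rule \eqref{eq:v-rule} applies either way, so nothing in the argument is affected.
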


\begin{proof}
Suppose the given assumption. We have two cases.

\paragraph{Case 1.} Suppose $\{a,c\}\nsubseteq \mathcal{R}$. Then either $a\notin \mathcal{R}$ or $c\notin \mathcal{R}$.

\begin{itemize} 
\item If $a \notin \mathcal{R}$, then the rule in Eq.~\eqref{eq:v-rule} implies $v^\ell_{a\beta} = 0$ whenever $\ell\in\{i,j\}$, for any $\beta$. So
\begin{align*}
\MPEntries &= v^i_{ab}v^j_{cd} + v^i_{cd}v^j_{ab} - v^i_{ad}v^j_{cb} - v^i_{cb}v^j_{ad} \\
&= 0\cdot v^j_{cd} + v^i_{cd}\cdot0 - 0\cdot v^j_{cb} - v^i_{cb}\cdot0 \\
&= 0.
\end{align*}

\item If $c \notin \mathcal{R}$, then the rule in Eq.~\eqref{eq:v-rule} implies $v^\ell_{c\beta}=0$ whenever $\ell\in\{i,j\}$, for any $\beta$. So $\MPEntries=0$ similar to above.
\end{itemize}

\paragraph{Case 2.} Suppose $\{b,d\}\nsubseteq \mathcal{C}$. Then either $b\notin \mathcal{C}$ or $d\notin \mathcal{C}$.

\begin{itemize}
\item  If $b \notin \mathcal{C}$, then the rule in Eq.~\eqref{eq:v-rule} implies $v^\ell_{\alpha b} = 0$ whenever $\ell\in\{i,j\}$, for any $\alpha$.  So $\MPEntries=0$ similar to above.

\item  If $d \notin \mathcal{C}$, then the rule in Eq.~\eqref{eq:v-rule} implies $v^\ell_{\alpha d} = 0$ whenever $\ell\in\{i,j\}$, for any $\alpha$.  So $\MPEntries=0$ similar to above.
\end{itemize}
Hence, $\MPEntries=0$ as desired.
\end{proof}

\subsection{Structure of $M''$}

We can now describe the structure of $M''$. The following lemmas characterize the various blocks of the matrix. We will identify many zero entries using the Zero Lemma from above (Lemma~\ref{Lem:GenLemma}). We will also directly compute certain non-zero entries. 

\begin{lemma}[\textbf{Bin 1}] \label{Lem:Bin1_Mixed}
If $(i,j)$ is in Bin 1 and $(a,b,c,d)\in\Lambda$, then the entry $\MPEntries$ is equal to
\[M''_{abcd,ij} = \begin{cases}
z^i_{f(i),g(i)}z^i_{m,k+1}-z^i_{f(i),k+1}z^i_{m,g(i)} & \text{if } \{a,c\}=\{f(i),m\} \text{ and } \{b,d\}=\{g(i),k+1\}, \\
0 & \text{otherwise}.
\end{cases}\]
\end{lemma}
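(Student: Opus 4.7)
The plan is to apply the formula of Lemma~\ref{Lem:EntryEqn}, invoke the Zero Lemma to kill most entries, and then work through the five classes to locate the unique row where the entry survives. Since $(i,j)$ is in Bin~1, we have $j=i$ and $i>s$, so $v^i=z^i$ after substitution. Specializing Lemma~\ref{Lem:EntryEqn} to $i=j$ collapses the four terms into two, giving
\[ \MPEntries = z^i_{ab}\,z^i_{cd} - z^i_{ad}\,z^i_{cb}. \]
Also, because $f(i)=f(j)$ and $g(i)=g(j)$, the sets from the Zero Lemma become $\mathcal{R}=\{f(i),m\}$ and $\mathcal{C}=\{g(i),k+1,g(i)+k+1\}$. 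Hence, by Lemma~\ref{Lem:GenLemma}, we may assume $\{a,c\}\subseteq \mathcal{R}$ and $\{b,d\}\subseteq \mathcal{C}$; otherwise the entry vanishes.

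Next I would enumerate the rows of $\Lambda$ consistent with $\{a,c\}\subseteq\mathcal{R}=\{f(i),m\}$. Since $a<c$ and $|\mathcal{R}|=2$ (using $f(i)\le m-1$), this forces $\{a,c\}=\{f(i),m\}$. The key observation is that any row in Classes~III, IV, or V has $\{a,c\}=\{f(i_r),f(j_r)\}\subseteq[m-1]$, which cannot contain $m$, so these rows are immediately ruled out. A row in Class~II gives $\{b,d\}=\{g(i_r),g(j_r)\}$ with both entries in $[k]$, so inclusion in $\mathcal{C}$ forces $g(i_r)=g(j_r)=g(i)$, contradicting the Bin~2 definition. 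A row in Class~I corresponds to some Bin~1 pair $(q,q)$ with $\{a,c\}=\{f(q),m\}$ and $\{b,d\}=\{g(q),k+1\}$; since $g(q)\le k$, inclusion $\{b,d\}\subseteq\mathcal{C}$ forces $g(q)=g(i)$, and then $f(q)=f(i)$ together with the injectivity of $\varphi$ yield $q=i$.

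Therefore the only $(a,b,c,d)\in\Lambda$ with a potentially nonzero entry in column $(i,i)$ is the Class~I row $(a,b,c,d)=(f(i),g(i),m,k+1)$. For this row, each of the four $z$-entries $z^i_{f(i),g(i)}$, $z^i_{m,k+1}$, $z^i_{f(i),k+1}$, $z^i_{m,g(i)}$ is a retained (non-zeroed) variable since both coordinates lie in $\{f(i),m\}\times\{g(i),k+1,g(i)+k+1\}$, so substituting into the collapsed formula yields exactly
\[ \MPEntries = z^i_{f(i),g(i)}\,z^i_{m,k+1} - z^i_{f(i),k+1}\,z^i_{m,g(i)}. \]
All other rows in $\Lambda$ give $0$, completing the proof.

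I don't anticipate a substantive obstacle here: the content is the bookkeeping ruling out Classes~II--V, and the key structural facts (the asymmetry forced by $\mathcal{R}$ containing $m$, the uniqueness via injectivity of $\varphi$) are already encoded in the construction.
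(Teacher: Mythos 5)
Your proposal is correct and follows essentially the same route as the paper's proof: direct computation of the entry on the designated Class~I row (where the $\tfrac12$ cancels since $i=j$), and the Zero Lemma with $\mathcal{R}=\{f(i),m\}$, $\mathcal{C}=\{g(i),k+1,g(i)+k+1\}$ to annihilate all other rows class by class. Your treatment of the remaining Class~I rows via injectivity of $\varphi$ is slightly more explicit than the paper's, but the argument is the same.
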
 

\begin{remark} \label{rmk:Bin1}
Recall that Bin~1 contains only $(i,j)$ pairs with $i = j > s$. For this reason, only ``non-planted'' $z$ variables appear above. In the other cases below, we will use $v$ variables, which may be either planted or not.
\end{remark}

\begin{proof}
Since $(i,j)$ is in Bin~1, we have $s < i = j \le R$. First consider the case $\{a,c\}=\{f(i),m\}$ and $\{b,d\}=\{g(i),k+1\}$. Since $a<c$ and $b<d$, this forces \[a=f(i),\quad c=m,\quad b=g(i),\quad\text{and}\quad d=k+1.\]
Since $i = j > s$ we have $v^i = z^i$ and $v^j = z^j$, so
\begin{align*}
\MPEntries &= \frac{1}{2}\left(v^i_{ab}v^j_{cd} + v^i_{cd}v^j_{ab} - v^i_{ad}v^j_{cb}-v^i_{cb}v^j_{ad}\right) \\
&= z^i_{ab}z^i_{cd} - z^i_{ad}z^i_{cb} \\
&= z^i_{f(i),g(i)}z^i_{m,k+1} - z^i_{f(i),k+1}z^i_{m,g(i)}.
\end{align*}

Next, let us show that the entries in the other rows are zero. For our $(i,j) = (i,i)$ in Bin 1, the sets $\mathcal{R}$ and $\mathcal{C}$ in the Zero Lemma (Lemma \ref{Lem:GenLemma}) are
\[\mathcal{R} = \{f(i),m\}\] and \[\mathcal{C} = \{g(i),k+1,g(i)+k+1\}.\]
We break into cases depending on the class of $(a,b,c,d)$.
\begin{itemize}
\item Class I: In this class, we have $1\leq b<d\leq k+1$. Since $\{a,c\}\neq\{f(i),m\}$ or $\{b,d\}\neq\{g(i),k+1\}$, we have $\{a,c\}\nsubseteq \mathcal{R}$ or $\{b,d\}\nsubseteq \mathcal{C}$, respectively. Then by Lemma~\ref{Lem:GenLemma}, we have $\MPEntries=0$.

\item Class II: In this class, we have $1\leq b<d\leq k$. 
This means $\{b,d\}\nsubseteq \mathcal{C}$ and thus, $\MPEntries=0$ by Lemma~\ref{Lem:GenLemma}.

\item Class III--V: In these classes, we have $1\leq a<c\leq m-1$.
This means $\{a,c\}\nsubseteq \mathcal{R}$ and thus, $\MPEntries=0$ by Lemma~\ref{Lem:GenLemma}.
\end{itemize}
Hence, the entries $\MPEntries$ in Bin 1 are as claimed.
\end{proof}

\begin{lemma}[\textbf{Bin 2}] \label{Lem:Bin2_Mixed}
If $(i,j)$ is in Bin 2 and $(a,b,c,d)$ is in Class II--V, then the entry $\MPEntries$ is equal to
\[\MPEntries = \frac{1}{2} \begin{cases}
v^i_{f(i),g(i)} v^j_{m,g(j)} - v^i_{m,g(i)} v^j_{f(j),g(j)} & \text{if } \{a,c\}=\{f(i),m\} \text{ and } b=g(i)<g(j)=d, \\
v_{m,g(i)}^i v^j_{f(j),g(j)} - v^i_{f(i),g(i)} v^j_{m,g(j)} & \text{if } \{a,c\}=\{f(i),m\} \text{ and } b=g(j)<g(i)=d, \\
0 & \text{otherwise}.
\end{cases}\]
\end{lemma}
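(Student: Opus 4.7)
The plan is to mimic the structure of the Bin~1 proof: use the Zero Lemma (Lemma~\ref{Lem:GenLemma}) to rule out most rows, and then evaluate the few surviving entries directly from Lemma~\ref{Lem:EntryEqn} together with the vanishing rule~\eqref{eq:v-rule}. Since $(i,j)$ is in Bin~2, we have $f(i)=f(j)$ and $g(i)\neq g(j)$, so $\mathcal{R}=\{f(i),m\}$ consists of exactly two elements. Moreover, the elements of $\mathcal{C}$ other than $g(i),g(j)$ all lie in $\{k+1,\ldots,2k+1\}$, so $\mathcal{C}\cap[k]=\{g(i),g(j)\}$.

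First I would handle Classes III, IV, V in one stroke. For any row $(a,b,c,d)$ in these classes, $\{a,c\}=\{f(i'),f(j')\}$ with $f(i'),f(j')\leq m-1$, so $m\notin\{a,c\}$; combined with $a<c$ and $|\mathcal{R}|=2$, this forces $\{a,c\}\not\subseteq\mathcal{R}$, hence $M''_{abcd,ij}=0$ by the Zero Lemma. For Class~II, we have $\{a,c\}=\{f(i'),m\}$ (so $a=f(i'),c=m$) and $\{b,d\}=\{g(i'),g(j')\}\subseteq[k]$ for some Bin~2 pair $(i',j')$. A non-vanishing entry requires both $\{a,c\}\subseteq\mathcal{R}$ and $\{b,d\}\subseteq\mathcal{C}$; the first forces $f(i')=f(i)$, i.e.\ $a=f(i)$, and the second, combined with $g(i')\neq g(j')$, forces $\{b,d\}=\{g(i),g(j)\}$. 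These two conditions give precisely the two orderings $b=g(i)<g(j)=d$ and $b=g(j)<g(i)=d$ listed in the statement.

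It remains to evaluate $M''_{abcd,ij}=\frac{1}{2}(v^i_{ab}v^j_{cd}+v^i_{cd}v^j_{ab}-v^i_{ad}v^j_{cb}-v^i_{cb}v^j_{ad})$ from Lemma~\ref{Lem:EntryEqn} in the two surviving cases. In Case~1 ($a=f(i),b=g(i),c=m,d=g(j)$), I would invoke~\eqref{eq:v-rule} to verify that $v^i_{m,g(j)}=v^i_{f(i),g(j)}=0$ (since $g(j)\notin\{g(i),k+1,g(i)+k+1\}$) and $v^j_{f(j),g(i)}=v^j_{m,g(i)}=0$ (since $g(i)\notin\{g(j),k+1,g(j)+k+1\}$), leaving only the terms $v^i_{f(i),g(i)}v^j_{m,g(j)}$ and $-v^i_{m,g(i)}v^j_{f(j),g(j)}$. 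Case~2 is symmetric, with the two surviving terms being $v^i_{m,g(i)}v^j_{f(j),g(j)}$ and $-v^i_{f(i),g(i)}v^j_{m,g(j)}$, matching the signed expression in the statement. The main obstacle is essentially bookkeeping: the allowed column set for $v^i$ is $\{g(i),k+1,g(i)+k+1\}$ while for $v^j$ it is $\{g(j),k+1,g(j)+k+1\}$, and one has to keep straight which of the eight factors in the entry formula lands in which allowed set; the Bin~2 structure $f(i)=f(j),\ g(i)\neq g(j)$ is precisely what makes the ``cross'' factors $v^i_{cd}$ and $v^j_{ab}$ vanish while keeping a signed difference of two monomials.
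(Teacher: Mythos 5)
Your proposal is correct and follows essentially the same route as the paper: the Zero Lemma eliminates all rows except the Class~II row with $\{a,c\}=\{f(i),m\}$ and $\{b,d\}=\{g(i),g(j)\}$, and the two surviving orderings are then evaluated directly from Lemma~\ref{Lem:EntryEqn} using the vanishing rule~\eqref{eq:v-rule} and $f(i)=f(j)$. Your handling of the zero rows is, if anything, slightly more explicit than the paper's (e.g.\ noting $\mathcal{C}\cap[k]=\{g(i),g(j)\}$), but the argument is the same.
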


\begin{proof}
Since $(i,j)$ is in Bin~2, we have $i<j$, $f(i)=f(j)$, and $g(i)\neq g(j)$. First, let's look at the non-zero entries of $M''$. 
Choose $a,b,c,d$ such that $\{a,c\}=\{f(i),m\}$, $\{b,d\}=\{g(i),g(j)\}$, $a<c$, and $b<d$. Since $a<c$, this forces
\[a=f(i)\qquad \text{and}\qquad c=m.\]
We have
\[M''_{abcd,ij} = \frac{1}{2} \left( v^i_{f(i),b} v^j_{m,d} + v^i_{m,d} v^j_{f(i),b} - v^i_{f(i),d} v^j_{m,b} - v^i_{m,b} v^j_{f(i),d} \right). \]
We break into cases depending on whether $g(i)$ or $g(j)$ is larger.

\begin{itemize}
\item Say $b=g(i)$ and $d=g(j)$. Then \[\MPEntries = \frac{1}{2} \left( v^i_{f(i),g(i)}v^j_{m,g(j)} + v^i_{m,g(j)}v^j_{f(i),g(i)} -v^i_{f(i),g(j)}v^j_{m,g(i)} - v^i_{m,g(i)}v^j_{f(i),g(j)}. \right) \] By applying the rules~\eqref{eq:v-rule} for when $v^\ell_{\alpha\beta} = 0$ and the assumption $f(i)=f(j)$, we see that
\begin{align*}
M''_{abcd,ij} &= \frac{1}{2} \left( v^i_{f(i),g(i)}v^j_{m,g(j)} + 0\cdot0 - 0\cdot0 - v^i_{m,g(i)}v^j_{f(j),g(j)} \right) \\
&= \frac{1}{2} \left( v^i_{f(i),g(i)}v^j_{m,g(j)} - v^i_{m,g(i)}v^j_{f(j),g(j)} \right).
\end{align*}

\item Say $b=g(j)$ and $d=g(i)$. Then
\[M''_{abcd,ij} = \frac{1}{2} \left( v^i_{f(i),g(j)}v^j_{m,g(i)} + v^i_{m,g(i)}v^j_{f(i),g(j)} - v^i_{f(i),g(i)}v^j_{m,g(j)} - v^i_{m,g(j)}v^j_{f(i),g(i)} \right).\]
By applying the rules~\eqref{eq:v-rule} for when $v^\ell_{\alpha\beta} = 0$ and the assumption $f(i)=f(j)$, we see that 
\begin{align*}
M''_{abcd,ij} &= \frac{1}{2} \left( 0 \cdot 0 + v^i_{m,g(i)}v^j_{f(j),g(j)} - v^i_{f(i),g(i)}v^j_{m,g(j)} - 0 \cdot 0 \right) \\
&= \frac{1}{2} \left( v^i_{m,g(i)}v^j_{f(j),g(j)} - v^i_{f(i),g(i)}v^j_{m,g(j)} \right).
\end{align*}
\end{itemize}

Next, let us show that the entries in the other rows (within Classes II--V) are zero. For our $(i,j)$ in Bin 2, the sets $\mathcal{R}$ and $\mathcal{C}$ in the Zero Lemma (Lemma \ref{Lem:GenLemma}) are \[\mathcal{R} = \{f(i),m\}\] and \[\mathcal{C} = \{g(i),g(j),k+1,g(i)+k+1,g(j)+k+1\}.\]
We break into cases depending on the class of $(a, b, c, d)$.
\begin{itemize}
\item Class II: In this class, we have $1\leq b<d\leq k$. Since $\{a,c\}\neq\{f(i),m\}$ or $\{b,d\}\neq\{g(i),g(j)\}$, we have $\{a,c\}\nsubseteq \mathcal{R}$ or $\{b,d\}\nsubseteq \mathcal{C}$, respectively. Then by Lemma \ref{Lem:GenLemma}, we have $\MPEntries=0$.

\item Class III to V: We have $\{a,c\}=\{f(i),p\}$ for $p\in[m-1]$. Then $\{a,c\}\nsubseteq \mathcal{R}$ and thus, $\MPEntries=0$ by Lemma \ref{Lem:GenLemma}.
\end{itemize}
Hence, the entries $\MPEntries$ in Bin 2 are as claimed.
\end{proof}

\begin{lemma}[\textbf{Bin 3}] \label{Lem:Bin3_Mixed}
If $(i,j)$ is in Bin 3 and $(a,b,c,d)$ is in Class III--V, then the entry $\MPEntries$ is equal to
\[\MPEntries = \frac{1}{2} \begin{cases}
v^i_{f(i),g(i)}v^j_{f(j),k+1} - v^i_{f(i),k+1}v^j_{f(j),g(j)} & \text{if } a=f(i)<f(j)=c \text{ and }  \{b,d\}=\{g(i),k+1\}, \\
v^i_{f(i),k+1}v^j_{f(j),g(j)} - v^i_{f(i),g(i)}v^j_{f(j),k+1} & \text{if } a=f(j)<f(i)=c \text{ and }  \{b,d\}=\{g(i),k+1\}, \\
0 & \text{otherwise}.
\end{cases}\]
\end{lemma}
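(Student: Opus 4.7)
The proof closely parallels that of Lemma~\ref{Lem:Bin2_Mixed}, with the roles of rows and columns swapped. Since $(i,j)$ is in Bin~3, we have $i<j$, $f(i)\neq f(j)$, and $g(i)=g(j)$, so in the Zero Lemma (Lemma~\ref{Lem:GenLemma}) we have $\mathcal{R}=\{f(i),f(j),m\}$ and $\mathcal{C}=\{g(i),k+1,g(i)+k+1\}$. The plan is to separately verify the two non-zero cases and then use Lemma~\ref{Lem:GenLemma} to show all other entries in Classes~III--V vanish.

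For the non-zero cases, I would take $(a,b,c,d)\in\Lambda$ with $\{a,c\}=\{f(i),f(j)\}$ and $\{b,d\}=\{g(i),k+1\}$. Since $g(i)\le k<k+1$ and $b<d$, we are forced into $b=g(i)$, $d=k+1$. Splitting on whether $f(i)<f(j)$ (so $a=f(i)$, $c=f(j)$) or $f(j)<f(i)$ (so $a=f(j)$, $c=f(i)$), expand $\MPEntries$ via Lemma~\ref{Lem:EntryEqn} and apply Eq.~\eqref{eq:v-rule}: because $f(j)\notin\{f(i),m\}$ we get $v^i_{f(j),\beta}=0$ for every column $\beta$, and symmetrically $v^j_{f(i),\beta}=0$. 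This annihilates two of the four terms in each sub-case, and substituting $g(i)=g(j)$ in the survivors recovers the claimed formulas. The sign flip between the two sub-cases arises from interchanging the roles of $a$ and $c$.

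For the vanishing entries in Classes~III--V, I would treat each class separately using Lemma~\ref{Lem:GenLemma}. In Class~IV, every row has $\{b,d\}\subseteq[k]$ of size two, but $\mathcal{C}\cap[k]=\{g(i)\}$ is a singleton, so $\{b,d\}\nsubseteq\mathcal{C}$. In Class~V, every row has $\{b,d\}\subseteq\{k+2,\ldots,2k+1\}$ of size two, but $\mathcal{C}$ contains only the single element $g(i)+k+1$ in that range, so again $\{b,d\}\nsubseteq\mathcal{C}$. The one delicate case is Class~III, indexed by some Bin~3 pair $(i',j')$, where $\{a,c\}=\{f(i'),f(j')\}\subseteq[m-1]$ and $\{b,d\}=\{g(i'),k+1\}$ with $g(i')\in[k]$. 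Here the containments $\{a,c\}\subseteq\mathcal{R}$ and $\{b,d\}\subseteq\mathcal{C}$ force $\{f(i'),f(j')\}=\{f(i),f(j)\}$ and $g(i')=g(i)=g(j)$, which by injectivity of $\varphi$ forces $\{i',j'\}=\{i,j\}$; every other Class~III row therefore falls under the Zero Lemma and vanishes. The main (mild) obstacle is the bookkeeping of the four signed terms in the non-zero sub-cases and the injectivity argument for Class~III that rules out accidental collisions; the rest is a mechanical specialization of the Bin~2 proof.
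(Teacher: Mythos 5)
Your proposal is correct and follows essentially the same route as the paper: force $b=g(i)$, $d=k+1$ from $b<d$, split on the order of $f(i),f(j)$, kill two of the four terms via Eq.~\eqref{eq:v-rule} using $f(i)\neq f(j)$, and then dispatch the remaining rows of Classes~III--V with the Zero Lemma by noting that $\mathcal{C}$ meets $[k]$ (resp.\ $\{k+2,\ldots,2k+1\}$) in a single element. Your explicit injectivity argument for Class~III is a slightly more careful rendering of the same step the paper handles by simply observing $\{a,c\}\neq\{f(i),f(j)\}$ or $\{b,d\}\neq\{g(i),k+1\}$.
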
  

\begin{proof}
Since $(i,j)$ is in Bin~3, we have $i<j$, $f(i)\neq f(j)$, and $g(i)=g(j)$. First, let's look at the non-zero entries of $M$.
Choose $a,b,c,d$ such that $\{a,c\}=\{f(i),f(j)\}$, $\{b,d\}=\{g(i),k+1\}$, $a<c$, and $b<d$. Since $b<d$, this forces \[b=g(i)\qquad \text{and}\qquad d=k+1.\] 
We have
\[M''_{abcd,ij} = \frac{1}{2}\left( v^i_{a,g(i)}v^j_{c,k+1} + v^i_{c,k+1}v^j_{a,g(i)} - v^i_{a,k+1}v^j_{c,g(i)} - v^i_{c,g(i)}v^j_{a,k+1} \right).\]
We break into cases depending on whether $f(i)$ or $f(j)$ is larger.
\begin{itemize}
\item Say $a=f(i)$ and $c=f(j)$. Then
\[M''_{abcd,ij} = \frac{1}{2}\left( v^i_{f(i),g(i)}v^j_{f(j),k+1} + v^i_{f(j),k+1}v^j_{f(i),g(i)} -v^i_{f(i),k+1}v^j_{f(j),g(i)} - v^i_{f(j),g(i)}v^j_{f(i),k+1} \right).\] By applying the rules for when $v^\ell_{\alpha\beta} = 0$ and the assumption $g(i)=g(j)$, we see that
\begin{align*}
M''_{abcd,ij} &= \frac{1}{2}\left( v^i_{f(i),g(i)}v^j_{f(j),k+1} + 0\cdot0 - v^i_{f(i),k+1}v^j_{f(j),g(j)} - 0\cdot0 \right) \\
&= \frac{1}{2}\left( v^i_{f(i),g(i)}v^j_{f(j),k+1} - v^i_{f(i),k+1}v^j_{f(j),g(j)} \right).
\end{align*}

\item Say $a=f(j)$ and $c=f(i)$. Then
\[M''_{abcd,ij} = \frac{1}{2}\left( v^i_{f(j),g(i)} v^j_{f(i),k+1} + v^i_{f(i),k+1} v^j_{f(j),g(i)} - v^i_{f(j),k+1} v^j_{f(i),g(i)} - v^i_{f(i),g(i)} v^j_{f(j),k+1} \right) .\] By applying the rules for when $v^\ell_{\alpha\beta} = 0$ and the assumption $g(i)=g(j)$, we see that
\begin{align*}
M''_{abcd,ij} &= \frac{1}{2}\left( 0 \cdot 0 + v^i_{f(i),k+1} \cdot v^i_{f(i),k+1} v^j_{f(j),g(j)} - 0 \cdot 0 - v^i_{f(i),g(i)} v^j_{f(j),k+1} \right) \\
&= \frac{1}{2}\left( v^i_{f(i),k+1} v^j_{f(j),g(j)} - v^i_{f(i),g(i)} v^j_{f(j),k+1} \right).
\end{align*}
\end{itemize}

Next, let us show that the entries in the other rows (within Classes III--V) are zero. For our $(i, j)$ in Bin 3, the sets $\mathcal{R}$ and $\mathcal{C}$ in the Zero Lemma (Lemma \ref{Lem:GenLemma}) are
\[\mathcal{R} = \{f(i),f(j),m\}\] and \[\mathcal{C} = \{g(i),k+1,g(i)+k+1\}.\]
We break into cases depending on the class of $(a, b, c, d)$.
\begin{itemize}
\item Class III: In this class we have $1 \le a < c \le m-1$ and $1 \le b < d = k+1$. Since $\{a,c\}\neq\{f(i),f(j)\}$ or $\{b,d\}\neq\{g(i),k+1\}$, we have $\{a,c\}\nsubseteq \mathcal{R}$ or $\{b,d\}\nsubseteq \mathcal{C}$. Then by Lemma \ref{Lem:GenLemma}, we have $\MPEntries=0$.

\item Class IV to V: We have $\{b,d\}=\{g(i),p\}$ for $p\in[k]$. Then $\{b,d\}\nsubseteq \mathcal{C}$ and thus, $\MPEntries=0$ by Lemma \ref{Lem:GenLemma}.
\end{itemize}
Hence, the entries $\MPEntries$ in Bin 3 are as claimed.
\end{proof}

\begin{lemma}[\textbf{Bin 4}] \label{Lem:Bin4_Mixed}
If $(i,j)$ is in Bin 4 and $(a,b,c,d)$ is in Class IV--V, then the entry $M''_{abcd,ij}$ is equal to
\[\MPEntries = \frac{1}{2} \begin{cases}
v^i_{f(i),g(i)} v^j_{f(j),g(j)} & \text{if } a=f(i),\ c=f(j),\ b=g(i), \text{ and } d=g(j), \\
-v^i_{f(i),g(i)} v^j_{f(j),g(j)} & \text{if } a=f(j),\ c=f(i),\ b=g(i), \text{ and } d=g(j), \\
-v^i_{f(i),g(i)} v^j_{f(j),g(j)} & \text{if } a=f(i),\ c=f(j),\ b=g(j), \text{ and } d=g(i), \\
v^i_{f(i),g(i)} v^j_{f(j),g(j)} & \text{if } a=f(j),\ c=f(i),\ b=g(j), \text{ and } d=g(i), \\ 
v^i_{f(i),g(i)+k+1} v^j_{f(j),g(j)+k+1} & \text{if } a=f(i),\ c=f(j),\ b=g(i)+k+1, \text{ and } d=g(j)+k+1, \\
-v^i_{f(i),g(i)+k+1} v^j_{f(j),g(j)+k+1} & \text{if } a=f(i),\ c=f(j),\ b=g(j)+k+1, \text{ and } d=g(i)+k+1, \\ 
-v^i_{f(i),g(i)+k+1} v^j_{f(j),g(j)+k+1} & \text{if } a=f(j),\ c=f(i),\ b=g(i)+k+1, \text{ and } d=g(j)+k+1, \\ 
v^i_{f(i),g(i)+k+1} v^j_{f(j),g(j)+k+1} & \text{if } a=f(j),\ c=f(i),\ b=g(j)+k+1, \text{ and } d=g(i)+k+1, \\ 0 & \text{otherwise}.
\end{cases}\]
\end{lemma}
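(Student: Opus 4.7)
The plan is to follow the same two-step pattern already used for Bins 1, 2, and 3: first, directly compute $\MPEntries$ for the rows where the lemma claims a nonzero value, and second, invoke the Zero Lemma (Lemma~\ref{Lem:GenLemma}) to kill every remaining row in Classes IV and V. The key structural input is that a column $(i,j)$ in Bin 4 satisfies $f(i)\neq f(j)$ and $g(i)\neq g(j)$, and this is exactly what forces the vast majority of terms in the entry formula from Lemma~\ref{Lem:EntryEqn} to vanish.

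For the four Class IV subcases, the row $(a,b,c,d)$ satisfies $\{a,c\}=\{f(i),f(j)\}$ and $\{b,d\}=\{g(i),g(j)\}$, and the constraints $a<c$, $b<d$ pick out one of the four orderings listed in the statement depending on whether $f(i)\lessgtr f(j)$ and $g(i)\lessgtr g(j)$. In each subcase I would substitute directly into
\[\MPEntries=\tfrac{1}{2}\bigl(v^i_{ab}v^j_{cd}+v^i_{cd}v^j_{ab}-v^i_{ad}v^j_{cb}-v^i_{cb}v^j_{ad}\bigr)\]
and apply rule~\eqref{eq:v-rule}: within the $2\times 2$ index grid $\{f(i),f(j)\}\times\{g(i),g(j)\}$, the only nonzero $v^i$ entry is $v^i_{f(i),g(i)}$ and the only nonzero $v^j$ entry is $v^j_{f(j),g(j)}$, because $f(i)\neq f(j)$ and $g(i)\neq g(j)$ rule out the other three positions. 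Three of the four products therefore vanish, leaving the single surviving product $v^i_{f(i),g(i)}\,v^j_{f(j),g(j)}$. Its sign is $+$ when this product lands in the $v^i_{ab}v^j_{cd}+v^i_{cd}v^j_{ab}$ half of the formula and $-$ when it lands in the $-v^i_{ad}v^j_{cb}-v^i_{cb}v^j_{ad}$ half; checking each of the four orderings against this sign rule reproduces the first four cases of the lemma.

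The four Class V subcases are handled identically, only with $\{b,d\}=\{g(i)+k+1,g(j)+k+1\}$; the surviving product is now $v^i_{f(i),g(i)+k+1}\,v^j_{f(j),g(j)+k+1}$, yielding the last four cases with the same sign rule. Finally, for the ``otherwise'' rows I would apply the Zero Lemma with $\mathcal{R}=\{f(i),f(j),m\}$ and $\mathcal{C}=\{g(i),g(j),k+1,g(i)+k+1,g(j)+k+1\}$: Class IV rows have $a,c\in[m-1]$ and $b,d\in[k]$, so the requirement $\{a,c\}\subseteq\mathcal{R}$ forces $\{a,c\}=\{f(i),f(j)\}$ and $\{b,d\}\subseteq\mathcal{C}$ forces $\{b,d\}=\{g(i),g(j)\}$; any other Class IV row fails one of these and is zeroed by Lemma~\ref{Lem:GenLemma}. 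Class V rows are analogous after replacing $\{g(i),g(j)\}$ by its shifted counterpart $\{g(i)+k+1,g(j)+k+1\}$. The only real obstacle is the sign bookkeeping across the eight subcases, which is routine once the surviving-product structure is identified.
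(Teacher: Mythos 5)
Your proposal is correct and follows essentially the same route as the paper: direct substitution into the entry formula for the rows with $\{a,c\}=\{f(i),f(j)\}$ and $\{b,d\}$ equal to $\{g(i),g(j)\}$ or its shift, using rule~\eqref{eq:v-rule} together with $f(i)\neq f(j)$, $g(i)\neq g(j)$ to kill three of the four products, followed by the Zero Lemma with $\mathcal{R}=\{f(i),f(j),m\}$ and $\mathcal{C}=\{g(i),g(j),k+1,g(i)+k+1,g(j)+k+1\}$ for the remaining Class IV--V rows. The sign rule you state (plus when the surviving product sits in the $v^i_{ab}v^j_{cd}+v^i_{cd}v^j_{ab}$ half, minus otherwise) checks out against all eight cases.
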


\begin{proof} 
Since $(i,j)$ is in Bin~4, we have $i<j$, $f(i)\neq f(j)$, and $g(i)\neq g(j)$. First we consider the non-zero entries, treating Classes IV and V separately. \\

\textbf{Class IV (left side of the matrix).} Choose $a,b,c,d$ such that $\{a,c\}=\{f(i),f(j)\}$, $\{b,d\}=\{g(i),g(j)\}$, $a<c$, and $b<d$. 
Then the entries $\MPEntries$ are as follows.
\begin{itemize}
\item Say $a=f(i)$, $c=f(j)$, $b=g(i)$, and $d=g(j)$. Then
\begin{align*}
M''_{abcd,ij} &= \frac{1}{2}\left( v^i_{f(i),g(i)} v^j_{f(j),g(j)} + v^i_{f(j),g(j)} v^j_{f(i),g(i)} - v^i_{f(i),g(j)} v^j_{f(j),g(i)} - v^i_{f(j),g(i)} v^j_{f(i),g(j)} \right) \\
&= \frac{1}{2}\left( v^i_{f(i),g(i)} v^j_{f(j),g(j)} + 0\cdot0 - 0\cdot0 - 0\cdot0 \right) \\
&= \frac{1}{2}\ v^i_{f(i),g(i)} v^j_{f(j),g(j)}. 
\end{align*} 

\item Say $a=f(i)$, $c=f(j)$, $b=g(j)$, and $d=g(i)$. Then
\begin{align*}
M''_{abcd,ij} &= \frac{1}{2}\left( v^i_{f(i),g(j)} v^j_{f(j),g(i)} + v^i_{f(j),g(i)} v^j_{f(i),g(j)} - v^i_{f(i),g(i)} v^j_{f(j),g(j)} - v^i_{f(j),g(j)} v^j_{f(i),g(i)} \right) \\
&= \frac{1}{2}\left( 0\cdot0 + 0\cdot0 - v^i_{f(i),g(i)} v^j_{f(j),g(j)} - 0\cdot0 \right) \\
&= -\frac{1}{2}\ v^i_{f(i),g(i)} v^j_{f(j),g(j)}. 
\end{align*}

\item Say $a=f(j)$, $c=f(i)$, $b=g(i)$, and $d=g(j)$. Then
\begin{align*}
M''_{abcd,ij} &= \frac{1}{2}\left( v^i_{f(j),g(i)} v^j_{f(i),g(j)} + v^i_{f(i),g(j)} v^j_{f(j),g(i)} - v^i_{f(j),g(j)} v^j_{f(i),g(i)} - v^i_{f(i),g(i)} v^j_{f(j),g(j)} \right) \\
&= \frac{1}{2}\left( 0\cdot0 + 0\cdot0 - 0\cdot0 - v^i_{f(i),g(i)} v^j_{f(j),g(j)} \right) \\
&= -\frac{1}{2}\ v^i_{f(i),g(i)} v^j_{f(j),g(j)}.
\end{align*}

\item Say $a=f(j)$, $c=f(i)$, $b=g(j)$, and $d=g(i)$. Then
\begin{align*}
M''_{abcd,ij} &= \frac{1}{2}\left( v^i_{f(j),g(j)} v^j_{f(i),g(i)} + v^i_{f(i),g(i)} v^j_{f(j),g(j)} - v^i_{f(j),g(i)} v^j_{f(i),g(j)} - v^i_{f(i),g(j)} v^j_{f(j),g(i)} \right) \\
&= \frac{1}{2}\left( 0\cdot0 + v^i_{f(i),g(i)} v^j_{f(j),g(j)} - 0\cdot0 - 0\cdot0 \right) \\
&= \frac{1}{2}\ v^i_{f(i),g(i)} v^j_{f(j),g(j)}.
\end{align*}
\end{itemize}

\textbf{Class V (right side of the matrix).} Choose $a,b,c,d$ such that $\{a,c\}=\{f(i),f(j)\}$, $\{b,d\}=\{g(i)+k+1,g(j)+k+1\}$, $a<c$, and $b<d$. 
Then the entries $\MPEntries$ are as follows.

\begin{itemize}
\item Say $a=f(i)$, $c=f(j)$, $b=g(i)+k+1$, and $d=g(j)+k+1$. Then
\begin{align*}
M''_{abcd,ij} &= \frac{1}{2}\Big( v^i_{f(i),g(i)+k+1} v^j_{f(j),g(j)+k+1} + v^i_{f(j),g(j)+k+1} v^j_{f(i),g(i)+k+1} \\ & \hspace{1cm} - v^i_{f(i),g(j)+k+1} v^j_{f(j),g(i)+k+1} - v^i_{f(j),g(i)+k+1} v^j_{f(i),g(j)+k+1} \Big) \\
&= \frac{1}{2}\left( v^i_{f(i),g(i)+k+1} v^j_{f(j),g(j)+k+1} + 0\cdot0 - 0\cdot0 - 0\cdot0 \right) \\
&= \frac{1}{2}\ v^i_{f(i),g(i)+k+1} v^j_{f(j),g(j)+k+1}. 
\end{align*} 

\item Say $a=f(i)$, $c=f(j)$, $b=g(j)+k+1$, and $d=g(i)+k+1$. Then
\begin{align*}
M''_{abcd,ij} &= \frac{1}{2}\Big( v^i_{f(i),g(j)+k+1} v^j_{f(j),g(i)+k+1} + v^i_{f(j),g(i)+k+1} v^j_{f(i),g(j)+k+1} \\ & \hspace{1cm} - v^i_{f(i),g(i)+k+1} v^j_{f(j),g(j)+k+1} - v^i_{f(j),g(j)+k+1} v^j_{f(i),g(i)+k+1} \Big) \\
&= \frac{1}{2}\left( 0\cdot0 + 0\cdot0 - v^i_{f(i),g(i)+k+1} v^j_{f(j),g(j)+k+1} - 0\cdot0 \right) \\
&= -\frac{1}{2}\ v^i_{f(i),g(i)+k+1} v^j_{f(j),g(j)+k+1}. 
\end{align*}

\item Say $a=f(j)$, $c=f(i)$, $b=g(i)+k+1$, and $d=g(j)+k+1$. Then
\begin{align*}
M''_{abcd,ij} &= \frac{1}{2}\Big( v^i_{f(j),g(i)+k+1} v^j_{f(i),g(j)+k+1} + v^i_{f(i),g(j)+k+1} v^j_{f(j),g(i)+k+1} \\ & \hspace{1cm} - v^i_{f(j),g(j)+k+1} v^j_{f(i),g(i)+k+1} - v^i_{f(i),g(i)+k+1} v^j_{f(j),g(j)+k+1} \Big) \\
&= \frac{1}{2}\left( 0\cdot0 + 0\cdot0 - 0\cdot0 - v^i_{f(i),g(i)+k+1} v^j_{f(j),g(j)+k+1} \right) \\
&= -\frac{1}{2}\ v^i_{f(i),g(i)+k+1} v^j_{f(j),g(j)+k+1}.
\end{align*}

\item Say $a=f(j)$, $c=f(i)$, $b=g(j)+k+1$, and $d=g(i)+k+1$. Then
\begin{align*}
M''_{abcd,ij} &= \frac{1}{2}\Big( v^i_{f(j),g(j)+k+1} v^j_{f(i),g(i)+k+1} + v^i_{f(i),g(i)+k+1} v^j_{f(j),g(j)+k+1} \\ & \hspace{1cm} - v^i_{f(j),g(i)+k+1} v^j_{f(i),g(j)+k+1} - v^i_{f(i),g(j)+k+1} v^j_{f(j),g(i)+k+1} \Big) \\
&= \frac{1}{2}\left(  0\cdot0 + v^i_{f(i),g(i)+k+1} v^j_{f(j),g(j)+k+1} - 0\cdot0 - 0\cdot0 \right) \\
&= \frac{1}{2}\ v^i_{f(i),g(i)+k+1} v^j_{f(j),g(j)+k+1}.
\end{align*}
\end{itemize} 

Next, let us show that the entries in the other rows (within Classes IV--V) are zero. For our $(i, j)$ in Bin 4, the sets $\mathcal{R}$ and $\mathcal{C}$ in the Zero Lemma (Lemma \ref{Lem:GenLemma}) are
\[\mathcal{R} = \{f(i),f(j),m\}\] and \[\mathcal{C} = \{g(i),g(j),k+1,g(i)+k+1,g(j)+k+1\}.\]
We break into cases depending on the class of $(a, b, c, d)$.

\begin{itemize}
\item Class IV: We have $a,c \in [m-1]$ and $b,d \in [k]$. Since $\{a,c\}\neq\{f(i),f(j)\}$ or $\{b,d\}\neq\{g(i),g(i)\}$, we have $\{a,c\}\nsubseteq \mathcal{R}$ or $\{b,d\}\nsubseteq \mathcal{C}$. Thus by Lemma \ref{Lem:GenLemma}, we have $\MPEntries=0$.

\item Class IV: We have $a,c\in[m-1]$ and $b,d\in\{k+2,\ldots,2k+1\}$. Since $\{a,c\}\neq\{f(i),f(j)\}$ or $\{b,d\}\neq\{g(i)+k+1,g(j)+k+1\}$, we have $\{a,c\}\nsubseteq \mathcal{R}$ or $\{b,d\}\nsubseteq \mathcal{C}$. Thus by Lemma \ref{Lem:GenLemma}, we have $\MPEntries=0$.
\end{itemize}

Hence, the entries $\MPEntries$ in Bin 4 are as claimed.
\end{proof}

The resulting structure of $M''$ is illustrated in Figure~\ref{fig:M''Matrix}. Notably, $M''$ is block-upper-triangular with diagonal blocks of size $1 \times 1$ (Bins~1--3) and $2 \times 2$ (Bin~4). To show $\det(M'') \not\equiv 0$, we will need to check that these diagonal blocks have non-zero determinants.

\begin{figure}
    \centering
    \includegraphics[width=1\linewidth]{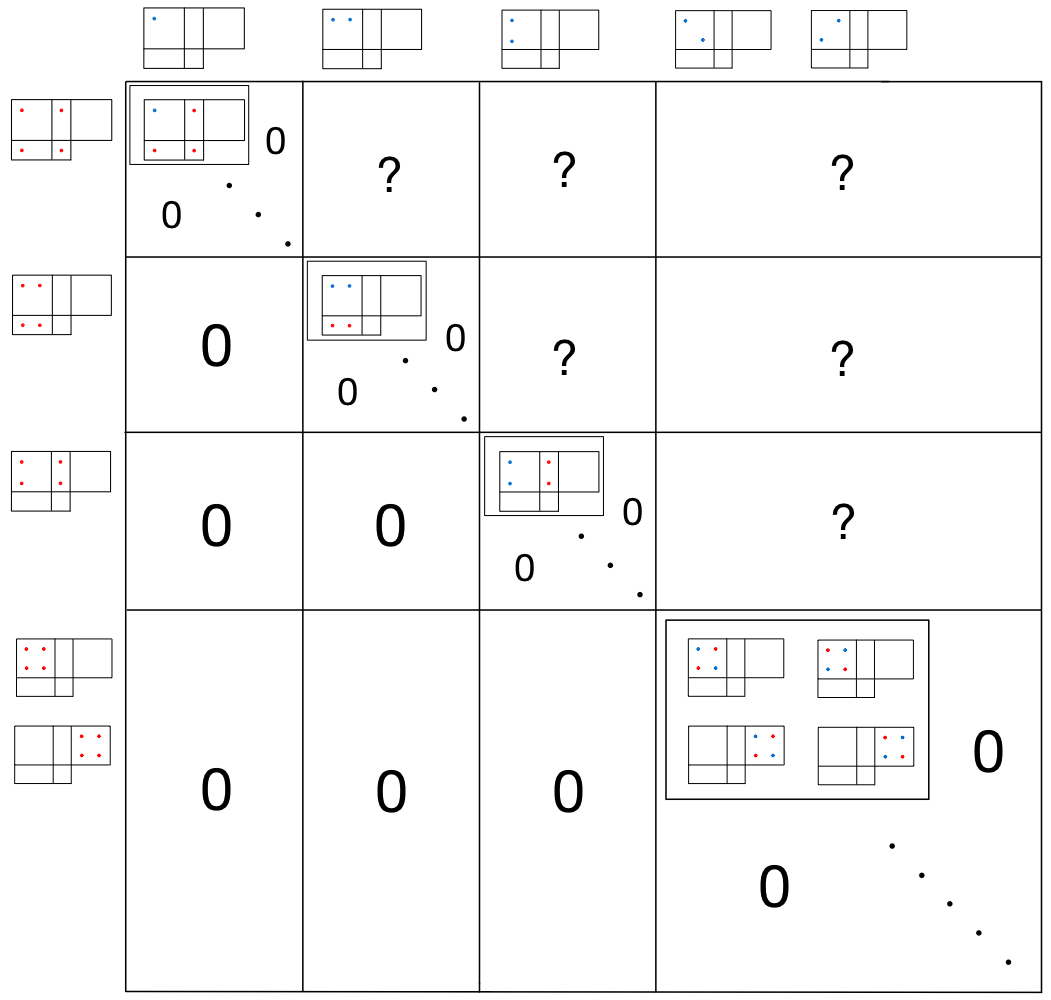}
    \caption{The structure of the matrix $M''$. Columns are indexed by $(i,j)$ pairs, partitioned into Bins 1--4 and visualized as in Figure \ref{fig:Bin1-4}. In Bin~4, each $(i,j)$ is paired with another $(i',j')$. Rows are indexed by tuples $(a,b,c,d)$, partitioned into Classes I--V and visualized as in Figure \ref{fig:ClassI-V}. Classes IV and V are combined in the last block of rows. Each of the first 3 diagonal blocks is a diagonal matrix, with each non-zero entry depicted as a diagram showing how the row and column diagrams overlap. The bottom-right block is a block diagonal matrix with $2 \times 2$ blocks on the diagonal. The blocks with question marks have unknown values that we do not need to control in our proof.
    \label{fig:M''Matrix}
    }
\end{figure}

\subsection{Invertibility of $M''$}

We now show that $M''$ is invertible, building on the previous sections. 

\begin{lemma} \label{Lem:M''DetLemma}
Let $M'' = M''(x,y,z)$ be the matrix from Definition~\ref{Defn:M''}. Then $\det(M'') \not\equiv 0$ as a polynomial in the $x,y,z$ variables.
\end{lemma}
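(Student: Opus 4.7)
The plan is to leverage Lemmas~\ref{Lem:Bin1_Mixed}--\ref{Lem:Bin4_Mixed} to decompose $\det(M'')$ as a product of nonzero polynomial factors. I would order the columns of $M''$ by bin (Bin~1 through Bin~4) and the rows by class (Class~I, Class~II, Class~III, then Classes~IV and V combined). The four bin/class lemmas collectively show that all entries strictly below the resulting diagonal blocks vanish: Lemma~\ref{Lem:Bin1_Mixed} forces Bin~1 columns to zero outside Class~I; Lemma~\ref{Lem:Bin2_Mixed} forces Bin~2 columns to zero in Classes~III--V; Lemma~\ref{Lem:Bin3_Mixed} forces Bin~3 columns to zero in Classes~IV--V; and Lemma~\ref{Lem:Bin4_Mixed} treats Bin~4 in Classes~IV--V. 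Thus $M''$ is block upper triangular,
\[
M'' = \begin{pmatrix} D_1 & * & * & * \\ 0 & D_2 & * & * \\ 0 & 0 & D_3 & * \\ 0 & 0 & 0 & D_4 \end{pmatrix},
\]
with square diagonal blocks $D_1,D_2,D_3,D_4$ (a direct row-column count using the Bin~4 pairing confirms squareness), and hence $\det(M'') = \prod_{t=1}^4 \det(D_t)$ in $\R[x,y,z]$. It therefore suffices to show each factor is a nonzero polynomial.

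For $t \in \{1,2,3\}$, the corresponding lemma exhibits $D_t$ as a \emph{diagonal} matrix whose entries are precisely the explicit $2\times 2$ minors it lists (e.g.\ $z^i_{f(i),g(i)} z^i_{m,k+1} - z^i_{f(i),k+1} z^i_{m,g(i)}$ for $D_1$). Each such entry is a difference of two monomials in pairwise distinct $x,y,z$ variables, as can be verified from the inequalities $f(i) < m$, $g(i) < k+1$, and the bin-dependent constraints on $f(i),f(j),g(i),g(j)$, together with the substitutions $v^\ell_{\alpha\beta} = x^\ell_\alpha y^\ell_\beta$ for $\ell\leq s$ or $z^\ell_{\alpha\beta}$ otherwise. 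Hence each diagonal entry is a nonzero polynomial, and since $\R[x,y,z]$ is an integral domain, $\det(D_t) \not\equiv 0$.

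The block $D_4$ requires slightly more care. The pairing in Section~\ref{Section:Construction} groups Bin~4 columns as $\{(i,j),(i',j')\}$ with $\{f(i),f(j)\} = \{f(i'),f(j')\}$ and $\{g(i),g(j)\} = \{g(i'),g(j')\}$; the two columns of such a pair share the same Class~IV row and the same Class~V row as their only nonzero positions (by Lemma~\ref{Lem:Bin4_Mixed}). Therefore $D_4$ is itself block diagonal, with $2\times 2$ diagonal blocks of the form
\[
B = \tfrac{1}{2}\!\begin{pmatrix} \epsilon\, v^i_{f(i),g(i)} v^j_{f(j),g(j)} & \epsilon'\, v^{i'}_{f(i'),g(i')} v^{j'}_{f(j'),g(j')} \\ \epsilon\, v^i_{f(i),g(i)+k+1} v^j_{f(j),g(j)+k+1} & \epsilon'\, v^{i'}_{f(i'),g(i')+k+1} v^{j'}_{f(j'),g(j')+k+1} \end{pmatrix}\!,
\]
for signs $\epsilon,\epsilon' \in \{\pm 1\}$ read off from Lemma~\ref{Lem:Bin4_Mixed}.

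The main obstacle is showing $\det(B) \not\equiv 0$. Up to the nonzero scalar $\epsilon\epsilon'/4$, the determinant is
\[
v^i_{f(i),g(i)} v^j_{f(j),g(j)} v^{i'}_{f(i'),g(i')+k+1} v^{j'}_{f(j'),g(j')+k+1} \;-\; v^i_{f(i),g(i)+k+1} v^j_{f(j),g(j)+k+1} v^{i'}_{f(i'),g(i')} v^{j'}_{f(j'),g(j')}.
\]
For each $\ell \in \{i,j,i',j'\}$, the $v^\ell$-factor in the two terms carries second coordinate in $\{1,\ldots,k\}$ in one term and in $\{k+2,\ldots,2k+1\}$ in the other. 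Expanding $v^\ell_{\alpha\beta}$ into $x^\ell_\alpha y^\ell_\beta$ (if $\ell \leq s$) or $z^\ell_{\alpha\beta}$ (otherwise) therefore yields two monomials in $\R[x,y,z]$ that share no variable coming from any $v^\ell$ slot; in particular they are distinct monomials and cannot cancel. Consequently $\det(B) \not\equiv 0$ for every diagonal block, $\det(D_4) \not\equiv 0$ by another integral-domain argument, and combining all four factors yields $\det(M'') \not\equiv 0$, as desired.
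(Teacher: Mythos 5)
Your proposal is correct and follows essentially the same route as the paper: exhibit $M''$ as block upper triangular via the Bin/Class lemmas, reduce to the $1\times 1$ diagonal entries of Blocks 1--3 and the $2\times 2$ blocks of Block 4, and check each determinant is a nonzero polynomial because its two terms are distinct monomials. One small over-claim: for a planted index $\ell \le s$ the two terms of $\det(B)$ \emph{do} share the variable $x^\ell_{f(\ell)}$ coming from the $v^\ell$ slot, so they do not "share no variable"; the monomials are nonetheless distinct because the $y^\ell$ (or $z^\ell$) variables differ, which is exactly the distinction the paper isolates.
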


\begin{proof}
Since $M''$ is block-upper-triangular (see Figure~\ref{fig:M''Matrix}), it suffices to show that each diagonal block ($1 \times 1$ or $2 \times 2$) has determinant that is not identically zero.
\begin{itemize}
\item In Block 1 (Bin 1, Class I), by Lemma \ref{Lem:Bin1_Mixed}, the diagonal entries are 
\[z^i_{f(i),g(i)}z^i_{m,k+1} - z^i_{f(i),k+1}z^i_{m,g(i)} \quad\text{for } s<i\leq R.\]
In other words, the row of $M$ with $\{a,c\}=\{f(i),m\}$ and $\{b,d\}=\{g(i),k+1\}$ has the entry above in the $(i,i)$ column. The four variables appearing in this expression are not zeroed out according to the rules in Section \ref{Section:ProofStrategy}. Furthermore, the four variables are distinct, which makes the diagonal entry a non-zero polynomial.

\item In Block 2 (Bin 2, Class II), by Lemma \ref{Lem:Bin2_Mixed}, the diagonal entries are 
\begin{align*}
&\pm \frac{1}{2} \left(v^i_{f(i),g(i)}v^j_{m,g(j)} - v^i_{m,g(i)}v^j_{f(j),g(j)}\right) \\
&\qquad\qquad = \frac{1}{2}
\begin{cases}
\pm\left(z^i_{f(i),g(i)}z^j_{m,g(j)} - z^i_{m,g(i)}z^j_{f(j),g(j)}\right) & \text{if } s< i<j\leq R \\
\pm \left(x^i_{f(i)}y^i_{g(i)} z^j_{m,g(j)} - x^i_{m}y^i_{g(i)} z^j_{f(j),g(j)} \right) & \text{if } 1\leq i\leq s<j\leq R \\
\pm \left(x^i_{f(i)}y^i_{g(i)} x^j_{m}y^j_{g(j)} -\ x^i_{m}y^i_{g(i)} x^j_{f(j)}y^j_{g(j)} \right) & \text{if } 1\leq i<j\leq s.
\end{cases}
\end{align*}
None of the variables appearing here are zeroed out by the rules in Section~\ref{Section:ProofStrategy}.
Furthermore, in each case the two monomials are distinct, which makes the diagonal entry a non-zero polynomial.

\item In Block 3 (Bin 3, Class III), by Lemma~\ref{Lem:Bin3_Mixed}, the diagonal entries are 
\begin{align*}
&\pm \frac{1}{2} \left(v^i_{f(i),g(i)}v^j_{f(j),k+1} - v^i_{f(i),k+1}v^j_{f(j),g(j)}\right) \\
&\qquad\qquad = \frac{1}{2} 
\begin{cases}
\pm \left(z^i_{f(i),g(i)}z^j_{f(j),k+1} - z^i_{f(i),k+1}z^j_{f(j),g(j)}\right) & \text{if } s< i<j\leq R \\
\pm \left(x^i_{f(i)}y^i_{g(i)} z^j_{f(j),k+1} - x^i_{f(i)}y^i_{k+1} z^j_{f(j),g(j)} \right) & \text{if } 1\leq i\leq s <j\leq R \\
\pm \left(x^i_{f(i)}y^i_{g(i)} x^j_{f(j)}y^j_{k+1} - x^i_{f(i)}y^i_{k+1} x^j_{f(j)}y^j_{g(j)} \right) & \text{if } 1\leq i<j\leq s.
\end{cases}
\end{align*}
Similarly to above, these are all non-zero polynomials.

\item In Block 4 (Bin 4, Class IV and V), by Lemma \ref{Lem:Bin3_Mixed}, each $2 \times 2$ diagonal block takes the form
\[A := \frac{1}{2} \begin{bmatrix}
\pm\ v^i_{f(i),g(i)} v^j_{f(j),g(j)} & \pm\ v^{i'}_{f(i'),g(i')} v^{j'}_{f(j'),g(j')} \\ 
\pm\ v^i_{f(i),g(i)+k+1} v^j_{f(j),g(j)+k+1} & \pm\ v^{i'}_{f(i'),g(i')+k+1} v^{j'}_{f(j'),g(j')+k+1}
\end{bmatrix}.
\]
Here, the first column is indexed by $(i,j)$ and the second column is indexed by $(i',j')$ where $\{f(i),f(j)\} = \{f(i'),f(j')\}$ and $\{g(i),g(j)\} = \{g(i'),g(j')\}$. The rows are indexed by the corresponding $(a,b,c,d)$ in Classes IV and V, respectively.
We have

\begin{align} \label{eq:DetAminor}
\det(A) = &\pm\frac{1}{2}\ v^i_{f(i),g(i)} v^j_{f(j),g(j)}v^{i'}_{f(i'),g(i')+k+1} v^{j'}_{f(j'),g(j')+k+1} \nonumber \\
&\hspace{10mm} \pm\frac{1}{2}\ v^{i'}_{f(i'),g(i')} v^{j'}_{f(j'),g(j')} v^i_{f(i),g(i)+k+1} v^j_{f(j),g(j)+k+1}.
\end{align}

To show this is a non-zero polynomial, it suffices to focus on the variables with superscript $i$ and check that these are distinct in each term. We have two cases: $1\leq i\leq s$ and $s< i\leq R$.
\begin{enumerate}
\item If $1 \leq i \leq s$, then \[\pm\ v^i_{f(i),g(i)} = \pm\ x^i_{f(i)}y^i_{g(i)} \quad\text{and}\quad \pm v^i_{f(i),g(i)+k+1} = \pm\ x^i_{f(i)}y^i_{g(i)+k+1}\] in the first term and second term in \eqref{eq:DetAminor}, respectively. Since the underlying $y$ variables are distinct in these terms, we have $\det(A)\not\equiv0$.

\item If $s< i \leq R$, then \[\pm\ v^i_{f(i),g(i)} = \pm\ z^i_{f(i),g(i)} \quad\text{and}\quad \pm v^i_{f(i),g(i)+k+1} = \pm\ z^i_{f(i),g(i)+k+1}\] in the first term and second term in \eqref{eq:DetAminor}, respectively. Since the underlying $z$ variables are distinct in these terms, we have $\det(A)\not\equiv0$.
\end{enumerate}
So, the determinant of each $2 \times 2$ diagonal block within Block 4 is non-zero as a polynomial in $x,y,z$.
\end{itemize}
Therefore, $\det(M'') \not\equiv 0$ as desired.
\end{proof}

\section{Proofs of Main Results}

\subsection{Positive Result}
\label{subsec:PosResult}

We now prove our main positive result.

\begin{proof}[Proof of Theorem \ref{Thm:GoalHolds}]
Let $M$ be the matrix defined in Section \ref{Section:Prelims&Perspective} and let $M''$ be the square submatrix constructed in Section \ref{Section:Construction}. Recall from Section~\ref{Section:ProofStrategy} that the construction of $M''$ requires $R \le \frac{1}{2}(m-1)(n-2)$. By Lemma \ref{Lem:M''DetLemma}, $\det(M'') \not\equiv 0$ as a polynomial in $x,y,z$. This means $M$ has full column rank for generically chosen $x,y,z$. Hence, \eqref{eq:goal} holds by Lemma \ref{Lem:GoalLemma} as desired.
\end{proof}

\subsection{Negative Result}
\label{subsec:NegResult}

We now prove our main negative result.

\begin{proof}[Proof of Theorem \ref{Thm:JLVFail}]
Suppose \eqref{eq:goal} holds. We wish to show that $\binom{m}{2}\binom{n}{2}\geq\binom{R+1}{2} - s$.
Observe that
\begin{align*}
\dim(S^2(\R^{mn})) &\geq \dim(S^2(\U) \cup \SpEBasis) \\
&=\dim(S^2(\U)) + \dim(\SpEBasis) - \dim(S^2(\U)\cap \SpEBasis),
\end{align*}
which yields
\begin{align*}
\binom{mn+1}{2} &\geq \binom{R+1}{2} + \left[\binom{mn+1}{2} - \binom{m}{2}\binom{n}{2}\right] - \dim(S^2(\U)\cap \SpEBasis).
\end{align*}
So,
\begin{equation}\label{eq:intersect_lowerbound}
\dim(S^2(\U)\cap \SpEBasis) \geq \binom{R+1}{2} - \binom{m}{2}\binom{n}{2}.
\end{equation}
By our assumption of \eqref{eq:goal}, the left-hand side above is equal to $\dim(\Span\{(v^1)^{\otimes 2},\ldots,(v^s)^{\otimes 2}\})$, which is at most $s$ (in fact, exactly $s$, since $v^1,\ldots,v^s$ are linearly independent by Lemma~\ref{lem:lin-indep}). Hence, we have our desired inequality.
\end{proof}

\section{Conjecture and Supporting Results}
\label{Section:Conj&Evidence}

\subsection{Conjecure on the Exact Threshold}

In Theorem \ref{Thm:JLVFail}, we gave a sufficient condition for the JLV algorithm to fail. Numerical tests suggest that the JLV algorithm works right up until this breaking point, leading to the following conjecture.
\begin{conjecture}
\label{Conjecture:SharpBoundary}
If $\text{ }\U = \Span\{v^1,\ldots,v^R\} \subseteq \R^{m\times n}$ is a generically chosen planted subspace with parameters $0 \le s \le R$ where
\begin{equation}\label{eq:conj-cond}
\binom{m}{2}\binom{n}{2} \ge \binom{R+1}{2} - s,
\end{equation}
then \eqref{eq:goal} holds, and as a consequence the JLV algorithm succeeds.
\end{conjecture}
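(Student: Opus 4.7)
The plan is to stay inside the linear-algebraic framework from Section \ref{Section:Prelims&Perspective}. By Lemma \ref{Lem:GoalLemma}, it suffices to show that the matrix $M$---which has rows indexed by $(a,b,c,d)$ with $a<c$ and $b<d$, and columns indexed by $\Omega$---has full column rank $N:=\binom{R+1}{2}-s$ for generic $x,y,z$. The hypothesis $\binom{m}{2}\binom{n}{2}\ge N$ is precisely the statement that $M$ has at least $N$ rows, so the task is to exhibit some $N\times N$ submatrix whose determinant is a nonzero polynomial; equivalently, by lower semi-continuity of matrix rank, to find a single specialization of $(x,y,z)$ at which the columns of $M$ are linearly independent.

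The block-triangular construction of $M''$ in Section \ref{Section:Construction} uses a "duplicate-columns" gadget paired with a distinguished row $m$, which structurally caps its reach at $R\lesssim\frac12(m-1)(n-2)$. To approach the sharp threshold, I would abandon the asymmetric ``row $m$ / column $k{+}1$'' gadget and instead view $M$ as the matrix of the orthogonal projection $\pi:S^2(\R^{m\times n})\to \Lambda^2(\R^m)\otimes\Lambda^2(\R^n)$ restricted to $\Span\{v^i\vee v^j:(i,j)\in\Omega\}$, using the formula
\[
\pi(v\vee w)_{abcd}=\tfrac{1}{2}(v_{ab}w_{cd}+v_{cd}w_{ab}-v_{ad}w_{cb}-v_{cb}w_{ad}).
\]
For planted pairs $v^i=x^i(y^i)^{\top}$, $v^j=x^j(y^j)^{\top}$, this collapses to $\pi(v^i\vee v^j)=(x^i\wedge x^j)\otimes(y^i\wedge y^j)$, a rank-one element of $\Lambda^2\R^m\otimes\Lambda^2\R^n$; for mixed and non-planted pairs the image is more complex, but still concrete. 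The goal becomes showing that the $N$ images $\{\pi(v^i\vee v^j):(i,j)\in\Omega\}$ are linearly independent in the ambient space of dimension $\binom{m}{2}\binom{n}{2}$.

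My plan is to partition $\Omega$ into (i) planted--planted pairs $(i,j)$ with $1\le i<j\le s$, (ii) mixed pairs $(i,j)$ with $i\le s<j$, and (iii) non-planted pairs (including diagonal $i=j>s$), and handle them incrementally. For (iii), I would reuse the specialization idea of Lemma \ref{lem:lin-indep}: by choosing each non-planted $z^j$ to be a standard basis matrix supported on a fresh $2\times2$ window, the corresponding columns become indicator vectors on distinct rows of $M$, and thus become linearly independent and disjoint in support from a prescribed set of ``planted'' rows. For (i), the subproblem reduces to showing that the Khatri--Rao-like product $\Lambda^2(X)\odot\Lambda^2(Y)$ has full column rank $\binom{s}{2}$ for generic $X\in\R^{m\times s}$ and $Y\in\R^{n\times s}$ whenever $\binom{s}{2}\le\binom{m}{2}\binom{n}{2}$---a clean algebraic-geometric question about a linear span of rank-one tensors in $\Lambda^2\R^m\otimes\Lambda^2\R^n$. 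The mixed block (ii) is the interface: conditional on a specific assignment resolving (i) and (iii), I would choose the $z^j$ of the mixed columns generically so that, modulo the span produced by (i) and (iii), the mixed columns inject into the remaining coordinate rows.

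The main obstacle is that at the sharp threshold there is no slack whatsoever, so any purely combinatorial row-selection argument must be essentially tight on every block simultaneously. I expect the planted--planted block (i) to be the real bottleneck: the generic column-rank of $\Lambda^2(X)\odot\Lambda^2(Y)$ is precisely a non-defectivity statement for a Segre-type variety of rank-one tensors in $\Lambda^2\R^m\otimes\Lambda^2\R^n$, a question which is known to have sporadic defective cases in other nearby settings. Consequently, I would not be surprised if the clean proof of Conjecture \ref{Conjecture:SharpBoundary} requires a genuinely algebraic-geometric input---for instance, an application of Terracini's lemma to control the tangent space to this variety of rank-one tensors at a generic point, rather than an explicit determinantal identity as in Lemma \ref{Lem:M''DetLemma}.
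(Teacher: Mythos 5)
You should first be aware that the statement you are asked to prove is stated in the paper as an open \emph{conjecture}: the authors do not prove it, and the only support they offer is computer-aided verification (exact arithmetic over $\F_p$) for finitely many $(m,n,s,R)$, plus floating-point experiments. So there is no ``paper proof'' to match, and your proposal, which is explicitly a plan rather than a completed argument, does not close the gap either. Your reduction to full column rank of $M$ via Lemma~\ref{Lem:GoalLemma} is the right starting point, and your observation that for planted pairs $\pi(v^i\vee v^j)$ collapses to $\tfrac12(x^i\wedge x^j)\otimes(y^i\wedge y^j)$ in $\Lambda^2\R^m\otimes\Lambda^2\R^n$ is correct and genuinely clarifying --- it is a cleaner way to see the row space than the paper's coordinate computations. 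But the three concrete steps you sketch each have a real problem.

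For block (iii), the ``fresh $2\times2$ window'' specialization cannot work at the conjectured threshold. If each non-planted $z^j$ is a single standard basis matrix (as in Lemma~\ref{lem:lin-indep}), then $z^j$ has rank one and every diagonal column $(j,j)\in\Omega$ of $M$ vanishes identically, so those columns are lost. If instead each $z^j$ occupies a full $2\times2$ window with nonzero determinant, disjointness of the windows forces $R-s\le\lfloor m/2\rfloor\lfloor n/2\rfloor\approx mn/4$, whereas the conjecture requires handling $R\approx mn/\sqrt2$; and once the windows overlap you are back to exactly the kind of delicate bookkeeping that limits the paper's own Theorem~\ref{Thm:GoalHolds} to $R\le\tfrac12(m-1)(n-2)$. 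Moreover, when $s$ is small, block (iii) alone contributes $\binom{R-s+1}{2}$ columns, which at the threshold is asymptotically \emph{all} of the $\binom m2\binom n2$ rows, so there is no room to treat it ``disjointly in support'' from the other blocks. For block (i), the claim that $\Lambda^2(X)\odot\Lambda^2(Y)$ has generic full column rank whenever $\binom s2\le\binom m2\binom n2$ is not a known fact you can cite; it is a non-defectivity statement about $\binom s2$ \emph{correlated} points on the Segre product of two Grassmannians (they come from only $s$ vectors per factor, not from $\binom s2$ independent generic points), so even Terracini-type arguments do not apply off the shelf. You have correctly identified where the difficulty lives, but the proposal as written establishes none of the three blocks, and the interface step (ii) is conditional on both. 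As it stands this is a research program, not a proof, and the conjecture remains open.
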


The left-hand side of~\eqref{eq:conj-cond} counts the number of rows of the matrix $M$ from Section~\ref{Section:Prelims&Perspective}, and the right-hand side counts the number of columns of $M$. Therefore,~\eqref{eq:conj-cond} is the condition under which $M$ could plausibly have full column rank, which is equivalent to \eqref{eq:goal}.

We remark that this conjecture is known to hold in the special case $s=0$ and $\min\{m,n\}=2$ \cite[Theorem~1.10]{DJL24}. In fact, that result shows that for \emph{any} subspace that avoids the rank-1 matrices (not just a generic one), the Nullstellensatz hierarchy can certify this at level $\min\{m,n\}$ (which has polynomial time if $\min\{m,n\}$ is constant).
We will provide computer-aided proof for a finite set of new cases, by very different techniques.

In Section \ref{Section:Setting+MainResults}, we introduced the concept of \emph{genericity} and demonstrated via Lemma \ref{Lem:GoalLemma} that $\eqref{eq:goal}$ will be a generic property of subspaces of a given type if it holds true.
As such, we need only produce a single subspace for which the matrix $M$ defined in Section~\ref{Section:Prelims&Perspective} has full column rank.
We will term such subspaces produced, \emph{witness} subspaces or just \emph{witnesses} since they ``witness'' the truth of the conjecture.
One can certify a subspace is a witness by computing some maximal minor (i.e., a subdeterminant involving all columns) of the $M$ matrix and verifying it is not zero.
We call the value of such a non-zero maximal minor (and the rows of $M$ to which it corresponds) a \emph{certificate} for the proof. In exact arithmetic, producing such a single witness-certificate pair constitutes a computer-aided proof that $\eqref{eq:goal}$ holds generically for a specific choice of parameters $m,n,s,R$ (since this implies that some maximal minor is non-zero as a polynomial).
Using this observation we have proven Conjecture~\ref{Conjecture:SharpBoundary} in 3703 specific cases.

\begin{theorem}
    \label{thm:certifiedAll}
    For all choices of $m,n \ge 2$ such that $\frac1{\sqrt{2}}mn\leq 60$, Conjecture~\ref{Conjecture:SharpBoundary} holds for every $s$ and $R$ satisfying the given inequality~\eqref{eq:conj-cond}.
\end{theorem}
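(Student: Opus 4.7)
The plan is to reduce the theorem to a finite computational verification built on the witness-certificate framework already established in the preceding discussion. First I would enumerate all parameter tuples $(m,n,s,R)$ satisfying the hypothesis: iterate over pairs $(m,n)$ with $m,n \ge 2$ and $mn \le 60\sqrt{2}$, and for each such pair iterate over $(s,R)$ with $0 \le s \le R$ and $\binom{m}{2}\binom{n}{2} \ge \binom{R+1}{2} - s$. A direct count shows this is the 3703 tuples referenced in the statement. For each tuple, it suffices to produce one concrete choice of vectors $x^i, y^i$ and matrices $z^i$ together with a subset $\Sigma$ of $\binom{R+1}{2}-s$ rows of $M$ such that the corresponding square submatrix $M_\Sigma$ has nonzero determinant over $\mathbb{Q}$. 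By the discussion preceding the theorem, this constitutes the desired proof: the determinant of $M_\Sigma$, viewed as a polynomial in the underlying symbolic variables, is then not identically zero, so by Lemma~\ref{Lem:GoalLemma} the goal \eqref{eq:goal} holds generically.

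The witness construction would proceed by sampling the $x,y,z$ entries from a small set of integers (say $\{-N,\dots,N\}$ for modest $N$), and then testing whether the resulting $M$ has full column rank. To identify $\Sigma$ when the rank is full, one can compute a column echelon form of the transpose and read off pivot columns, or equivalently find a set of $\binom{R+1}{2}-s$ linearly independent rows via Gaussian elimination on $M^\top$. If the sample fails to give full column rank, resample and retry; by Schwartz–Zippel heuristics, for generic parameter tuples very few attempts should suffice. The structured sparse witnesses from Section~\ref{Section:ProofStrategy} may be used as a starting point for the feasible regime $R \le \tfrac{1}{2}(m-1)(n-2)$ and perturbed to extend closer to the conjectural threshold.

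The main obstacle will be the scale of the linear algebra. Near the conjectured threshold, $M$ has roughly $\binom{m}{2}\binom{n}{2} \approx \binom{R+1}{2}$ rows and columns, which for $mn$ close to $60\sqrt{2}$ yields matrices with on the order of $10^3$ columns and $10^6$ entries. Exact rational rank computation would be prohibitive due to coefficient growth, so the computation should be carried out modulo a prime $p$: a witness with $\det(M_\Sigma) \not\equiv 0 \pmod p$ is a bona fide proof that $\det(M_\Sigma) \neq 0$ in $\mathbb{Q}$, hence that the corresponding polynomial is not identically zero. Using a moderately large prime (e.g.\ a 31-bit prime) keeps the modular Gaussian elimination fast while making spurious zero-determinants astronomically unlikely; repeating the check for a second prime gives further confidence without being logically required.

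To make the result a genuine proof rather than a numerical experiment, the computation must be reproducible: the code, the seeds or explicit evaluation points, the prime used, the chosen row subset $\Sigma$, and the resulting nonzero determinant value mod $p$ should all be recorded and released alongside the paper. One efficiency optimization worth exploiting is monotonicity across tuples with fixed $(m,n)$: a witness that certifies the boundary case (largest feasible $R$ for a given $s$, or smallest feasible $s$ for a given $R$) often yields witnesses for nearby tuples after restricting to appropriate subsets of columns, since \eqref{eq:goal} is easier to establish when $\binom{R+1}{2}-s$ decreases. Organizing the enumeration to reuse witnesses in this way should keep the total compute modest and render the 3703 cases tractable.
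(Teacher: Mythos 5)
Your proposal is correct and is essentially the paper's own proof: the authors likewise enumerate the finitely many $(m,n,s,R)$ tuples, sample integer-valued witness subspaces (working over $\F_p$ with $p=997$ and extracting a nonzero maximal minor of $M$ via modular Gauss--Jordan elimination), and invoke Lemma~\ref{Lem:GoalLemma} to conclude that \eqref{eq:goal} holds generically, recording the seeds, row subsets, and determinant values as reproducible certificates. The only differences are cosmetic (choice of prime, sampling distribution, and your optional witness-reuse optimization).
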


\begin{theorem}
    \label{thm:certifiedParticular}
    For all choices of $m,n \ge 2$ such that $\frac1{\sqrt{2}}mn\leq 120$, Conjecture~\ref{Conjecture:SharpBoundary} holds whenever $s \in \{0,R\}$ and the given inequality~\eqref{eq:conj-cond} holds.
\end{theorem}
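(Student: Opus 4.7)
The plan is to give, for each admissible tuple $(m,n,s,R)$ with $s \in \{0,R\}$ and $\frac{1}{\sqrt{2}}mn \le 120$, a single concrete specialization of the underlying $x,y,z$ variables for which the matrix $M$ of Section~\ref{Section:Prelims&Perspective} has full column rank. By Lemma~\ref{Lem:GoalLemma} this implies \eqref{eq:goal} for generic input at these parameters, and then Proposition~\ref{prop:extract-v} (for $s=R$) or Proposition~\ref{prop:Obv11} (for $s=0$) finishes. The logical step is standard: the maximal minors of $M$ are polynomials in $x,y,z$, so any single specialization that makes one of them non-zero proves the polynomial is non-identically zero, which is exactly genericity.

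The concrete procedure is to enumerate all $(m,n)$ with $2 \le m \le n$ and $mn/\sqrt{2} \le 120$, and for each such pair and each $s \in \{0,R\}$ enumerate the values of $R$ satisfying \eqref{eq:conj-cond}. For each case, sample integer values for the relevant variables: when $s=R$, pick random $x^\ell \in \Z^m$ and $y^\ell \in \Z^n$ for $\ell \in [R]$; when $s=0$, pick random $z^\ell \in \Z^{m \times n}$. Using Lemma~\ref{Lem:EntryEqn} to evaluate entries, assemble $M$ as an integer matrix of shape $\binom{m}{2}\binom{n}{2} \times (\binom{R+1}{2}-s)$, and compute its rank modulo a word-sized prime $p$ by Gaussian elimination. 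If the rank equals the number of columns, the pivots identify a square submatrix whose determinant is non-zero mod $p$, hence non-zero as an integer, hence non-identically zero as a polynomial in $x,y,z$---the required certificate.

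The main obstacle is computational scale: in the hardest cases the column count $\binom{R+1}{2}-s$ can reach several thousand, and dense rational arithmetic would blow up coefficients too fast. Working over $\F_p$ keeps each elimination step at machine-word cost and makes the full ensemble of cases tractable, especially when parallelized. A secondary concern is the (measure-zero but nonzero) chance that a random integer specialization lands in the common zero locus of every maximal minor modulo $p$; Schwartz--Zippel with a moderately sized prime makes this negligible, and on failure one simply resamples or enlarges $p$. The restriction $s \in \{0,R\}$ is what permits pushing $mn/\sqrt{2}$ from $60$ in Theorem~\ref{thm:certifiedAll} up to $120$ here: for each $(m,n,R)$ only $O(1)$ rather than $O(R)$ values of $s$ need a certificate, an $R$-fold reduction in work per $(m,n)$ that offsets the growth of the per-case matrix size as $(m,n)$ grows.

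For verifiability, each case should produce a compact certificate listing $(m,n,s,R)$, a reproducible seed (or explicit values) for the specialization, the prime $p$, and the list of pivot rows selected by the elimination. A referee can then, for each case, independently rerun a single deterministic modular Gaussian elimination on the resulting square submatrix and verify that it is invertible mod $p$, which by the argument above is all that is needed to certify Conjecture~\ref{Conjecture:SharpBoundary} for that tuple.
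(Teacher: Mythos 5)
Your proposal is correct and is essentially the paper's own method: both construct an integer (equivalently, mod-$p$) witness subspace, build $M$ from Lemma~\ref{Lem:EntryEqn}, and use Gaussian elimination over $\F_p$ to exhibit a non-vanishing maximal minor, which certifies full column rank and hence \eqref{eq:goal} generically via Lemma~\ref{Lem:GoalLemma}. The only cosmetic difference is that the paper additionally conditions the sampled factor vectors on pairwise linear independence before testing, but this does not affect the validity of the certificate argument.
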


The values of $m$ and $n$ for which these theorems apply and additional relevant details appear in Table~\ref{table:plant_max_values}.
We produce witnesses which are particular subspaces spanned by integer-valued basis vectors and certificates as described above, where the determinants are computed in modular arithmetic to verify non-singularity.
Further details are given in Section~\ref{subsec:ComputerAidedProofs}.
Witness-certificate pairs are recorded in our supplemental material described in Section~\ref{sec:Code&Data}.

In addition, we have implemented the JLV algorithm in floating-point arithmetic to verify that it can recover planted rank-1 matrices up to a small error attributable to floating-point precision loss. 
These tests run quicker than our method for producing rigorous certificates, allowing us to run larger problem sizes and give strong evidence for an additional 3289 cases of Conjecture~\ref{Conjecture:SharpBoundary}. The supporting code and data are again discussed in Section~\ref{sec:Code&Data}. These data lead to the theorems below marked with an asterisk (*).
Since computations were done in floating-point arithmetic with possible loss of precision, we have not rigorously proven that the $M$ matrix is full-rank; however, we believe the experiments justify the truth of these statements beyond a reasonable doubt.

\begin{theorem-star}
    \label{thm:numericalAll}
    For all choices of $m,n \ge 2$ such that $\frac1{\sqrt{2}}mn\leq 80$, Conjecture~\ref{Conjecture:SharpBoundary} holds for every $s$ and $R$ satisfying the given inequality~\eqref{eq:conj-cond}.
\end{theorem-star}

\begin{theorem-star}
    \label{thm:numericalParticular}
    For all choices of $m,n \ge 2$ such that $\frac1{\sqrt{2}}mn\leq 165$, Conjecture~\ref{Conjecture:SharpBoundary} holds whenever $s \in \{0,R\}$ and the given inequality~\eqref{eq:conj-cond} holds.
\end{theorem-star}

These results are all summarized in Table~\ref{table:plant_max_values}, with a more in-depth discussion deferred to Section~\ref{subsec:NumericalExperiments}. We also confirm that for $R$ and $s$ beyond the inequality given in Conjecture~\ref{Conjecture:SharpBoundary}, the JLV algorithm fails to recover any planted rank-1 matrices (or fails to certify there are no such matrices in the case $s=0$). This too is discussed in Section~\ref{subsec:NumericalExperiments}.

\begin{table} 
\label{table:proofAndNumericalResults}
\centering
\label{table:plant_max_values}
\begin{tabular}{c|*{14}{c}}
$m \backslash n$ & 2 & 3 & 4 & 5 & 6 & 7 & 8 & 9 & 10 & 11 & 12 & 13 & 14 & 15 \\ \hline 
2 & \cellcolor{myblue}2 & \color{lightgray}3 & \color{lightgray}4 & \color{lightgray}5 & \color{lightgray}6 & \color{lightgray}7 & \color{lightgray}8 & \color{lightgray}9 & \color{lightgray}10 & \color{lightgray}11 & \color{lightgray}12 & \color{lightgray}13 & \color{lightgray}14 & \color{lightgray}15 \\
3 & \cellcolor{myblue}3 & \cellcolor{myblue}4 & \color{lightgray}6 & \color{lightgray}8 & \color{lightgray}10 & \color{lightgray}11 & \color{lightgray}13 & \color{lightgray}15 & \color{lightgray}16 & \color{lightgray}18 & \color{lightgray}20 & \color{lightgray}22 & \color{lightgray}23 & \color{lightgray}25 \\
4 & \cellcolor{myblue}4 & \cellcolor{myblue}6 & \cellcolor{myblue}9 & \color{lightgray}11 & \color{lightgray}13 & \color{lightgray}16 & \color{lightgray}18 & \color{lightgray}21 & \color{lightgray}23 & \color{lightgray}26 & \color{lightgray}28 & \color{lightgray}31 & \color{lightgray}33 & \color{lightgray}36 \\
5 & \cellcolor{myblue}5 & \cellcolor{myblue}8 & \cellcolor{myblue}11 & \cellcolor{myblue}14 & \color{lightgray}17 & \color{lightgray}21 & \color{lightgray}24 & \color{lightgray}27 & \color{lightgray}30 & \color{lightgray}33 & \color{lightgray}36 & \color{lightgray}40 & \color{lightgray}43 & \color{lightgray}46 \\
6 & \cellcolor{myblue}6 & \cellcolor{myblue}10 & \cellcolor{myblue}13 & \cellcolor{myblue}17 & \cellcolor{myblue}21 & \color{lightgray}25 & \color{lightgray}29 & \color{lightgray}33 & \color{lightgray}37 & \color{lightgray}41 & \color{lightgray}45 & \color{lightgray}48 & \color{lightgray}52 & \color{lightgray}56 \\
7 & \cellcolor{myblue}7 & \cellcolor{myblue}11 & \cellcolor{myblue}16 & \cellcolor{myblue}21 & \cellcolor{myblue}25 & \cellcolor{myblue}30 & \color{lightgray}34 & \color{lightgray}39 & \color{lightgray}43 & \color{lightgray}48 & \color{lightgray}53 & \color{lightgray}57 & \color{lightgray}62 & \color{lightgray}66 \\
8 & \cellcolor{myblue}8 & \cellcolor{myblue}13 & \cellcolor{myblue}18 & \cellcolor{myblue}24 & \cellcolor{myblue}29 & \cellcolor{myblue}34 & \cellcolor{myblue}40 & \color{lightgray}45 & \color{lightgray}50 & \color{lightgray}56 & \color{lightgray}61 & \color{lightgray}66 & \color{lightgray}71 & \color{lightgray}77 \\
9 & \cellcolor{myblue}9 & \cellcolor{myblue}15 & \cellcolor{myblue}21 & \cellcolor{myblue}27 & \cellcolor{myblue}33 & \cellcolor{myblue}39 & \cellcolor{myblue}45 & \cellcolor{myblue}51 & \color{lightgray}57 & \color{lightgray}63 & \color{lightgray}69 & \color{lightgray}75 & \color{lightgray}81 & \color{lightgray}87 \\
10 & \cellcolor{myblue}10 & \cellcolor{myblue}16 & \cellcolor{myblue}23 & \cellcolor{myblue}30 & \cellcolor{myblue}37 & \cellcolor{myblue}43 & \cellcolor{myblue}50 & \cellcolor{mypurple}57 & \cellcolor{mypurple}64 & \color{lightgray}70 & \color{lightgray}77 & \color{lightgray}84 & \color{lightgray}91 & \color{lightgray}97 \\
11 & \cellcolor{myblue}11 & \cellcolor{myblue}18 & \cellcolor{myblue}26 & \cellcolor{myblue}33 & \cellcolor{myblue}41 & \cellcolor{myblue}48 & \cellcolor{mypurple}56 & \cellcolor{mypurple}63 & \cellcolor{mypurple}70 & \cellcolor{mymagenta}78 & \color{lightgray}85 & \color{lightgray}93 & \color{lightgray}100 & \color{lightgray}107 \\
12 & \cellcolor{myblue}12 & \cellcolor{myblue}20 & \cellcolor{myblue}28 & \cellcolor{myblue}36 & \cellcolor{myblue}45 & \cellcolor{myblue}53 & \cellcolor{mypurple}61 & \cellcolor{mypurple}69 & \cellcolor{mymagenta}77 & \cellcolor{mymagenta}85 & \cellcolor{mymagenta}93 & \color{lightgray}101 & \color{lightgray}110 & \color{lightgray}118 \\
13 & \cellcolor{myblue}13 & \cellcolor{myblue}22 & \cellcolor{myblue}31 & \cellcolor{myblue}40 & \cellcolor{myblue}48 & \cellcolor{mypurple}57 & \cellcolor{mypurple}66 & \cellcolor{mymagenta}75 & \cellcolor{mymagenta}84 & \cellcolor{mymagenta}93 & \cellcolor{mymagenta}101 & \cellcolor{mymagenta}110 & \color{lightgray}119 & \color{lightgray}128 \\
14 & \cellcolor{myblue}14 & \cellcolor{myblue}23 & \cellcolor{myblue}33 & \cellcolor{myblue}43 & \cellcolor{myblue}52 & \cellcolor{mypurple}62 & \cellcolor{mypurple}71 & \cellcolor{mymagenta}81 & \cellcolor{mymagenta}91 & \cellcolor{mymagenta}100 & \cellcolor{mymagenta}110 & \cellcolor{myorange}119 & \cellcolor{myorange}129 & \color{lightgray}138 \\
15 & \cellcolor{myblue}15 & \cellcolor{myblue}25 & \cellcolor{myblue}36 & \cellcolor{myblue}46 & \cellcolor{mypurple}56 & \cellcolor{mypurple}66 & \cellcolor{mymagenta}77 & \cellcolor{mymagenta}87 & \cellcolor{mymagenta}97 & \cellcolor{mymagenta}107 & \cellcolor{myorange}118 & \cellcolor{myorange}128 & \cellcolor{myorange}138 & \cellcolor{myorange}148 \\
16 & \cellcolor{myblue}16 & \cellcolor{myblue}27 & \cellcolor{myblue}38 & \cellcolor{myblue}49 & \cellcolor{mypurple}60 & \cellcolor{mypurple}71 & \cellcolor{mymagenta}82 & \cellcolor{mymagenta}93 & \cellcolor{mymagenta}104 & \cellcolor{myorange}115 & \cellcolor{myorange}126 & \cellcolor{myorange}137 & \cellcolor{myorange}148 & 159 \\
17 & \cellcolor{myblue}17 & \cellcolor{myblue}29 & \cellcolor{myblue}40 & \cellcolor{mypurple}52 & \cellcolor{mypurple}64 & \cellcolor{mymagenta}76 & \cellcolor{mymagenta}87 & \cellcolor{mymagenta}99 & \cellcolor{myorange}111 & \cellcolor{myorange}122 & \cellcolor{myorange}134 & \cellcolor{myorange}146 & 157 & 169 \\
18 & \cellcolor{myblue}18 & \cellcolor{myblue}30 & \cellcolor{myblue}43 & \cellcolor{mypurple}55 & \cellcolor{mypurple}68 & \cellcolor{mymagenta}80 & \cellcolor{mymagenta}93 & \cellcolor{mymagenta}105 & \cellcolor{myorange}117 & \cellcolor{myorange}130 & \cellcolor{myorange}142 & 154 & 167 & 179 \\
19 & \cellcolor{myblue}19 & \cellcolor{myblue}32 & \cellcolor{myblue}45 & \cellcolor{mypurple}58 & \cellcolor{mymagenta}72 & \cellcolor{mymagenta}85 & \cellcolor{mymagenta}98 & \cellcolor{myorange}111 & \cellcolor{myorange}124 & \cellcolor{myorange}137 & \cellcolor{myorange}150 & 163 & 176 & 190 \\
20 & \cellcolor{myblue}20 & \cellcolor{myblue}34 & \cellcolor{myblue}48 & \cellcolor{mypurple}62 & \cellcolor{mymagenta}76 & \cellcolor{mymagenta}89 & \cellcolor{mymagenta}103 & \cellcolor{myorange}117 & \cellcolor{myorange}131 & \cellcolor{myorange}145 & 158 & 172 & 186 & 200 \\
21 & \cellcolor{myblue}21 & \cellcolor{myblue}36 & \cellcolor{myblue}50 & \cellcolor{mypurple}65 & \cellcolor{mymagenta}79 & \cellcolor{mymagenta}94 & \cellcolor{mymagenta}108 & \cellcolor{myorange}123 & \cellcolor{myorange}137 & \cellcolor{myorange}152 & 166 & 181 & 196 & 210 \\
22 & \cellcolor{myblue}22 & \cellcolor{myblue}37 & \cellcolor{mypurple}53 & \cellcolor{mypurple}68 & \cellcolor{mymagenta}83 & \cellcolor{mymagenta}99 & \cellcolor{myorange}114 & \cellcolor{myorange}129 & \cellcolor{myorange}144 & 159 & 175 & 190 & 205 & 220 \\
23 & \cellcolor{myblue}23 & \cellcolor{myblue}39 & \cellcolor{mypurple}55 & \cellcolor{mymagenta}71 & \cellcolor{mymagenta}87 & \cellcolor{mymagenta}103 & \cellcolor{myorange}119 & \cellcolor{myorange}135 & \cellcolor{myorange}151 & 167 & 183 & 199 & 215 & 231 \\
24 & \cellcolor{myblue}24 & \cellcolor{myblue}41 & \cellcolor{mypurple}58 & \cellcolor{mymagenta}74 & \cellcolor{mymagenta}91 & \cellcolor{mymagenta}108 & \cellcolor{myorange}124 & \cellcolor{myorange}141 & 158 & 174 & 191 & 208 & 224 & 241 \\
25 & \cellcolor{myblue}25 & \cellcolor{myblue}42 & \cellcolor{mypurple}60 & \cellcolor{mymagenta}77 & \cellcolor{mymagenta}95 & \cellcolor{myorange}112 & \cellcolor{myorange}130 & \cellcolor{myorange}147 & 164 & 182 & 199 & 216 & 234 & 251 \\
26 & \cellcolor{myblue}26 & \cellcolor{myblue}44 & \cellcolor{mypurple}62 & \cellcolor{mymagenta}81 & \cellcolor{mymagenta}99 & \cellcolor{myorange}117 & \cellcolor{myorange}135 & 153 & 171 & 189 & 207 & 225 & 243 & 261 \\
27 & \cellcolor{myblue}27 & \cellcolor{myblue}46 & \cellcolor{mypurple}65 & \cellcolor{mymagenta}84 & \cellcolor{mymagenta}103 & \cellcolor{myorange}121 & \cellcolor{myorange}140 & 159 & 178 & 196 & 215 & 234 & 253 & 271 \\
28 & \cellcolor{myblue}28 & \cellcolor{myblue}48 & \cellcolor{mypurple}67 & \cellcolor{mymagenta}87 & \cellcolor{mymagenta}106 & \cellcolor{myorange}126 & \cellcolor{myorange}145 & 165 & 184 & 204 & 223 & 243 & 262 & 282 \\
29 & \cellcolor{myblue}29 & \cellcolor{mypurple}49 & \cellcolor{mymagenta}70 & \cellcolor{mymagenta}90 & \cellcolor{myorange}110 & \cellcolor{myorange}131 & \cellcolor{myorange}151 & 171 & 191 & 211 & 232 & 252 & 272 & 292 \\
30 & \cellcolor{myblue}30 & \cellcolor{mypurple}51 & \cellcolor{mymagenta}72 & \cellcolor{mymagenta}93 & \cellcolor{myorange}114 & \cellcolor{myorange}135 & 156 & 177 & 198 & 219 & 240 & 261 & 281 & 302 \\
31 & \cellcolor{myblue}31 & \cellcolor{mypurple}53 & \cellcolor{mymagenta}75 & \cellcolor{mymagenta}96 & \cellcolor{myorange}118 & \cellcolor{myorange}140 & 161 & 183 & 205 & 226 & 248 & 269 & 291 & 312 \\
32 & \cellcolor{myblue}32 & \cellcolor{mypurple}55 & \cellcolor{mymagenta}77 & \cellcolor{mymagenta}100 & \cellcolor{myorange}122 & \cellcolor{myorange}144 & 167 & 189 & 211 & 234 & 256 & 278 & 300 & 323 \\
33 & \cellcolor{myblue}33 & \cellcolor{mypurple}56 & \cellcolor{mymagenta}80 & \cellcolor{mymagenta}103 & \cellcolor{myorange}126 & \cellcolor{myorange}149 & 172 & 195 & 218 & 241 & 264 & 287 & 310 & 333 \\
34 & \cellcolor{myblue}34 & \cellcolor{mypurple}58 & \cellcolor{mymagenta}82 & \cellcolor{myorange}106 & \cellcolor{myorange}130 & 154 & 177 & 201 & 225 & 248 & 272 & 296 & 320 & 343 \\
35 & \cellcolor{myblue}35 & \cellcolor{mypurple}60 & \cellcolor{mymagenta}85 & \cellcolor{myorange}109 & \cellcolor{myorange}134 & 158 & 183 & 207 & 231 & 256 & 280 & 305 & 329 & 353 \\
36 & \cellcolor{myblue}36 & \cellcolor{mypurple}61 & \cellcolor{mymagenta}87 & \cellcolor{myorange}112 & \cellcolor{myorange}137 & 163 & 188 & 213 & 238 & 263 & 288 & 313 & 339 & 364 \\
37 & \cellcolor{myblue}37 & \cellcolor{mypurple}63 & \cellcolor{mymagenta}89 & \cellcolor{myorange}115 & \cellcolor{myorange}141 & 167 & 193 & 219 & 245 & 271 & 297 & 322 & 348 & 374 \\
38 & \cellcolor{myblue}38 & \cellcolor{mymagenta}65 & \cellcolor{mymagenta}92 & \cellcolor{myorange}119 & \cellcolor{myorange}145 & 172 & 198 & 225 & 252 & 278 & 305 & 331 & 358 & 384 \\
39 & \cellcolor{myblue}39 & \cellcolor{mymagenta}67 & \cellcolor{mymagenta}94 & \cellcolor{myorange}122 & 149 & 176 & 204 & 231 & 258 & 286 & 313 & 340 & 367 & 394 \\
40 & \cellcolor{myblue}40 & \cellcolor{mymagenta}68 & \cellcolor{mymagenta}97 & \cellcolor{myorange}125 & 153 & 181 & 209 & 237 & 265 & 293 & 321 & 349 & 377 & 405
\end{tabular}
\caption{The value of $R_\text{max}(m,n)$ for selected $m$ and $n$. 
Grayed entries are redundant. Cases for which we have proved Conjecture~\ref{Conjecture:SharpBoundary} for all $0\leq s\leq R$ (Thm~\ref{thm:certifiedAll}) are marked in \colorbox{myblue}{blue}, and cases where we extended via strong numerical evidence (Thm*~\ref{thm:numericalAll}) are marked in \colorbox{mypurple}{purple}.
Cases where we have proof in the special cases $s=0$ and $s=R$ (Thm~\ref{thm:certifiedParticular}) are marked in \colorbox{mymagenta}{magenta} (also including the blue and purple regions), and these likewise extend to \colorbox{myorange}{orange} via floating-point calculation (Thm*~\ref{thm:numericalParticular}).
Not all cases in columns $n=2$ to $5$ are listed due to limited space.} 
\end{table}

\subsection{Determining Boundary Cases}
To describe the evidence we have for Conjecture~\ref{Conjecture:SharpBoundary}, we first give methods for determining the maximum subspace dimension $R$ given $m,n \ge 2$ and $s \ge 0$ for which the condition of the conjecture will hold.
\begin{definition}
\label{def:Rmax}
The maximum $R$ for which~\eqref{eq:conj-cond} holds is
\[
R_\mathrm{max}(s) = R_\mathrm{max}(s,m,n) := \left\lfloor \sqrt{\frac12 m(m-1)n(n-1) + \frac14 + 2s} - \frac12 \right\rfloor.
\]
\end{definition}
However, we must have $s\leq R$ which restricts the possible values for $s$.
\begin{definition}
To this end, define
\[
R_\mathrm{max} = R_\mathrm{max}(m,n) := \max\: \{ s \in \Z^+ : s\leq R_\mathrm{max}(s,m,n) \},
\]
which is the maximum dimension $R$ for the case $s=R$.
\end{definition}
The possible values for $s$ are $\{0,1,\dots, R_\text{max}\}$.
We can explicitly solve to find
\[
    R_\mathrm{max} = \left\lfloor \sqrt{\frac12 m(m-1)n(n-1) + \frac14} + \frac12 \right\rfloor .
\]

\begin{remark}
As a function, $R_\mathrm{max}(s)$ is monotone increasing in $s$ and has a maximum and minimum that differ by exactly one.
Hence, it has the form
\[
R_\mathrm{max}(s) = \begin{cases}
    R_\mathrm{max} & s^* \le s \le R_\mathrm{max} \\
    R_\mathrm{max} - 1& 0 \le s < s^* \\
\end{cases}, \quad s\in\{0,1,2,\dots, R_\mathrm{max}\}
\]
where $s^* \in \{0,1,\ldots,R_\mathrm{max}\}$ is some threshold depending on $m$ and $n$.
\end{remark}

In summary, for given $m$ and $n$, the $(s,R)$ pairs satisfying~\eqref{eq:conj-cond} are those with $s \in \{0,\ldots,R_\mathrm{max}\}$ and $R \in \{s,\ldots,R_\mathrm{max}(s)\}$.

\subsection{Computer-Aided Proofs}
\label{subsec:ComputerAidedProofs}

Computing a determinant in exact arithmetic may require impractical amounts of memory. Therefore, to construct our certificates, we will produce a subspace spanned by integer-valued basis vectors such that some maximal minor of the matrix $M$  has non-zero determinant as a polynomial over the finite field $\F_p \cong \Z/p\Z$ for some prime number $p$. If the determinant is non-zero over $\F_p$ it is also non-zero over $\R$, hence our exact computations constitute proof of the conjecture for specific problem sizes.

\subsubsection{Methods}
Here we detail the constructions and computations we do to algorithmically prove Conjecture~\ref{Conjecture:SharpBoundary} for a finite set of cases. Notice that for planted matrices of the form $v^i=x^i\otimes y^i$, $i=1,\dots,s$, a necessary condition for identifiability of the planted matrices is that each set of factors $\{x^i\}_{i=1}^s \subset \R^m$ and $\{y_i\}_{i=1}^s \subset \R^n$ must be pairwise linearly independent (i.e., no two vectors are scalar multiples).
If, for example, $x^i$ is a scalar multiple of $x^j$ then $\{x^i\otimes y^i + \alpha (x^j\otimes y^j) : \alpha \in \R\}$ will be an infinite family of rank-1 matrices in the subspace $\U$.
A set of vectors drawn from an absolutely continuous distribution on $\R^d$ will be pairwise linearly independent with probability one, but drawing vectors from $\F_p^d$ will potentially violate this with high probability, depending on $p,d$ and the method of sampling. Over $\F_p^d$, there are exactly $(p^d-1)/(p-1)$ pairwise linearly independent vectors. We chose $p=997$, which is large enough to ensure that pairwise linearly independent sets exist for all values of $m,n,s$ that we consider.

To produce a set of pairwise linearly independent vectors over $\F_p^d$, we sample a vector uniformly at random from $\F_p^d$, but only add it to the set if it is not the zero vector or a multiple modulo $p$ of a vector in the current set. The process is repeated until the set contains $s$ elements. It follows that $\{x^i\}_{i=1}^s$ is uniformly distributed on $(\F^m_p)^s$ conditional on being pairwise linearly independent. The same procedure is used to generate $\{y^i\}_{i=1}^s$. The remaining matrices, $\{v^i\}_{i=s+1}^R$, we sample uniformly at random from $\F^{m\times n}_p$. We will see that this procedure successfully produces the witnesses we need, with high probability.

To determine whether the subspace constitutes a witness to the truth of Conjecture~\ref{Conjecture:SharpBoundary}, we construct the matrix $M$ given at the beginning of Section~\ref{Section:Prelims&Perspective} with computations done over $\F_p$.
To find a non-zero maximal minor if one exists over $\F_p$, we run Gauss--Jordan elimination in modular arithmetic on $M$ with row pivoting.
By recording the locations of rows after elimination and pivoting, we may report which rows of the $M$ matrix correspond to the maximal minor.
The product of terms on the diagonal modulo $p$ after elimination gives the value of the minor modulo $p$ and either certifies that the subspace $\U=\Span\{v_1,v_2,\dots,v_R\}\subset \R^{m \times n}$ produces an $M$ matrix of full rank if the minor is non-zero, or fails to certify that $M$ has full rank. If certification fails, we resample all matrix factors and matrices and proceed with a new subspace.

\subsubsection{Results}
We were able to produce pseudo-random witness subspaces on the first initialization (pseudo-random seed set to 0) in all 3703 cases using a modulus of $p=997$.
See Section~\ref{sec:Code&Data} for a description of this data.
These certificates constitute proof of Theorem~\ref{thm:certifiedAll} and Theorem~\ref{thm:certifiedParticular}.
We note that it appears to be possible to generate witness subspaces for any $p$ provided there is a pairwise linearly independent set i.e., $(p^{\min\{m,n\}}-1)/(p-1) \ge s$, but such subspaces are more abundant for larger $p$.

\subsection{Numerical Experiments}
\label{subsec:NumericalExperiments}

Over modular arithmetic we can certify that a subspace $\U$ produces a non-singular $M$ matrix, implying success of the JLV algorithm on generic inputs; however, we also aim to confirm numerically that the full JLV algorithm works correctly over $\R$ in recovering planted rank-1 matrices. To verify this, we ran a number of computations in floating-point arithmetic and report the results here.
The routines employed to do linear algebra in floating-point arithmetic run much quicker than Gauss--Jordan elimination, so we verified numerically that the JLV algorithm succeeds even in situations where we have not produced a computer-aided proof of its correctness.
This gives strong evidence for cases of the conjecture indicated in Table~\ref{table:proofAndNumericalResults} in purple and orange.

In addition to these checks, to understand the JLV algorithm's failure modes, we ran the algorithm on random subspaces \emph{beyond} where \eqref{eq:goal} holds but where planted solutions are still identifiable (see Section~\ref{Section:Introduction}). For this, we tested all matrix shapes $(m, n)$ with $2 \leq m \leq n$ and $mn/\sqrt{2} < 60$.
For each matrix shape $(m,n)$ we tested cases with parameters $s,R$ where $0\leq s \leq R_\mathrm{max}(m,n)+1$ and $R = R_\mathrm{max}(m,n,s) + 1$ as long as $R \leq (m-1)(n-1)$, i.e.\ one over the conjectured bound but still generically identifiable.
These cases represent all choices of $s,R$ that are the smallest that exceed the inequality~\eqref{eq:conj-cond} for given $m$ and $n$.

\subsubsection{Methods}
For floating-point calculations we produce planted matrices $v^i = x^i \otimes y^i$, $i=1,\ldots,s$ where components $x^i$ and $y^i$ are generated as i.i.d.\ standard Gaussian random vectors in $\R^m$ and $\R^n$ respectively.
The additional non-planted matrices $v^i$, $s < i \leq R$ are generated as i.i.d.\ standard Gaussian matrices in $\R^{m \times n}$.

To perform a test, we used SVD to compute an orthonormal basis for the random subspace, which is used as input to the JLV algorithm.
An orthonormal basis is not required, but generally ensures that the rank-1 solutions are not basis elements and finding them cannot be done by inspecting individual basis elements.

As described in Proposition~\ref{prop:extract-v}, the JLV algorithm outputs a set of recovered rank-1 matrices, $\{\hat v^i\}_{i=1}^s$.
Our measure of success was a greedy matching error for the recovered planted matrices.
Treating the recovered rank-1 matrices $\{\hat v^i\}_{i=1}^s$ and ground truth rank-1 matrices $\{ v^i\}_{i=1}^s$ as vectors in $\R^{mn}$, perfect recovery would mean there is a matching via permutation $\pi:[s]\to [s]$ such that under the matching vectors are co-linear, i.e.,
\[
\frac{|\langle v^i, \hat{v}^{\pi(i)}\rangle|}{\|v^i\|_2 \|\hat{v}^{\pi(i)}\|_2} = 1, \quad \text{ for all } i\in [s].
\]
To measure deviations from perfect recovery, we perform a greedy matching to find $\pi$ and then report a worst-case matching error
\begin{equation}\label{eq:matching-err}
w = 1- \min_{i\in[s]} \frac{|\langle v^i, \hat{v}^{\pi(i)}\rangle|}{\|v^i\|_2 \|\hat{v}^{\pi(i)}\|_2},
\end{equation}
which gives the largest deviation from co-linearity.

We additionally wanted to understand the failure modes of the JLV algorithm, above the conjectured bound.
As noted after Proposition~\ref{prop:extract-v}, extracting the planted rank-1 matrices $\hat v^1, \ldots, \hat v^s$ is achieved using simultaneous diagonalization of a particular tensor.
When the bound~\eqref{eq:conj-cond} does not hold, there is no reason to believe that this tensor will be simultaneously diagonalizable but we wondered if the failure of \eqref{eq:goal} could have a ``perturbative'' effect on this tensor or if it would be catastrophic to the recovery of the planted rank-1 matrices.
To this end we modified the simultaneous diagonalization algorithm given in~\cite[Section 2.5]{JLV} so that in Step 3 when a set of eigenvalues cannot be matched, instead of outputting `Fail' we run a greedy matching procedure.
Similarly, we replace Step 4 of their algorithm, which requires  solving a linear system exactly, by a linear least squares method. These modifications result in a simultaneous diagonalization algorithm robust to small perturbations of its inputs.

\subsubsection{Results}
In the 6992 cases we tested where~\eqref{eq:conj-cond} holds, we ran the JLV algorithm on randomly generated subspaces and it successfully recovered the planted matrices with worst-case matching error (as defined in~\eqref{eq:matching-err}) no worse than $4.96 \times 10^{-14}$ (or correctly determined that there were no planted matrices in the case $s=0$).
These data and the programs to generate it are described in Section~\ref{sec:Code&Data}.
For the cases beyond where we have generated certificates using computations over finite fields, these computations give overwhelming evidence (Theorem*s~\ref{thm:numericalAll} and~\ref{thm:numericalParticular}) that Conjecture~\ref{Conjecture:SharpBoundary} still holds in these cases. These cases appear as cells colored purple and orange in Table~\ref{table:plant_max_values}.

In the 2481 cases we tested where~\eqref{eq:conj-cond} does not hold, we found that the JLV algorithm fails. For those cases with planted matrices ($s > 0$), the worst-case matching error was always at least 0.042, corresponding to an angular deviation of roughly $0.29$ radians or $16.6$ degrees. Furthermore, the \emph{average} value (over the 2481 cases) of the worst-case matching error was 0.759. To understand how the JLV algorithm fails, recall that JLV computes the subspace intersection $S^2(\U) \cap \SpEBasis$, and we know by Theorem~\ref{Thm:JLVFail} that when~\eqref{eq:conj-cond} fails, this subspace must contain ``spurious'' vectors in addition to the desired solutions $(v^i)^{\otimes 2}$. We saw that these spurious vectors had a catastrophic effect on the subsequent application of simultaneous diagonalization. Furthermore, we found that
\[
\dim(S^2(\U)\cap \SpEBasis) = \binom{R+1}{2} - \binom{m}{2}\binom{n}{2} \ge s+1
\]
for each case tested, meaning the bound~\eqref{eq:intersect_lowerbound} found in the proof of Theorem~\ref{Thm:JLVFail} held with \emph{equality}. For the cases where $s=0$, we know from Theorem~\ref{Thm:JLVFail} that $S^2(\U)\cap \SpEBasis$ must be non-empty, meaning JLV fails to certify, and our experiments confirmed this. See Section~\ref{sec:Code&Data} for further description of the data recorded.

\subsection{Extension to Symmetric Matrices}
\label{sec:GenToSymmMatrix}

Our work above considers subspaces $\U\subseteq \R^{m\times n}$ of matrices; however, it is also natural to consider subspaces of \emph{symmetric} matrices $\U \subseteq \Sym_m (\R) \subset \R^{m\times m}$, and to search for planted symmetric rank-1 matrices in such a subspace.
As one motivation, this problem is a subroutine in symmetric tensor decomposition, which arises when applying the method of moments.

First, we note that it is possible to define the space of symmetric rank-1 matrices in an identical manner to rank-1 matrices, as the algebraic variety cut out by the vanishing of all 2-by-2 minors, but where the ambient space is the space of symmetric matrices.
The variety is then
\[\X^\vee = \{v\in\Sym_m(\R): f(v)=0 \text{ for all } f\in P^\vee\} \subseteq \Sym_m(\R),\]
where we identify the variables $v_{ij}$ and $v_{ji}$ for all $i,j\in [m]$ to define
\[
P^\vee = \{v_{ab}v_{cd} - v_{ad}v_{cb} \, : \, 1\leq a \leq b \leq m, \, 1\leq c \leq d \leq m\}
\]
with $P^\vee \subseteq \R[v_{ij}:1\leq i\leq j \leq m]$.
Due to this identification, the set $P^\vee$ contains polynomials that are linearly dependent; the number of linearly independent polynomials for this variety is
\[
\binom{\binom{m+1}{2} + 1}{2} - \binom{m+3}{4} = \frac1{12} (m+1)m^2(m-1).
\]
Here, the first term is the dimension of the homogeneous degree-2 polynomials in variables $v_{ab}$, and the subtrahend is the dimension of $\Span\{v^{\otimes 2} : v\in \X^\vee\}$.

We construct a planted subspace as $\U = \Span\{v^1, \ldots, v^s, v^{s+1}, \ldots, v^R\} \subseteq \Sym_m(\R)$ where $v^1, \ldots, v^s \in \X^\vee$ are the \emph{planted symmetric matrices} and $v^{s+1}, \ldots, v^R \in \Sym_m(\R)$ are the \emph{non-planted symmetric matrices}.
The planted symmetric matrices may be expressed as $v^i = x^i(x^i)^\top$ for vectors $x^i\in\R^m$ when $i\leq s$.
Viewing the entries of the $x$ vectors as symbolic variables, there are $s\cdot m$ variables needed to express the planted symmetric matrices.
For the non-planted symmetric matrices, we take $v^i=z^i$ for symbolic $z$ variables with symmetry enforced: $z^i_{ab} = z^i_{ba}$ for all $a,b\in[m]$.
Accounting for symmetry, there are $(R-s)\binom{m+1}{2}$ $z$-variables in total. We concatenate the $x$ and $z$ variables into a single collection of symbolic variables, and assume these are chosen generically in the sense of Definition \ref{def:genericity}. We refer to
\[
\U = \Span\{x^1(x^1)^\top,\ldots,x^s(x^s)^\top,z^{s+1},\ldots,z^R\} \subseteq \Sym_m(\R)
\]
as a ``generically chosen planted subspace over the symmetric matrices with parameters $0 \le s \le R$.'' A generic subspace of this type is \emph{not} a generic instance of our original (non-symmetric) problem, so our prior methods are not immediately transferable to symmetric subspaces.

The JLV algorithm can be applied to the setting of symmetric matrices, and there is a natural analogue of \eqref{eq:goal} that implies its success, namely
\[S^2(\U)\ \cap\ \SpEBasis = \Span\{(v^1)^{\otimes 2},\ldots,(v^s)^{\otimes 2}\},
\label{eq:goal-vee} \tag{$\mathrm{Goal}^\vee$}\]
where $1\leq a < c \leq m$ and  $1\leq b < d \leq m$. From this, we can deduce a similar result to Theorem~\ref{Thm:JLVFail}, a negative result for subspaces over the symmetric matrices.

\begin{theorem} \label{Thm:JLVFail_symmetric}
If $\U = \Span\{v^1,\ldots,v^R\} \subseteq\R^{m\times m}$ is a generically chosen planted subspace over the symmetric matrices with parameters $0 \le s \le R$ where $\frac1{12} (m+1)m^2(m-1) < \binom{R+1}2 - s$, then \eqref{eq:goal-vee} fails to hold.
\end{theorem}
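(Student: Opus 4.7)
The plan is to adapt the dimension-counting argument from the proof of Theorem~\ref{Thm:JLVFail} to the symmetric setting, with the ambient space being $S^2(\Sym_m(\R))$ rather than $S^2(\R^{mn})$. The excerpt has already done the essential algebraic work by identifying the relevant dimensions: $\dim(S^2(\Sym_m(\R))) = \binom{\binom{m+1}{2}+1}{2}$, and the span of the $E_{abcd}$'s inside $S^2(\Sym_m(\R))$ has dimension equal to the stated $\frac{1}{12}(m+1)m^2(m-1)$ (which is exactly the ``number of linearly independent polynomials for this variety'' identified in Section~\ref{sec:GenToSymmMatrix}). This means the orthogonal complement $\Span\{E_{abcd}\}^\perp$ inside $S^2(\Sym_m(\R))$ has dimension $\binom{\binom{m+1}{2}+1}{2} - \frac{1}{12}(m+1)m^2(m-1)$.

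Next, I would apply the standard inclusion-exclusion inequality $\dim(A+B) = \dim(A) + \dim(B) - \dim(A \cap B)$ to the subspaces $A = S^2(\U)$ and $B = \Span\{E_{abcd}\}^\perp$, both contained in $S^2(\Sym_m(\R))$. Since $\dim(A+B) \le \dim(S^2(\Sym_m(\R)))$ and $\dim(S^2(\U)) = \binom{R+1}{2}$ (using a symmetric analogue of Lemma~\ref{lem:lin-indep} to get genericity of linear independence among $v^1,\ldots,v^R$), rearrangement yields the lower bound
\[\dim(S^2(\U) \cap \Span\{E_{abcd}\}^\perp) \;\ge\; \binom{R+1}{2} - \tfrac{1}{12}(m+1)m^2(m-1).\]

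Finally, I proceed by contrapositive. Suppose \eqref{eq:goal-vee} holds; then the intersection on the left-hand side equals $\Span\{(v^1)^{\otimes 2},\ldots,(v^s)^{\otimes 2}\}$, which has dimension at most $s$. Combined with the lower bound above, this forces $s \ge \binom{R+1}{2} - \frac{1}{12}(m+1)m^2(m-1)$, i.e.\ $\frac{1}{12}(m+1)m^2(m-1) \ge \binom{R+1}{2} - s$, contradicting the hypothesis. Hence \eqref{eq:goal-vee} must fail.

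There is no serious obstacle here — the argument is essentially a one-to-one transposition of the proof of Theorem~\ref{Thm:JLVFail}. The only item that warrants a brief justification (perhaps in a short lemma) is the symmetric analogue of Lemma~\ref{lem:lin-indep}, ensuring that $v^1,\ldots,v^R$ are generically linearly independent inside $\Sym_m(\R)$ so that $\dim(S^2(\U)) = \binom{R+1}{2}$; this follows by the same kind of explicit construction as in Lemma~\ref{lem:lin-indep}, choosing $x^\ell$ to be standard basis vectors and $z^\ell$ to be symmetric standard basis matrices indexed by an injection into $\{(i,j) : i \le j\}$. All other steps are one-line dimension computations using quantities that appear verbatim in Section~\ref{sec:GenToSymmMatrix}.
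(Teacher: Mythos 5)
Your proposal is correct and follows exactly the route the paper intends: the paper states that the proof of Theorem~\ref{Thm:JLVFail_symmetric} is ``identical in form to the proof of Theorem~\ref{Thm:JLVFail}'' with the symmetric dimensions substituted, which is precisely the dimension count you carry out. Your explicit note that a symmetric analogue of Lemma~\ref{lem:lin-indep} is needed to justify $\dim(S^2(\U)) = \binom{R+1}{2}$ is a reasonable point of care that the paper leaves implicit.
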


Since $s\leq R$, the condition above is implied by $\frac1{\sqrt6}m^2 + 1 < R$. The proof of Theorem~\ref{Thm:JLVFail_symmetric} is identical in form to the proof of Theorem~\ref{Thm:JLVFail} but replacing~\eqref{eq:goal} with~\eqref{eq:goal-vee} and the relevant dimensions with those appropriate to the symmetric case.
We can produce a similar bound
\begin{equation}
\label{eq:intersect_lowerbound_sym}
    \dim(S^2(\U)\cap \SpEBasis) \geq \binom{R+1}{2} - \frac1{12} (m+1)m^2(m-1).
\end{equation}

We now introduce a natural conjecture following similar principles to Conjecture~\ref{Conjecture:SharpBoundary}.
\begin{conjecture}
\label{Conjecture:SharpBoundarySymmetric}
If $\U\subseteq \Sym_m(\R)$ is a generically chosen planted subspace over the symmetric matrices with parameters $0 \le s \le R$, where $s$ and $R$ satisfy
\begin{equation}
\label{eq:conj-cond-sym}
    \frac1{12} (m+1)m^2(m-1) \ge \binom{R+1}{2} - s,
\end{equation}
then \eqref{eq:goal-vee} holds, and as a consequence the JLV algorithm succeeds.
\end{conjecture}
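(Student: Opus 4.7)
The plan is to build the symmetric counterpart of the matrix $M$ from Section~\ref{Section:Prelims&Perspective}, reduce the conjecture to a generic column-rank statement for that matrix, and then match the sharp threshold via an apolarity reformulation that brings the Alexander--Hirschowitz theorem to bear. Concretely, fix a maximal linearly independent subset $\widetilde P^\vee \subseteq P^\vee$ of size $N := \frac{1}{12}(m+1)m^2(m-1)$, index columns by $\Omega^\vee := \{(i,j) : 1 \le i \le j \le R\} \setminus \{(i,i) : 1 \le i \le s\}$ and rows by $\widetilde P^\vee$, and set $M^\vee_{f,(i,j)} := \langle v^i \vee v^j, T^f \rangle$. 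A verbatim adaptation of Lemma~\ref{Lem:GoalLemma} (using the basis $\widetilde P^\vee$ in place of $P$) shows that full column rank of $M^\vee$ implies \eqref{eq:goal-vee}, so Conjecture~\ref{Conjecture:SharpBoundarySymmetric} reduces to showing that $M^\vee$ has full column rank for generic planted data whenever $|\Omega^\vee| = \binom{R+1}{2} - s \le N$.

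The key reformulation is apolarity. Let $q_i(x) := x^\top v^i x \in \R[x_1,\ldots,x_m]_2$, so that planted $q_i = \ell_i^2$ and non-planted $q_i$ is a generic quadric. The multiplication map $\mu : S^2(\Sym_m(\R)) \to \R[x_1,\ldots,x_m]_4$ sending $v \vee w \mapsto q_v q_w$ is surjective with $\ker \mu = \Span\{E^\vee_{abcd}\}$, the latter isomorphic as an $\mathrm{SL}_m$-module to the Schur functor $S^{(2,2)}\R^m$ of dimension $N$. Writing $V := \Span\{v^i \vee v^j : (i,j) \in \Omega^\vee\}$ and using the orthogonal decomposition $S^2(\Sym_m(\R)) = \ker\mu \oplus (\ker\mu)^\perp$, the matrix $M^\vee$ is precisely the matrix of the projection $V \to \ker\mu$. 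Hence full column rank is equivalent to $V \cap (\ker\mu)^\perp = \{0\}$, and under the dimensional hypothesis $|\Omega^\vee| \le N$ this reduces to two substatements: (i) the $v^i \vee v^j$ are linearly independent in $S^2(\Sym_m(\R))$, which follows from a single-point substitution as in Lemma~\ref{lem:lin-indep}; and (ii) the projection of $V$ onto $\ker\mu \cong S^{(2,2)}\R^m$ is injective.

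The heart of the argument is (ii), and it matches the sharp threshold exactly because $\dim S^{(2,2)}\R^m = N$. For planted pairs $i < j \le s$ a direct computation gives $\pi_{\ker\mu}(v^i \vee v^j) \propto \Pi_{2,2}\bigl((\ell_i \wedge \ell_j)^{\otimes 2}\bigr)$, where $\Pi_{2,2} : S^2(\Lambda^2 \R^m) \to S^{(2,2)}\R^m$ is the canonical equivariant projection; mixed and non-planted pairs admit analogous explicit expressions in the generic $z$-entries. Linear independence of the planted-only block $s = R$ then follows from Alexander--Hirschowitz applied to a generic double-point configuration on $\mathbb{P}^{m-1}$ at degree $4$ (outside a finite list of classical exceptions, none of which occurs in degree $4$ for $m \ge 3$), while the non-planted block $s = 0$ is the classical generic-simple-point statement (equivalent to Theorem~1.10 of~\cite{DJL24} in the relevant range). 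To cover the full interpolation $0 \le s \le R$ I would run a Horace-style induction on $R$, peeling off one column of $M^\vee$ at a time and tracking the induced drop in Hilbert function on the mixed double-plus-simple-point scheme supported at the $\ell_i$'s and at generic extra points.

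The main obstacle is the boundary regime $|\Omega^\vee| = N$, where the dimension margin vanishes: a single ``accidental'' dependency among the projections would break the conjecture, and the simultaneous presence of double and simple points lies outside the standard Alexander--Hirschowitz literature. My envisioned approach is (a) an explicit specialisation degenerating generic quadrics into sums of squares, converting simple points into controlled infinitesimal double points so that the induction stays inside the pure double-point regime where Alexander--Hirschowitz applies, and (b) a finite computer-aided check in the style of Section~\ref{subsec:ComputerAidedProofs}, performing Gauss--Jordan elimination on explicit integer instances of $M^\vee$ over a finite field $\F_p$, to resolve the short list of small-$m$ base cases that escape the general induction.
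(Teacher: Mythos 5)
This statement is an open conjecture in the paper, not a theorem: the authors explicitly write that they do not have a proof of Conjecture~\ref{Conjecture:SharpBoundarySymmetric} and support it only by computer-aided verification of finitely many cases ($m\le 14$ for all $0\le s\le R$, and $m\le 21$ for $s\in\{0,R\}$), obtained exactly as in Section~\ref{subsec:ComputerAidedProofs} by exhibiting integer witness subspaces and checking a maximal minor of the symmetric analogue of $M$ over $\F_p$. Your proposal should therefore be judged as a claimed resolution of an open problem, and as it stands it is a research program, not a proof. The framing is reasonable and consistent with the paper's strategy: reducing \eqref{eq:goal-vee} to generic full column rank of a matrix $M^\vee$ with columns indexed by $\Omega^\vee$ is a sound symmetric analogue of Lemma~\ref{Lem:GoalLemma}, and your identification of $\ker\mu$ with the module $S^{(2,2)}\R^m$ of dimension $\frac1{12}(m+1)m^2(m-1)$ matches the paper's dimension count. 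But the two steps that would actually carry the argument are unproven.

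First, the appeal to Alexander--Hirschowitz does not address the statement you need. AH computes Hilbert functions of generic double-point schemes, equivalently dimensions of secant varieties of Veronese varieties, i.e.\ dimensions of spans of quartics of the form $\ell_i^4$ or of spaces cut out by vanishing to order two at generic points. What the $s=R$ block requires is that $\Span\{v^i\vee v^j : i<j\}$ meets the totally symmetric subspace $(\ker\mu)^\perp\cong S^4(\R^m)$ only in $\zero$, i.e.\ that the projections of the specific tensors $v^i\vee v^j=(\ell_i\ell_i^\top)\vee(\ell_j\ell_j^\top)$ to $S^{(2,2)}\R^m$ are linearly independent. This is a statement about a particular configuration of points on the chordal variety of the second Veronese inside $S^2(\Sym_m(\R))$, not about values and derivatives of quartics at generic points; you give no reduction from one to the other, and none is obvious. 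Moreover the conjecture is sharp at $\binom{R+1}{2}-s=\frac1{12}(m+1)m^2(m-1)$, so your proposed specializations and Horace-style peeling have zero slack: any degeneration that loses even one dimension of the image destroys the argument, and the mixed double-plus-simple-point bookkeeping you invoke is exactly the part you acknowledge is outside the standard literature. Second, the claim that the $s=0$ block is ``classical'' and equivalent to Theorem~1.10 of~\cite{DJL24} is a misattribution: that result concerns the non-symmetric problem with $\min\{m,n\}=2$ and higher levels of the Nullstellensatz hierarchy, not generic symmetric subspaces at the first level. Finally, your fallback of finite $\F_p$ checks is precisely what the paper already does; it can settle only finitely many $m$ and cannot serve as base cases until the general induction is actually established. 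In short, the proposal identifies a plausible algebraic framework but leaves the conjecture as open as it is in the paper.
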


For large $m$, condition~\eqref{eq:conj-cond-sym} gives $R\sim \frac1{\sqrt6}m^2$.
To compare, Theorem~2 in~\cite{JLV} guarantees success of the JLV algorithm when $R \leq \frac16 m(m-1)$.
Hence, proof of Conjecture~\ref{Conjecture:SharpBoundarySymmetric} would indicate that the JLV algorithm succeeds for subspaces considerably larger than previously shown (by a factor of roughly 2.5).

While we do not have a proof of this conjecture, we have analogous results to Theorems~\ref{thm:certifiedAll} and~\ref{thm:certifiedParticular} that prove the conjecture in specific cases.
\begin{theorem}
    For all $2\leq m \leq 14$, Conjecture~\ref{Conjecture:SharpBoundarySymmetric} holds for every $s$ and $R$ satisfying inequality~\eqref{eq:conj-cond-sym}.
\end{theorem}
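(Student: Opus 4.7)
The plan is to mirror the computer-aided proof methodology of Section~\ref{subsec:ComputerAidedProofs} in the symmetric setting. First we need a symmetric analogue of Lemma~\ref{Lem:GoalLemma}. Define the matrix $M^\vee$ with columns indexed by $(i,j)\in\Omega$ and rows indexed by the generating tuples $(a,b,c,d)$ for $P^\vee$, with entries $M^\vee_{abcd,ij} = \langle v^i \vee v^j,\, E^\vee_{abcd}\rangle$, where $E^\vee_{abcd}$ is the symmetric analogue of $E_{abcd}$ inside $S^2(\Sym_m(\R))$. The contrapositive argument used in the proof of Lemma~\ref{Lem:GoalLemma} carries over verbatim with $\V$ replaced by $\Sym_m(\R)$: if $M^\vee$ has full column rank, then \eqref{eq:goal-vee} holds, and the JLV algorithm succeeds by the symmetric analogue of Proposition~\ref{prop:extract-v}. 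This reduces the theorem to exhibiting, for every admissible triple $(m,s,R)$ with $2 \le m \le 14$ and $(s,R)$ satisfying \eqref{eq:conj-cond-sym}, a single subspace $\U$ whose associated $M^\vee$ has full column rank.

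As in Section~\ref{subsec:ComputerAidedProofs}, we produce such a subspace with integer-valued basis and certify it by exhibiting a maximal minor of $M^\vee$ (i.e.\ a subdeterminant involving every column) that is non-zero modulo a prime $p$. Non-vanishing modulo $p$ implies non-vanishing over $\Z$, hence the underlying symbolic minor is a non-identically-zero polynomial in the $x,z$ variables, so \eqref{eq:goal-vee} holds generically by Definition~\ref{def:genericity}. The search procedure for each $(m,s,R)$ samples planted factors $x^1, \ldots, x^s \in \F_p^m$ by rejection sampling (enforcing pairwise linear independence, which is necessary for identifiability just as in the non-symmetric case), samples non-planted symmetric matrices $z^{s+1}, \ldots, z^R \in \Sym_m(\F_p)$ by drawing their $\binom{m+1}{2}$ upper-triangular entries uniformly and mirroring, assembles $M^\vee$ over $\F_p$, and runs Gauss--Jordan elimination with row pivoting to locate a non-zero maximal minor, redrawing if none is found. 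A modulus such as $p=997$ should be ample for all $m \le 14$.

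The principal subtlety is that the polynomial set $P^\vee$ is linearly dependent, due to the identification $v_{ij} = v_{ji}$ forcing relations such as $E^\vee_{abcd} = E^\vee_{cbad}$, together with further Pl\"ucker-type identities which collectively cut the effective rank to precisely $\frac{1}{12}(m+1)m^2(m-1)$, as noted in the excerpt. Happily, we do not need to fix an explicit basis in advance: we simply list all $(a,b,c,d)$ with $a \le b$ and $c \le d$ (trimming obvious duplicates if convenient) and let the row-pivoted Gauss--Jordan procedure automatically identify a maximal independent subset of rows. The only other concern is scale: for $m=14$ at the boundary, $M^\vee$ has $\Theta(m^4) \approx 3{,}200$ independent rows and up to a comparable number of columns, so the modular linear algebra must be reasonably efficient, but this is well within standard $\F_p$-arithmetic routines. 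The remaining verification is routine enumeration over the finitely many admissible $(m,s,R)$ in range, following the template already used to prove Theorems~\ref{thm:certifiedAll} and~\ref{thm:certifiedParticular}, with the witness-certificate pairs recorded alongside the supplemental material from Section~\ref{sec:Code&Data}.
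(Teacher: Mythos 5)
Your proposal matches the paper's own argument: the authors prove this theorem exactly by constructing witness subspaces over $\F_p$ with $p=997$ (planted factors sampled pairwise linearly independent, non-planted symmetric matrices sampled via their upper-triangular entries and mirrored), assembling the symmetric analogue of $M$, and certifying a non-zero maximal minor via Gauss--Jordan elimination in modular arithmetic for each of the finitely many admissible $(m,s,R)$. Your handling of the linear dependence within $P^\vee$ and the reduction via the symmetric analogue of Lemma~\ref{Lem:GoalLemma} is likewise consistent with what the paper does, so the approach is essentially identical.
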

\begin{theorem}
    For all $2\leq m \leq 21$, Conjecture~\ref{Conjecture:SharpBoundarySymmetric} holds when $s=0$ or $s=R$ and satisfy inequality~\eqref{eq:conj-cond-sym}.
\end{theorem}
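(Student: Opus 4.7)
The plan is to adapt the computer-aided proof methodology behind Theorems~\ref{thm:certifiedAll} and~\ref{thm:certifiedParticular} to the symmetric setting. First I would establish a symmetric analogue of Lemma~\ref{Lem:GoalLemma}: build a matrix $M^\vee$ whose columns are indexed by $\Omega$ (the same index set as before) and whose rows are indexed by a chosen linearly independent subset of $\{E_{abcd} : 1\le a<c\le m,\ 1\le b<d\le m\}$ of size $\tfrac{1}{12}(m+1)m^2(m-1)$, with entries $\langle v^i\vee v^j, E_{abcd}\rangle$ evaluated at the symmetric $v^\ell$. The same inner-product calculation as in Lemma~\ref{Lem:EntryEqn} carries through verbatim. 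Repeating the contrapositive argument in Lemma~\ref{Lem:GoalLemma}, using the symmetric vanishing ideal rather than the full one, shows that if $M^\vee$ has full column rank, then \eqref{eq:goal-vee} holds.

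Next I would reduce proving Conjecture~\ref{Conjecture:SharpBoundarySymmetric} for a given triple $(m,s,R)$ to exhibiting a single integer-valued symmetric planted subspace for which some maximal minor of $M^\vee$ is nonzero. Since a polynomial in the $x,z$ variables is nonzero over $\mathbb{R}$ whenever it has a nonzero value at some integer point (or equivalently, nonzero reduction modulo a prime $p$), a single certificate computed in $\mathbb{F}_p$ suffices as rigorous proof for that triple. Restricting to $s\in\{0,R\}$, the parameter space to check is the (finite) set of pairs $(m,R)$ with $2\le m\le 21$ and $R\le R^\vee_\mathrm{max}(m,s)$, where $R^\vee_\mathrm{max}$ is the largest $R$ satisfying \eqref{eq:conj-cond-sym}.

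The concrete procedure mirrors Section~\ref{subsec:ComputerAidedProofs}. For each triple, I would sample the planted factors $x^1,\ldots,x^s\in\mathbb{F}_{997}^m$ uniformly conditioned on pairwise linear independence (so that the planted matrices $x^i(x^i)^\top$ are themselves pairwise linearly independent), sample $v^{s+1},\ldots,v^R$ uniformly from the space of symmetric matrices $\mathrm{Sym}_m(\mathbb{F}_{997})$ (i.e., draw the upper-triangular entries i.i.d.\ uniform and mirror them), assemble $M^\vee$ over $\mathbb{F}_{997}$, and run Gauss--Jordan elimination with row pivoting modulo $997$ to locate a nonsingular $|\Omega|\times|\Omega|$ submatrix; resample if no such submatrix is found. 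A nonzero pivot product certifies full column rank of $M^\vee$ over $\mathbb{F}_{997}$, which by the remark above certifies it over $\mathbb{R}$, and hence \eqref{eq:goal-vee} holds generically for that $(m,s,R)$.

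The main obstacle is not algorithmic but structural: the family $\{E_{abcd}\}$ is linearly dependent when restricted to symmetric $v$, so care must be taken to pick a maximal linearly independent row set of the right dimension before searching for a full-rank square submatrix --- otherwise the trivially redundant rows would prevent the minor from ever having full rank. A clean way around this is to pre-select $\tfrac{1}{12}(m+1)m^2(m-1)$ rows by computing the row rank of the full $E_{abcd}$ collection once (symbolically, or numerically over $\mathbb{F}_p$), and then search among those rows for a nonsingular submatrix. Once this linear-algebra bookkeeping is in place, the remaining work is a routine enumeration over the finitely many $(m,R)$ pairs with $m\le 21$ and $s\in\{0,R\}$, with the certificates recorded in the supplementary data alongside those from Section~\ref{subsec:ComputerAidedProofs}.
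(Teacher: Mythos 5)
Your proposal is correct and follows essentially the same route as the paper: the authors prove this theorem by exactly such a computer-aided certificate search over $\F_{997}$, sampling planted factors $x^i$ pairwise linearly independent, filling the non-planted symmetric matrices with uniform upper-triangular entries, and verifying a nonzero maximal minor of the symmetric analogue of $M$ via modular Gauss--Jordan elimination (their script \texttt{segre\_veronese\_ideal\_generation.py} handles precisely the linearly-independent row-selection issue you flag). The only caveat worth noting is that the argument is inherently a finite enumeration of certificates, so the "proof" consists of the recorded witness-certificate data rather than a closed-form derivation --- which is exactly how the paper presents it.
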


Similar to Definition~\ref{def:Rmax} we give the boundaries for subspace dimension depending on the proportion of planted rank-1 matrices.
\begin{definition}
The maximum subspace dimension $R$ that fits the conditions of Conjecture~\ref{Conjecture:SharpBoundarySymmetric} is
\[
    R^\vee_\mathrm{max}(m, s) := \left\lfloor\sqrt{\frac16(m+1)m^2(m-1)+\frac14 + 2s } - \frac12\right\rfloor.
\]
\end{definition}
Similarly, the maximum of $R^\vee_\text{max}(m, s)$ for any choice of $s$ has the following form.
\begin{definition}
    \[
        R^\vee_\mathrm{max}(m) := \left\lfloor\sqrt{\frac16(m+1)m^2(m-1)+\frac14} + \frac12\right\rfloor.
    \]
\end{definition}

\begin{table}
    \centering
    
    \begin{tabular}{c|*{13}{c}}
     $m$ & 2 & 3 & 4 & 5 & 6 & 7 & 8 & 9 & 10 & 11 & 12 & 13 & 14  \\
\hline
$R_\text{max}^\vee$ & \cellcolor{myblue}2 & \cellcolor{myblue}4 & \cellcolor{myblue}6 & \cellcolor{myblue}10 & \cellcolor{myblue}15 & \cellcolor{myblue}20 & \cellcolor{myblue}26 & \cellcolor{myblue}33 & \cellcolor{myblue}41 & \cellcolor{myblue}49 & \cellcolor{myblue}59 & \cellcolor{myblue}69 & \cellcolor{myblue}80  \\ \\ 

$m$ & 15 & 16 & 17 & 18 & 19 & 20 & 21 & 22 & 23 & 24 & 25 & 26 & 27 \\
\hline
$R_\text{max}^\vee$ & \cellcolor{mymagenta}92 & \cellcolor{mymagenta}104 &  \cellcolor{mymagenta}118 & \cellcolor{mymagenta}132 & \cellcolor{mymagenta}147 & \cellcolor{mymagenta}163 & \cellcolor{mymagenta}180 & 197 & 216 & 235 & 255 & 276 & 297
     \end{tabular}
     
    \caption{The value of $R^\vee_\mathrm{max}(m)$ for selected $m$. Cases for which we have proof of Conjecture~\ref{Conjecture:SharpBoundarySymmetric} for all $0\leq s \leq R$ are shown in \colorbox{myblue}{blue} and in \colorbox{mymagenta}{magenta} are cases for which the conjecture is proved only in the specific cases of $s=0$ and $s=R$, the latter having implications for symmetric tensor decomposition.}
    \label{table:symmetricProofAndNumericalResults}
\end{table}

\subsubsection{Certificates and Numerical Experiments}
We ran similar computations over a finite field to prove Conjecture~\ref{Conjecture:SharpBoundarySymmetric} for a range of matrix sizes.
The main difference was that planted matrices took the form $v^i = x^i\otimes x^i$ for $i\in [s]$.
The set $\{x^i\}_{i=1}^s$ was uniform on $(\F^m_p)^s$ conditional on being pairwise linearly independent for exact computations, and $\{x^i\}_{i=1}^s$ were independent standard Gaussian vectors in $\R^m$ for floating-point computations.
To produce additional basis elements from the space of symmetric matrices we selected entries in the upper triangular portion of each $v^i$, $i=s+1,\dots,R$ uniformly from $\F_p$ for exact computations, and i.i.d.\ Gaussian for floating-point computations.  We enforced the required symmetry by assigning entries below the diagonal to the value of their counterparts above the diagonal.

Findings were very similar to computations done in the previous sections. All results referred to here and additional data and programs to produce it are described in Section~\ref{sec:Code&Data}.
We produced certificates and proved Conjecture~\ref{Conjecture:SharpBoundarySymmetric} in all 441 cases we attempted (see Table~\ref{table:symmetricProofAndNumericalResults}).
Additional numerical tests verified that it was possible to recover the random planted symmetric rank-1 matrices via the JLV algorithm with the worst of the worst-case matching error being $1.89\times 10^{-15}$ (i.e.\ negligible error).

We also ran 414 tests over the bound~\eqref{eq:conj-cond-sym}, from $m=2$ to $m=14$ with parameters $0 \leq s \leq R^\vee_\mathrm{max}(m) + 1$ and $R = R^\vee_\mathrm{max}(m) + 1$ provided that $R \leq \binom{m+1}{2} - m$, i.e.\ dimension is one over the conjectured bound in~\eqref{eq:conj-cond-sym} but small enough that planted matrices will be identifiable.
Among these tests, the worst-case matching error was always at least 0.078 or an angular deviation of roughly 0.40 radians, indicating recovery was generally unsuccessful when we planted matrices ($s>0$). Furthermore, the \emph{average} value of the worst-case matching error was 0.776.
In addition, we found that
\[
\dim(S^2(\U)\cap \SpEBasis) = \binom{R+1}{2} - \frac1{12}(m+1)m^2(m-1) \ge s+1
\]
for every case tested, suggesting the bound in~\eqref{eq:intersect_lowerbound_sym} holds with equality for generic subspaces.

\subsection{Code and Data}
\label{sec:Code&Data}
Data and the programs written in Python can be found in the following \href{https://github.com/lemniscate8/linear-intersect-varieties}{repository}\footnote{Full URL is: https://github.com/lemniscate8/linear-intersect-varieties} hosted on GitHub.
Supplemental materials are organized as follows:
\begin{itemize}
    \item Python script \texttt{conjecture\_prover.py} was used to produce all certificates and numerical results. This program is supported by scripts,
    \begin{itemize}
        \item \texttt{linear\_conic\_intersect.py}, which contains a robust implementation of the simultaneous diagonalization algorithm among other subroutines;
        \item \texttt{subspace\_manipulation.py}, for performing symmetric lifts and Khatri--Rao product;
        \item and \texttt{segre\_veronese\_ideal\_generation.py}, which is used to generate an independent basis of polynomials that cut out rank-1 matrices and symmetric rank-1 matrices.
    \end{itemize}
       
    \item The \texttt{certificates} directory contains comma-separated value (CSV) files named according to the convention
    \texttt{<test\_type>(\_sym)\_b(<lower\_bound>-)<upper\_bound>\_p<modulus>.csv} that contain certificates for the partial proofs of Conjectures~\ref{Conjecture:SharpBoundary} and~\ref{Conjecture:SharpBoundarySymmetric}.
    \begin{itemize}
        \item \texttt{test\_type}'s of ``all'', ``null'' or ``cpd'' determine whether tests are performed with any number of planted solutions, no planted solutions, or number of planted solutions equal to subspace dimension, respectively.
        \item The modifier \texttt{\_sym} indicates the data set is for the symmetric matrix case.
        \item Data in each CSV is indexed by matrix shape (\texttt{m} and \texttt{n}), the maximum subspace dimension allowable, \texttt{R}, and number of planted solutions, \texttt{s}.
        Indices \texttt{m} and \texttt{n} in the file satisfy
        $
        \texttt{lower\_bound} \leq \frac1{\sqrt2}\texttt{mn}\leq \texttt{upper\_bound}
        $, or $
        \texttt{lower\_bound} \leq \frac1{\sqrt6}\texttt{m}^2\leq \texttt{upper\_bound}.
        $ in the symmetric case.
        The set of these indicies across all files gives all cases for which we have algorithmically proved Conjectures~\ref{Conjecture:SharpBoundary} and~\ref{Conjecture:SharpBoundarySymmetric}.
        \item Derived values:
        \begin{itemize}
            \item \texttt{seed}, a seed for psuedo-random number generation which along with index \texttt{m,n,R,s} can be used to produce a subspace spanned by vectors in $\Z$;
            \item \texttt{det\_mod<modulus>}, the value of a maximal minor of the $M$ matrix corresponding to the subspace modulo the integer \texttt{modulus};
            \item and \texttt{rm\_rows}, an array of non-negative integers for rows of the matrix which should be discarded to form the maximal square sub-matrix $M'$ whose determinant is computed.
        \end{itemize}
        \item We include the following CSV files generated by running \texttt{conjecture\_prover.py}:
        \begin{itemize}
            \item \texttt{certificates/all\_b60\_p997.csv},
            \item \texttt{certificates/all\_sym\_b90\_p997.csv},
            \item \texttt{certificates/cpd\_b61-120\_p997.csv},
            \item and \texttt{certificates/null\_b61-120\_p997.csv}.
        \end{itemize}
    \end{itemize}
    \item The \texttt{numerical} directory contains CSV files by the same naming convention that report our numerical tests running the full JLV algorithm.
    
    \begin{itemize}
        \item \texttt{test\_type} may also be ``overbound'' indicating that the subspace dimensions are one above the bound where \eqref{eq:goal} holds. In these cases we observe that the JLV algorithm fails.
        \item Indexing follows the same convention as in certificates.
        \item The value of \texttt{seed} is either drawn from the \text{certificates} data or set to zero if we have not produced a certificate for the index given.
        \item Derived values include
        \begin{itemize}
            \item \texttt{ker\_dim}, the dimension of $S^2(\U)\cap \SpEBasis$ estimated as the kernel of a particular matrix;
            \item \texttt{decomp\_error}, reports a diagnostic measure indicating how well robust simultaneous diagonalization succeeded;
            \item \texttt{s\_val}, reports the value of the smallest singular value of the $M$ matrix corresponding to this subspace;
            \item \texttt{w}, reports our worst case greedy matching error metric.
        \end{itemize}
        \item We include the following files generated by running \texttt{conjecture\_prover.py}:
        \begin{itemize}
            \item \texttt{numerical/all\_b75\_p997.csv},
            \item \texttt{numerical/all\_sym\_b90\_p997.csv},
            \item \texttt{numerical/cpd\_b76-165\_p997.csv},
            \item \texttt{numerical/cpd\_sym\_b91-190\_p997.csv},
            \item \texttt{numerical/null\_b76-165\_p997.csv},
            \item \texttt{numerical/null\_sym\_b91-190\_p997.csv},
            \item \texttt{numerical/overbound\_b60\_p997.csv},
            \item and \texttt{numerical/overbound\_sym\_b90\_p997.csv}.
        \end{itemize}
    \end{itemize}
    \item The \texttt{log} directory contains extra data from the certification and numerical procedures reporting how long tests took to run and if failures of one kind or another occurred.
\end{itemize}

\section{Applications to Tensor Decomposition}
\label{sec:Applications}

Verifying the success of the JLV algorithm beyond its previously proven bounds has immediate application for tensor decomposition, following the arguments in Section~6 of~\cite{JLV}.
For order-3 tensors, we have the following corollary to Theorem~\ref{Thm:GoalHolds}.

\begin{corollary}
\label{Cor:ThirdOrderDecomposition}
    The JLV algorithm can be used to decompose, in polynomial time, a generic tensor of rank $R$ in $\R^{n_1 \times n_2 \times n_3}$ when
    \[
    R \leq \min\left\{\frac12(n_1-1)(n_2-2), \, n_3\right\}.
    \]
\end{corollary}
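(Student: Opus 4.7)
The strategy follows the standard reduction from order-3 tensor decomposition to finding planted rank-1 matrices in a subspace, following Section~6 of \cite{JLV}. The improvement over their bound $R \le \frac{1}{4}(n_1-1)(n_2-1)$ comes entirely from invoking our new Corollary~\ref{Cor:JLVSuccess} in place of their weaker subspace guarantee, so the main work is checking that the reduction preserves genericity and that all other steps are polynomial time.

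First I would unfold $T = \sum_{\ell=1}^R a^\ell \otimes b^\ell \otimes c^\ell$ along mode~3 into a matrix $T_{(3)} \in \R^{n_1 n_2 \times n_3}$ whose $k$-th column is $\sum_\ell c^\ell_k \,(a^\ell \otimes b^\ell)$, where $a^\ell \otimes b^\ell$ is viewed as a vector in $\R^{n_1 n_2}$. The column space of $T_{(3)}$ is contained in $\U := \Span\{a^\ell (b^\ell)^\top : \ell \in [R]\} \subseteq \R^{n_1 \times n_2}$, and equals $\U$ provided the vectors $c^1,\ldots,c^R$ are linearly independent in $\R^{n_3}$. Linear independence is a generic condition on the $c^\ell$ whenever $R \le n_3$, which is precisely the second clause of the hypothesis. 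Thus $\U$ can be computed from $T$ in polynomial time (e.g., by an SVD of $T_{(3)}$), and by construction it is the span of $R$ generically chosen rank-1 matrices, i.e., a generically chosen planted subspace with parameters $s = R$ in the sense of Section~\ref{Section:Setting+MainResults}.

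Next I would apply Corollary~\ref{Cor:JLVSuccess} with $m = n_1$, $n = n_2$, $s = R$. The hypothesis $R \le \frac{1}{2}(n_1-1)(n_2-2)$ is exactly the first clause, so the JLV algorithm applied to $\U$ outputs in polynomial time vectors $\hat{v}^1, \ldots, \hat{v}^R \in \R^{n_1 \times n_2}$ equal to $a^\ell (b^\ell)^\top$ up to a permutation of $[R]$ and nonzero scalars. Each such rank-1 matrix determines the pair $(a^\ell, b^\ell)$ up to scalar, which can be extracted by reading off any nonzero row and column, or via a rank-1 SVD.

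Finally, with the rank-1 factors $\{a^\ell \otimes b^\ell\}_{\ell=1}^R$ known up to joint permutation and scalar, recovering the $c^\ell$ reduces to a linear system: the identity $T = \sum_\ell (a^\ell \otimes b^\ell) \otimes c^\ell$ is linear in the $c^\ell$, and the associated design matrix has full column rank whenever $\{a^\ell \otimes b^\ell\}$ is linearly independent in $\R^{n_1 n_2}$, a generic condition that also holds (by Lemma~\ref{lem:lin-indep} applied to $\U$). Solving this system, up to the ambiguity absorbed into the $c^\ell$, completes the decomposition. Each step is polynomial time and succeeds off a measure-zero set of $(a^\ell, b^\ell, c^\ell)$. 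I do not expect any real obstacle: once Theorem~\ref{Thm:GoalHolds} is available, the classical unfolding reduction delivers the claimed bound automatically.
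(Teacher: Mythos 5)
Your proposal is correct and follows essentially the same route as the paper: flatten $\mathcal T$ along its first two modes, note that for $R\le n_3$ the column span generically equals $\Span\{a^\ell\otimes b^\ell\}$, apply Corollary~\ref{Cor:JLVSuccess} with $s=R$ to recover the rank-1 factors, and then solve a linear system (justified by Lemma~\ref{lem:lin-indep}) for the $c^\ell$. No substantive differences from the paper's argument.
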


\begin{proof}
We recast Proposition~25 (on page~30) in~\cite{JLV} in a context specific to order-3 tensor decomposition and use our improved bound.
Consider a tensor
\[
\mathcal T = \sum_{i=1}^R a^i\otimes b^i \otimes c^i \in  \R^{n_1 \times n_2 \times n_3}
\]
with generic components $a^i\in \R^{n_1}$, $b^i\in \R^{n_2}$, and $c^i\in \R^{n_3}$ for $i=1,\dots, R$.
Flattening $\mathcal T$ along its first two modes produces a matrix $T$ with rows indexed by pairs $(i,j)$ with $1\leq i \leq n_1$ and $1\leq j \leq n_2$ so that $T_{(i,j),k} := \mathcal T_{i,j,k}$.
Observe that $T$ has a factorization
\[
T = \begin{bmatrix}
    a^1 \otimes b^1 & a^2 \otimes b^2 & \dots & a^R \otimes b^R
\end{bmatrix}
\begin{bmatrix}
    (c^1)^\top \\
    (c^2)^\top \\
    \vdots \\
    (c^R)^\top
\end{bmatrix}.
\]
Since $R\leq n_3$, the vectors $\{c^i\}_{i=1}^R$ are generically linearly independent. 
This implies $\operatorname{colspan}(T) = \Span\{a^1 \otimes b^1, a^2 \otimes b^2, \dots, a^R \otimes b^R\}$.
By Corollary~\ref{Cor:JLVSuccess}, we can then recover the rank-1 matrices $\{a^i \otimes b^i \}_{i=1}^R$ up to scalar multiple and ordering from $\U  = \operatorname{colspan}(T)$.
By Lemma~\ref{lem:lin-indep}, the vectors $\{a^i \otimes b^i\}_{i=1}^R$ are generically linearly independent, which means we can solve a linear system for $\{c^i\}_{i=1}^R$ to recover the rank-1 terms $\{a^i\otimes b^i\otimes c^i\}_{i=1}^R$ up to ordering.
\end{proof}

Similar results apply for order-4 tensors.

\begin{corollary}
    \label{Cor:FourthOrderDecomposition} 
    The JLV algorithm can be used to decompose, in polynomial time, a generic tensor of rank $R$ in $\R^{n_1 \times n_2\times n_3\times n_4}$ when
    \[
    R\leq \min \left\{\frac12 (n_1-1)(n_2-2), \, n_2n_3\right\}.
    \]
\end{corollary}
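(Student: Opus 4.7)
My strategy mirrors the proof of Corollary~\ref{Cor:ThirdOrderDecomposition}: flatten $\mathcal{T}$ to a matrix, identify its column span as a subspace of $n_1 \times n_2$ matrices spanned by the planted rank-1 components, apply Corollary~\ref{Cor:JLVSuccess} to recover those matrices, and finally recover the remaining two factors by linear algebra plus rank-1 SVDs. Concretely, first I would flatten $\mathcal{T}$ along the bipartition $\{1,2\} \mid \{3,4\}$ to obtain
\[
T \;=\; \sum_{i=1}^{R} \mathrm{vec}\bigl(a^i (b^i)^{\top}\bigr)\,\mathrm{vec}\bigl(c^i (d^i)^{\top}\bigr)^{\top} \;\in\; \R^{n_1 n_2 \times n_3 n_4}.
\]
The second bound in the minimum is exactly what makes the row-side vectors $\{\mathrm{vec}(c^i(d^i)^{\top})\}_{i=1}^{R}$ generically linearly independent, which is Lemma~\ref{lem:lin-indep} applied to the rank-1 variety in $\R^{n_3 \times n_4}$ in the fully-planted case $s=R$. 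It follows that $\operatorname{colspan}(T)$, viewed as a subspace of $\R^{n_1 \times n_2}$, is the $R$-dimensional subspace $\U$ spanned by the planted rank-1 matrices $a^i (b^i)^{\top}$.

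Second, I would apply Corollary~\ref{Cor:JLVSuccess} to $\U$ in the fully-planted regime $s = R$: the first bound of the minimum, $R \le \tfrac{1}{2}(n_1-1)(n_2-2)$, is precisely the hypothesis required, and its output is the collection $\{a^i (b^i)^{\top}\}_{i=1}^{R}$ up to permutation and nonzero scalar, from which a rank-1 SVD on each matrix determines $a^i$ and $b^i$ up to scalar. Third, with the $a^i \otimes b^i$ in hand and linearly independent (Lemma~\ref{lem:lin-indep} once more), the flattening identity displayed above becomes a linear system for the vectors $c^i \otimes d^i$; solving it (e.g.\ via a left pseudoinverse of the matrix of $\mathrm{vec}(a^i(b^i)^{\top})$) determines each $c^i \otimes d^i$ up to a compensating scalar, and a rank-1 SVD on the $n_3 \times n_4$ reshape of each solution then extracts $c^i$ and $d^i$ up to scalar. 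Absorbing the per-factor scalar ambiguities into a single overall scalar per component produces the complete decomposition of $\mathcal{T}$ up to the inherent permutation-and-scaling symmetries.

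The main point that requires care, and thus the principal obstacle, is verifying that the subspace $\U$ arising from jointly generic $(a,b,c,d)$ qualifies as a ``generically chosen planted subspace with parameters $s = R$'' in the formal sense of Section~\ref{Section:Setting+MainResults}, so that Corollary~\ref{Cor:JLVSuccess} is actually applicable. This is essentially automatic: the relevant genericity is a non-identically-zero polynomial condition on $(a,b)$, and any such condition lifts trivially to a non-identically-zero polynomial condition on $(a,b,c,d)$. Since the intersection of finitely many generic conditions (linear independence of $\{c^i \otimes d^i\}$, linear independence of $\{a^i \otimes b^i\}$, success of JLV on $\U$, and unique solvability of the final linear system) is again generic, the whole pipeline runs on a generic input as required.
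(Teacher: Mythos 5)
Your proof is correct and follows essentially the same route the paper takes (the paper only sketches the argument by reference to the order-3 case): flatten along modes $\{1,2\}$, use generic linear independence of the $\{c^i\otimes d^i\}$ to identify $\operatorname{colspan}(T)$ with the planted subspace, apply Corollary~\ref{Cor:JLVSuccess} with $s=R$, and recover the remaining factors by a linear solve plus rank-1 factorizations. The only point worth flagging is that the condition actually needed for linear independence of the $\{c^i\otimes d^i\}$ is $R\le n_3 n_4$, which is what you correctly invoke, rather than the $n_2 n_3$ appearing in the corollary statement (apparently a typo, cf.\ the paper's later remark giving the conjectural improvement $R\le\min\{R_{\mathrm{max}}(n_1,n_2),\,n_3 n_4\}$).
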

For comparison, in the case $n_1=n_2=n_3=n_4=n$, our bound on $R$ is roughly twice the current bound of $R \le \frac{1}{4} (n-1)^2$, given by Corollary~28 of~\cite{JLV}. The proof of Corollary~\ref{Cor:FourthOrderDecomposition} is almost identical to that of Corollary \ref{Cor:ThirdOrderDecomposition}, where one flattens the tensor along the first two modes and uses the columns as a basis for the subspace $\U$.

If Conjecture~\ref{Conjecture:SharpBoundary} holds then the bounds above improve to $R\leq \{R_\text{max}(n_1,n_2), n_3\}$ for Corollary~\ref{Cor:ThirdOrderDecomposition} and $R\leq \{R_\text{max}(n_1,n_2), n_3n_4\}$ for Corollary~\ref{Cor:FourthOrderDecomposition}.

Our conjecture for the case of symmetric matrices also has implications for symmetric order-4 tensor decomposition.
\begin{corollary}
    \label{Cor:SymmetricFourthOrderDecomposition}
    Assuming Conjecture~\ref{Conjecture:SharpBoundarySymmetric}, the JLV algorithm can be used to decompose, in polynomial time, a generic symmetric tensor of rank $R$ in $\R^{n\times n\times n\times n}$ when
    \[
        R\leq \sqrt{\frac16n^2(n^2-1) +\frac14} + \frac12.
    \]
\end{corollary}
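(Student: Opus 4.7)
The plan is to mimic the reduction used for Corollary~\ref{Cor:FourthOrderDecomposition}, but carried out entirely inside the symmetric category so that Conjecture~\ref{Conjecture:SharpBoundarySymmetric} becomes the applicable ingredient instead of Theorem~\ref{Thm:GoalHolds}. Given $\mathcal{T} = \sum_{i=1}^R (a^i)^{\otimes 4} \in \R^{n \times n \times n \times n}$ with generic $a^i \in \R^n$, I would flatten $\mathcal{T}$ along the first two modes versus the last two to obtain a matrix $T \in \R^{n^2 \times n^2}$ defined by $T_{(ij),(k\ell)} = \mathcal{T}_{ijk\ell}$. Setting $v^i = a^i(a^i)^\top \in \Sym_n(\R)$ (identified with its vectorization in $\R^{n^2}$), this gives $T = \sum_{i=1}^R v^i (v^i)^\top$, so $\mathrm{colspan}(T) \subseteq \Span\{v^1, \ldots, v^R\} =: \U$. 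Because the $a^i$ are generic, the symmetric analogue of Lemma~\ref{lem:lin-indep} ensures that $v^1, \ldots, v^R$ are linearly independent in $\Sym_n(\R)$, so in fact $\mathrm{colspan}(T) = \U$, and $\U$ is precisely a generically chosen planted subspace over the symmetric matrices with parameters $s = R$ in the sense of Section~\ref{sec:GenToSymmMatrix}.

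Next, I would invoke Conjecture~\ref{Conjecture:SharpBoundarySymmetric} with $s = R$: its hypothesis $\frac{1}{12}(n+1)n^2(n-1) \ge \binom{R+1}{2} - R = \binom{R}{2}$ rearranges to $R^2 - R \le \frac{1}{6} n^2(n^2 - 1)$, i.e.\ $R \le \frac{1}{2} + \sqrt{\frac{1}{4} + \frac{1}{6} n^2(n^2 - 1)}$, matching the stated bound exactly. Under the conjecture, the symmetric version of the JLV algorithm extracts the set $\{v^1, \ldots, v^R\}$ up to scalars and permutation in polynomial time. Each recovered $\hat{v}^i = \alpha_i a^i (a^i)^\top$ is then factored via a rank-1 eigendecomposition (of $\hat{v}^i$ or $-\hat{v}^i$, whichever is PSD) to obtain $a^i$ up to a non-zero real scalar, and these overall scalars are finally pinned down by solving the linear system $\sum_i \gamma_i (\hat{a}^i)^{\otimes 4} = \mathcal{T}$, which has a unique solution because $(\hat{a}^1)^{\otimes 4}, \ldots, (\hat{a}^R)^{\otimes 4}$ are generically linearly independent in $\R^{n^4}$.

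There is no genuine obstacle beyond the truth of Conjecture~\ref{Conjecture:SharpBoundarySymmetric} itself: once we are willing to assume it, every other step above is routine linear algebra running in polynomial time. The one mildly delicate point is the sign/scale ambiguity in going from a recovered symmetric rank-1 matrix back to its factor, but this is harmless because $(a^i)^{\otimes 4}$ is invariant under $a^i \mapsto -a^i$ and the remaining positive scale factor is fixed by the final linear system above.
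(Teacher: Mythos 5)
Your proposal is correct and follows essentially the same route as the paper: flatten the symmetric order-4 tensor into a subspace of $\Sym_n(\R)$ spanned by the symmetric rank-1 matrices $a^i(a^i)^\top$, apply Conjecture~\ref{Conjecture:SharpBoundarySymmetric} with $s=R$ (whose condition rearranges exactly to the stated bound), and then recover the components; the paper compresses these steps into an appeal to Proposition~25 of \cite{JLV}. Your write-up simply spells out the routine extraction and scaling details that the paper leaves implicit.
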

\begin{proof}
    This follows by similar steps as those presented in Corollary \ref{Cor:ThirdOrderDecomposition}.
    If the equality in \eqref{eq:goal-vee} holds under Conjecture~\ref{Conjecture:SharpBoundarySymmetric} then Proposition~25 in~\cite{JLV} now holds for a tensor with generic components when rank is $R \leq R^\vee_\mathrm{max}(m)$.
\end{proof}
For comparison, the current bound given by Corollary 30 in \cite{JLV} is $R \leq \frac16 n(n-1)$.
We have proven that our improved bound, which is $R\sim \frac1{\sqrt{6}}n^2$, holds for generic tensors up to $n=21$ with the values of $R$ for each $n$ given in Table~\ref{table:symmetricProofAndNumericalResults}.

\addcontentsline{toc}{section}{References}
\bibliography{main}

\newcommand{\etalchar}[1]{$^{#1}$}
\begin{thebibliography}{BVD{\etalchar{+}}18}

\bibitem[BESV24]{bhaskara2024newtoolssmoothedanalysis}
Aditya Bhaskara, Eric Evert, Vaidehi Srinivas, and Aravindan Vijayaraghavan.
\newblock New tools for smoothed analysis: Least singular value bounds for random matrices with dependent entries.
\newblock In {\em Proceedings of the 56th Annual ACM Symposium on Theory of Computing}, pages 375--386, 2024.

\bibitem[BFS97]{Buss}
Jonathan~F. Buss, Gudmund~S. Frandsen, and Jeffrey~O. Shallit.
\newblock The computational complexity of some problems of linear algebra.
\newblock In R{\"u}diger Reischuk and Michel Morvan, editors, {\em STACS 97}, pages 451--462. Springer Berlin Heidelberg, 1997.

\bibitem[BVD{\etalchar{+}}18]{CP_Decomp2018}
Martijn Bouss{\'e}, Nico Vervliet, Ignat Domanov, Otto Debals, and Lieven De~Lathauwer.
\newblock Linear systems with a canonical polyadic decomposition constrained solution: Algorithms and applications.
\newblock {\em Numerical Linear Algebra with Applications}, 25(6):e2190, 2018.

\bibitem[DDL17]{cpd-3rd}
Ignat Domanov and Lieven De~Lathauwer.
\newblock Canonical polyadic decomposition of third-order tensors: Relaxed uniqueness conditions and algebraic algorithm.
\newblock {\em Linear Algebra and its Applications}, 513:342--375, 2017.

\bibitem[DJL24]{DJL24}
Harm Derksen, Nathaniel Johnston, and Benjamin Lovitz.
\newblock X-arability of mixed quantum states.
\newblock {\em arXiv preprint arXiv:2409.18948}, 2024.

\bibitem[DL06]{foobi-link}
Lieven De~Lathauwer.
\newblock A link between the canonical decomposition in multilinear algebra and simultaneous matrix diagonalization.
\newblock {\em SIAM journal on Matrix Analysis and Applications}, 28(3):642--666, 2006.

\bibitem[DLCC07]{foobi}
Lieven De~Lathauwer, Josphine Castaing, and Jean-Franois Cardoso.
\newblock Fourth-order cumulant-based blind identification of underdetermined mixtures.
\newblock {\em IEEE Transactions on Signal Processing}, 55(6):2965--2973, 2007.

\bibitem[FOV99]{joins-book}
Hubert Flenner, Liam O’Carroll, and Wolfgang Vogel.
\newblock {\em Joins and intersections}.
\newblock Springer, 1999.

\bibitem[Har92]{harris-book}
Joe Harris.
\newblock {\em Algebraic geometry: a first course}, volume 133.
\newblock Springer Science \& Business Media, 1992.

\bibitem[HSS19]{HSS-spectral}
Samuel~B Hopkins, Tselil Schramm, and Jonathan Shi.
\newblock A robust spectral algorithm for overcomplete tensor decomposition.
\newblock In {\em Conference on Learning Theory}, pages 1683--1722. PMLR, 2019.

\bibitem[JLV23]{JLV}
Nathaniel Johnston, Benjamin Lovitz, and Aravindan Vijayaraghavan.
\newblock Computing linear sections of varieties: quantum entanglement, tensor decompositions and beyond.
\newblock In {\em 64th Annual Symposium on Foundations of Computer Science (FOCS)}, pages 1316--1336. IEEE, 2023.

\bibitem[KB09]{Kolda2009}
Tamara~G. Kolda and Brett~W. Bader.
\newblock Tensor decompositions and applications.
\newblock {\em SIAM Review}, 51(3):455--500, 2009.

\bibitem[KMW24]{KMW24OvercompleteTDKoszulYoung}
Pravesh~K Kothari, Ankur Moitra, and Alexander~S Wein.
\newblock Overcomplete tensor decomposition via {Koszul-Young} flattenings.
\newblock {\em arXiv preprint arXiv:2411.14344}, 2024.

\bibitem[Koi25]{koiran}
P.~Koiran.
\newblock An efficient uniqueness theorem for overcomplete tensor decomposition.
\newblock In {\em Proceedings of the 2025 Annual ACM-SIAM Symposium on Discrete Algorithms (SODA)}, pages 1909--1932, 2025.

\bibitem[LJ22]{entangled}
Benjamin Lovitz and Nathaniel Johnston.
\newblock Entangled subspaces and generic local state discrimination with pre-shared entanglement.
\newblock {\em Quantum}, 6:760, 2022.

\bibitem[LRA93]{LRA-3-way}
Sue~E Leurgans, Robert~T Ross, and Rebecca~B Abel.
\newblock A decomposition for three-way arrays.
\newblock {\em SIAM Journal on Matrix Analysis and Applications}, 14(4):1064--1083, 1993.

\bibitem[Moi18]{Moitra_2018}
Ankur Moitra.
\newblock {\em Algorithmic Aspects of Machine Learning}.
\newblock Cambridge University Press, 2018.

\bibitem[MSS16]{MSS}
Tengyu Ma, Jonathan Shi, and David Steurer.
\newblock Polynomial-time tensor decompositions with sum-of-squares.
\newblock In {\em 2016 IEEE 57th Annual Symposium on Foundations of Computer Science (FOCS)}, pages 438--446. IEEE, 2016.

\bibitem[Per18]{persu-thesis}
Elena-M{\u{a}}d{\u{a}}lina Persu.
\newblock {\em Tensors, sparse problems and conditional hardness}.
\newblock PhD thesis, Massachusetts Institute of Technology, 2018.

\bibitem[RSG17]{rabanser2017introductiontensordecompositionsapplications}
Stephan Rabanser, Oleksandr Shchur, and Stephan G{\"u}nnemann.
\newblock Introduction to tensor decompositions and their applications in machine learning.
\newblock {\em arXiv preprint arXiv:1711.10781}, 2017.

\bibitem[TV22]{normal-form}
Simon Telen and Nick Vannieuwenhoven.
\newblock A normal form algorithm for tensor rank decomposition.
\newblock {\em ACM Transactions on Mathematical Software}, 48(4):1--35, 2022.

\bibitem[Vij20]{aravindan-chapter}
Aravindan Vijayaraghavan.
\newblock Efficient tensor decomposition.
\newblock {\em arXiv preprint arXiv:2007.15589}, 2020.

\end{thebibliography}
\bibliographystyle{alpha}

\end{document}